\newcommand{\supp}{\operatorname{supp}}
\newcommand{\KMS}{\operatorname{KMS}}
\newcommand{\kms}{\operatorname{kms}}
\newcommand{\defterm}[1]{\textbf{#1}}
\definecolor{NiColor}{RGB}{77,77,255}
\newtheoremstyle{TheoremStyle}
{3pt}
{3pt}
{}
{}
{\bf}
{:}
{.5em}
{}
\theoremstyle{TheoremStyle}
\newtheorem{theorem}{Theorem}
\newtheorem{corollary}[theorem]{Corollary}
\newtheorem{proposition}[theorem]{Proposition}
\newtheorem{lemma}[theorem]{Lemma}
\newtheorem{definition}[theorem]{Definition}
\newtheorem{remark}[theorem]{Remark}
\newtheorem{example}[theorem]{Example} 
\title{Classical KMS Functionals and Phase Transitions in Poisson Geometry}
\author{\textbf{Nicolò Drago}\thanks{\texttt{nicolo.drago@unitn.it}},\\[0.2cm]
  Universit\`a di Trento \\
  Dipartimento di Matematica \\
  Via Sommarive 14 \\
  I-38123 Povo \\
  Italy
  \\[0.5cm]
  \textbf{and}
  \\[0.5cm]
  \textbf{Stefan Waldmann}\thanks{\texttt{stefan.waldmann@uni-wuerzburg.de}},\\[0.2cm]
  Julius Maximilian University of Würzburg \\
  Department of Mathematics \\
  Chair of Mathematics X (Mathematical Physics) \\
  Emil-Fischer-Straße 31 \\
  97074 Würzburg \\
  Germany
}
\date{July 2021}
\begin{document}
\maketitle
\begin{abstract}
    In this paper we study the convex cone of not necessarily smooth
    measures satisfying the classical KMS condition within the context
    of Poisson geometry.  We discuss the general properties of KMS
    measures and their relation with the underlying Poisson geometry in
    analogy to Weinstein's seminal work in the smooth case.  Moreover,
    by generalizing results from the symplectic case, we focus on the
    case of $b$-Poisson manifolds, where we provide an almost complete
    characterization of the convex cone of KMS measures.
\end{abstract}

\tableofcontents

%
%

\section{Introduction}

The quantum KMS condition has been introduced in the context of
quantum statistical mechanics to describe states which are in thermal
equilibrium at a given temperature \cite{Kubo-57, MS59}.  Its
connection with $C^*$-dynamical systems has been studied extensively
\cite{Bratteli-Robinson-87,Bratteli-Robinson-97} and nowadays it is an
established milestone in the Tomita-Takesaki theory.

The classical version of the KMS condition has been introduced in
\cite{Katz-67} and studied in the context of infinite-dimensional
classical systems \cite{Aizemann-Gallavotti-Goldstein-Lebowitz-76,
  Aizenman-Goldstein-Gruber-Lebowitz-Martin-77,
  Gallavotti-Pulvirenti-75, Gallavotti-Pulvirenti-76}.  Abstractly,
this condition can be naturally formulated within the context of
Poisson geometry.  Given a finite-dimensional Poisson manifold
$(M,\Pi)$ together with a vector field $X\in\Gamma(TM)$, a linear,
positive functional
$\varphi\colon C^\infty_{\mathrm{c}}(M)\to\mathbb{C}$ is called
$(X,\beta)$-KMS for $\beta>0$ if
\begin{align*}
    \varphi(\{f,g\})
    =
    \beta\varphi(X(f)g)
    \qquad
    \forall f,g\in C^\infty_{\mathrm{c}}(M)\,.
\end{align*}
From the point of view of Poisson geometry, KMS functionals generalize
$\Pi$-traces, which are included when $X=0$.  The link between KMS
functionals and the underlying geometry of $(M,\Pi)$ has been
described in \cite{Weinstein-97}.  Indeed, when $\varphi$ is induced
by a smooth density ---that is, $\varphi(f) = \int_M f\mu$ with $\mu$
being a strictly positive density on $M$--- the vector field $X$ is
forced to be Poisson and its cohomology class defines a Poisson
invariant known as the modular class.

The classical KMS condition has also been investigated in the
symplectic case by \cite{Basart-Flato-Lichnerowicz-Sternheimer-84,
  Basart-Lichnewicz-85} where the KMS condition for an Hamiltonian
vector field $X=\{\cdot,h\}$ has been interpreted from the point of
view of conformal symplectic geometry.  These results have been
corroborated in \cite{Bordermann-Hartmann-Romer-Waldmann-98}, where
classical KMS functionals have been completely classified in the
symplectic framework for all possible choices of Poisson vector fields $X$. While being
classically of interest already, in that work also the quantum
situation has been clarified yielding a complete classification of KMS
states for formal star products on symplectic manifolds.

In this paper we wish to push ahead these investigations.  Our main
goal is to show that, varying $X$, the set of KMS functionals
$\KMS(X,\beta)$ captures information about the
symplectic foliation of $M$.  This is clear for functionals $\varphi$
induced by smooth densities ---where $X$ is forced to lie in the
modular class--- however, more information is achievable when
considering more general functionals.  The positivity requirement on
$\varphi$ plays a major r\^ole, as it implies that $\varphi$ is a
distribution of order $0$, \textit{i.e.} a probability measure, thus enforcing a minimal requirement of
continuity.

Generally speaking the set $\KMS(X,\beta$) is a convex cone
which heavily depends on the choice of $X$ and $\beta$.  Per
definition, a \defterm{phase transition} occurs whenever different
choices for $X, \beta$ produce non-isomorphic cones.  For example, in
\cite{Bordermann-Hartmann-Romer-Waldmann-98} it is proved that
$\KMS(X,\beta$) is non-empty if and only if $X$ is
Hamiltonian, in which case the cone of KMS functionals is isomorphic
to the one of $\Pi$-traces (which can be shown to be a singleton up to
multiplicative constant). We will recall the precise result later in
Remark~\ref{Rmk: KMS for symplectic manifolds}.  In the general case
we expect phase transitions to reflect the geometry of the underlying
Poisson structure.

Given the large variety of Poisson manifolds, a general description of
$\KMS(X,\beta$) for all $(M, \Pi)$ is probably too ambitious.
In this paper we shall consider the particular class of $b$-Poisson
manifolds \cite{Nest-Tsygan-96}.  These are even dimensional Poisson
manifolds whose associated Poisson tensor $\Pi$ is, roughly speaking,
singular on a codimension $1$ submanifold.  These structures are
degenerate in a very controlled way and they recently attracted a lot
of interest \cite{Braddell-Kiesenhofer-Miranda-18,
  Braddell-Kiesenhofer-Miranda-20, Cavalcanti-17,
  Guillemin-Miranda-Pires-14, Guillemin-Miranda-Pires-Scott-17,
  Guillemin-Miranda-Weitsman-19, Marcut-Torres-14a, Marcut-Torres-14b,
  Miranda-Oms-Salas-20}.

Our main result is Theorem~\ref{Thm: KMS functional space for b-Poisson manifolds}, which provides an almost complete classification of $\KMS(X,\beta$) for all choices of a Poisson vector field $X$ and $\beta>0$.
A similar result concerning $\Pi$-traces is not available, however, we prove a uniqueness result for $\Pi$-traces which are invariant under the modular class in Theorem~\ref{Thm: KMS functional space for cosymplectic manifolds}.
The classification of KMS functionals for $b$-Poisson manifolds profits of an analogous classification obtained for cosymplectic manifolds ---\textit{cf.} Theorem \ref{Thm: KMS functional space for cosymplectic manifolds}.
Within the latter case the classification is complete for compact cosymplectic manifolds with one (hence all) proper leaf.

These results corroborate the geometrical intuition that
$\KMS(X,\beta$) provides information about the geometry of
$(M,\Pi)$.  This correspondence is not completely settled, therefore,
in Section~\ref{Sec: examples} we provide several non-trivial examples
where $\KMS(X,\beta$) can be computed explicitly.  This
allows us to build our intuition for future investigations.

The paper is organized as follows: In Section~\ref{Sec: Main
  properties of classical KMS functionals} we introduce and discuss
the main properties of KMS functionals with a particular focus on
symplectic and cosymplectic manifolds.  Section~\ref{Sec: KMS convex
  cone for b-Poisson manifolds} is devoted to the study of the convex
cone of KMS functionals for the case of a $b$-Poisson manifold.
Section~\ref{Sec: examples} presents two non-trivial examples of
regular Poisson manifolds for which the set of KMS functionals can be
computed explicitly.

%
%

\section{Main Properties of Classical KMS Functionals}
\label{Sec: Main properties of classical KMS functionals}

Let $(M,\Pi)$ be a Poisson manifold of dimension $\dim M=m$.  For the
sake of simplicity we shall always assume that $M$ is orientable,
however, we point out that this is a priori not necessary for stating
Definition \ref{Def: classical KMS functional}.  In what follows $X_g$
will denote the Hamiltonian vector field associated with
$g\in C^\infty(M)$, determined by $X_g(f)=\{f,g\}$ for all
$f\in C^\infty(M)$.  We shall denote by
$\sharp\colon\Gamma(T^*M)\to\Gamma(TM)$ the anchor map
$\alpha^\sharp(f):=\Pi(\mathrm{d}f,\alpha)$.

For a relatively compact subset $U\subseteq M$ we denote by $1_U$ a
smooth compactly supported function $1_U\in C^\infty_{\mathrm{c}}(M)$
such that $1_U|_U=1$.  Similarly for $f\in C^\infty_{\mathrm{c}}(M)$
we set $1_f:=1_{\supp(f)}$.  For a smooth vector field
$X\in\Gamma(TM)$ we shall denote by $\Psi^X$ the associated flow.
\begin{definition}
    \label{Def: classical KMS functional}%
    Let $X\in\Gamma(TM)$ and $\beta>0$.  A linear positive functional
    $\varphi\colon C^\infty_{\mathrm{c}}(M)\to\mathbb{C}$ is called
    \defterm{$(X, \beta)$-KMS functional} if
    \begin{align}
        \label{Eq: classical KMS condition}
        \varphi(\{f,g\})
        =
        \beta\varphi(gX(f))\,,
        \qquad
        \forall f,g\in C^\infty_{\mathrm{c}}(M)\,.
    \end{align}
    The set of $(X, \beta)$-KMS functionals will be denoted by
    $\KMS(X,\beta)$.
\end{definition}
\begin{remark}
    \label{Rmk: on definition of KMS functional}%
    According to Definition \ref{Def: classical KMS functional}, a
    classical KMS functional is not necessarily normalized, therefore
    $\KMS(X,\beta)$ is a convex cone.  Positivity of $\varphi$
    entails that $\varphi$ is a distribution of order $0$, that is,
    for all compact $K\subseteq M$ there exists $C_K>0$ such that
    \begin{align*}
        |\varphi(f)|\leq
        C_K\sup_K|f|\,,\qquad
        \forall f\in C^\infty_{\mathrm{c}}(K)\,.
    \end{align*}
    In particular $\varphi$ admits a unique extension
    $\mu_\varphi\in\mathcal{M}_+(M)$ to a positive measure over
    $M$.
\end{remark}

In the limit case $X=0$ Equation~\eqref{Eq: classical KMS condition}
identifies positive \defterm{Poisson traces} which we shall denote by
$\KMS(M,\Pi)$.
Moreover, for all $\beta> 0$ we have
$\KMS(X,\beta)=\KMS(\beta X,1)$ while, for all
$\beta,\beta'> 0$ it holds
$\varphi\in\KMS(X,\beta) \cap \KMS(X,\beta')$ if and only
if $\varphi\in\KMS(M,\Pi)$.  This may happen whenever
$\supp(\varphi)\cap\supp(X) = \emptyset$ as shown by the following
example.
\begin{example}
    Let $M = \mathbb{R}\times\mathbb{T}$ with Poisson structure
    $\Pi = x\partial_x\wedge\partial_\theta$ and let $X:=x\partial_x$.
    Then any $\varphi\in\KMS(M,\Pi)$ such that
    $\supp(\varphi)\subseteq\{x=0\}$ defines an element
    $\varphi \in \KMS(x\partial_x,\beta)$ for all $\beta>0$.  In
    fact, $\KMS(x\partial_x,\beta)$ is exactly made by these
    functionals as a consequence of Theorem \ref{Thm: KMS functionals
      for vartheta Hamiltonian in Mz}.
\end{example}
\begin{remark}[Symplectic manifolds]\label{Rmk: KMS for symplectic manifolds}
    For a symplectic manifold the convex cone $\KMS(X,\beta)$
    have been classified for all Poisson vector field $X$,
    \textit{cf.} \cite[Thm. 8]{Bordermann-Hartmann-Romer-Waldmann-98}.
    In particular if $(M,\Pi)$ is a connected symplectic manifold and
    $[X]\in H_\Pi^1(M)$ denotes the Poisson cohomology class of $X$,
    then
    \begin{align}
        \label{eq:KMSSymplecticCase}
        \KMS(X,\beta)
        \simeq
        \begin{cases}
            [0,+\infty)
            & \textrm{for } [X]=[0]
            \\
            \{0\}
            & \textrm{for } [X]\neq[0]\,,
        \end{cases}
    \end{align}
    where $\simeq$ denotes isomorphism of convex cone.  In particular
    if $X=X_h$ we have
    \begin{align}
        \label{Eq: Gibbs functionals}
        \KMS(X_h,\beta)
        =\left\lbrace
            cI_{\mu_\beta}
            \; \Big|\;
            \mu_\beta=e^{-\beta h}\mu_0\,,\; c>0
        \right\rbrace\,,\qquad
        I_\mu(f):=\int_M f\mu\,,
    \end{align}
	where $\mu_0$ is the symplectic volume form of $M$.
\end{remark}

%
%

\subsection{Invariance and Support Properties}

The following proposition collects useful properties of KMS
functionals.  In particular we prove Equation~\eqref{Eq: global
  classical KMS condition}, which is a global counterpart of the
infinitesimal classical KMS condition \eqref{Eq: classical KMS
  condition} -- \textit{cf.}
\cite{Aizemann-Gallavotti-Goldstein-Lebowitz-76} for a different
though similar in spirit dynamical condition.  Such condition
generalizes the invariance of $\Pi$-traces under Hamiltonian flows and
implies in particular a support constrain on symplectic leaves.
\begin{proposition}
    \label{Prop: properties of KMS functionals}%
    Let $\varphi\in\KMS(X,\beta)$ as per Definition~\ref{Def:
      classical KMS functional}.  Then:
    \begin{enumerate}
    \item\label{Item: KMS X-invariance} $\varphi$ is invariant under
        the flow of $X$, that is
        \begin{align}
            \label{Eq: KMS X-invariance}
            \varphi\circ\Psi^X=\varphi\,.
        \end{align}
    \item\label{Item: global classical KMS condition} For all complete
        Hamiltonian vector fields $X_g$ it holds
        \begin{align}
            \label{Eq: global classical KMS condition}
            \varphi(f)
            =
            \varphi\bigg(
            \exp\bigg[
            \beta\int_0^t[\Psi_s^g]^*X(g)\mathrm{d}s
            \bigg](\Psi_t^g)^*f
            \bigg)\,,
            \qquad
            \forall t\in\mathbb{R}\,.
        \end{align}
    \item\label{Item: KMS support properties} If $M$ is compact then
        $x \in \supp(\varphi)$ implies $L_x \subseteq \supp(\varphi)$,
        where $L_x$ is the symplectic leaf through $x$.  In particular
        \begin{align*}
            \supp(\varphi)=\bigcup_{x\in\supp(\varphi)}L_x\,.
        \end{align*}
    \end{enumerate}
\end{proposition}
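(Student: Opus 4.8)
The plan is to derive all three items from the infinitesimal KMS condition~\eqref{Eq: classical KMS condition} by a standard "differentiate the flow" argument, extended to the positive measure $\mu_\varphi$ furnished by Remark~\ref{Rmk: on definition of KMS functional}.

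For item~\eqref{Item: KMS X-invariance}, I would fix $g\in C^\infty_{\mathrm c}(M)$ and consider the curve $t\mapsto \varphi((\Psi_t^X)^*g)$. Differentiating at $t=0$ gives $\frac{\mathrm d}{\mathrm dt}\big|_{0}\varphi((\Psi_t^X)^*g)=\varphi(X(g))$, and since $\varphi$ is a distribution of order $0$ one may differentiate under $\varphi$ at any time $t$ to get $\frac{\mathrm d}{\mathrm dt}\varphi((\Psi_t^X)^*g)=\varphi(X((\Psi_t^X)^*g))$. The point is now that $\varphi(X(h))=0$ for every $h\in C^\infty_{\mathrm c}(M)$: applying the KMS condition with the roles of $f,g$ chosen so that $gX(f)$ realizes $X(h)$ is not quite immediate, so instead I would argue that $\varphi(X(h))=\beta^{-1}\varphi(\{1_h,h\})$ — using a cutoff $1_h$ that is constant on $\supp h$, whence $X(1_h)h=X(h)$ off a set where $h$ vanishes, more precisely $\{1_h\cdot\text{something}\}$... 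The cleanest route: by the Leibniz rule $X(fg)=X(f)g+fX(g)$ and linearity, $\varphi(\{f,g\}+\{g,f\})=0$ forces $\varphi(\beta gX(f)+\beta fX(g))=\varphi(\beta X(fg))$, so $\varphi(X(fg))=\beta^{-1}\varphi(\{f,g\}+\{g,f\})=0$ since the Poisson bracket is antisymmetric. As products $fg$ with $f,g\in C^\infty_{\mathrm c}(M)$ span a dense subspace (indeed every compactly supported function equals $f\cdot 1_f$), continuity of order $0$ gives $\varphi(X(h))=0$ for all $h\in C^\infty_{\mathrm c}(M)$. Hence $t\mapsto\varphi((\Psi_t^X)^*g)$ has vanishing derivative and is constant, which is~\eqref{Eq: KMS X-invariance}.

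For item~\eqref{Item: global classical KMS condition}, I would run the same idea along the Hamiltonian flow $\Psi_t^g$, which is complete by hypothesis. Define
\begin{align*}
F(t):=\varphi\!\left(\exp\!\left[\beta\int_0^t[\Psi_s^g]^*X(g)\,\mathrm ds\right](\Psi_t^g)^*f\right).
\end{align*}
Differentiating in $t$ and using that $\frac{\mathrm d}{\mathrm dt}(\Psi_t^g)^*f=(\Psi_t^g)^*\{f,g\}=X_g((\Psi_t^g)^*f)$ together with the product rule for the exponential prefactor, the integrand collapses to $\varphi$ applied to $\beta\,[\Psi_t^g]^*X(g)\cdot h_t+\{h_t,g\}$ with $h_t$ the full argument of $\varphi$; but $[\Psi_t^g]^*X(g)\cdot h_t = X(G_t)\cdot h_t$-type terms reorganize via the KMS condition~\eqref{Eq: classical KMS condition} applied to $f\rightsquigarrow G_t$, $g\rightsquigarrow h_t\cdots$ — concretely one checks $\frac{\mathrm d}{\mathrm dt}F(t)=\beta\varphi(h_t\,X(g))-\beta\varphi(h_t\,X(g))=0$ after rewriting $\varphi(\{h_t,g\})=\beta\varphi(h_t X(g))$. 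Wait — I need $\varphi(\{h_t,g\})$, and $g$ need not be compactly supported after composing with the flow; here one uses that $h_t$ is compactly supported and replaces $g$ by $1_{h_t}g$, which does not change $\{h_t,g\}$ nor $X(g)$ on $\supp h_t$. So $F$ is constant, and $F(0)=\varphi(f)$ yields~\eqref{Eq: global classical KMS condition}.

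For item~\eqref{Item: KMS support properties}, I would use that on a compact $M$ every Hamiltonian vector field $X_g$ is complete, so~\eqref{Eq: global classical KMS condition} applies to all $g\in C^\infty(M)$. Let $x\in\supp(\mu_\varphi)$ and let $y=\Psi_t^g(x)$ be a point on the same symplectic leaf reached by a Hamiltonian flow. If $y\notin\supp(\mu_\varphi)$, choose $f\ge 0$ supported in a small neighbourhood of $y$ disjoint from $\supp(\mu_\varphi)$ with $f(y)>0$; then the right-hand side of~\eqref{Eq: global classical KMS condition} vanishes because the pushed-forward test function $\exp[\cdots](\Psi_t^g)^*f$ is supported near $x$ — no wait, it is supported near $(\Psi_t^g)^{-1}(\operatorname{supp}f)$, which contains $x$, so this does \emph{not} immediately vanish. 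I should instead run it the other way: pick $f\ge 0$ with $f(x)>0$ supported near $x$; then $\varphi(f)>0$, while the right-hand side is $\varphi$ of a function supported near $\Psi_t^g(\operatorname{supp}f)\ni y$, with a strictly positive exponential weight, forcing $\mu_\varphi$ to charge every neighbourhood of $y$, i.e.\ $y\in\supp(\mu_\varphi)$. Since the symplectic leaf $L_x$ through $x$ is generated by such Hamiltonian flows (the leaf is the orbit of the Hamiltonian pseudogroup), a connectedness/covering argument gives $L_x\subseteq\supp(\mu_\varphi)$, and the displayed equality follows since $\supp(\mu_\varphi)$ is closed and contains each such leaf. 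The main obstacle I anticipate is the bookkeeping in item~\eqref{Item: global classical KMS condition} — correctly differentiating the time-ordered-exponential prefactor and legitimizing differentiation under $\varphi$ (which is fine since $\varphi$ has order $0$ and the $t$-dependence is smooth with locally uniformly controlled supports, the latter because $f$ is compactly supported and $\Psi^g$ is a flow) — together with the recurring need to insert cutoffs $1_{h}$ to keep all arguments of $\varphi$ compactly supported without altering the relevant brackets or derivatives.
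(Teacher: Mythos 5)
Your proposal is correct and follows essentially the same route as the paper: a cutoff $1_h$ yields $\varphi\circ X=0$, whence $\varphi\circ(X_g+\beta X(g))=0$, which integrated along the Hamiltonian flow gives the global condition, and the support statement follows by transporting positive test functions along the Hamiltonian flows that generate the leaf. The only bookkeeping slip is the sign in item 2: the KMS condition gives $\varphi(\{h_t,g\})=\beta\varphi(gX(h_t))$, which equals $-\beta\varphi(h_t X(g))$ only after invoking $\varphi(X(g h_t))=0$ from item 1 --- exactly the cancellation your final display uses.
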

\begin{proof}
	\noindent
    \begin{description}
    \item[\ref{Item: KMS X-invariance}] For all
        $f\in C^\infty_{\mathrm{c}}(M)$ let
        $1_f\in C^\infty_{\mathrm{c}}(M)$ be such that
        $1_f|_{\supp(f)}=1$.  A direct computation leads to
        \begin{align*}
            \beta\varphi(X(f))
            =\beta\varphi(1_fX(f))
            \stackrel{\eqref{Eq: classical KMS condition}}{=}
            \varphi(\{1_f,f\})
            =0\,.
        \end{align*}
        If follow that $\varphi\circ X=0$ which entails the invariance
        property $\varphi\circ\Psi^X=\varphi$.

    \item[\ref{Item: global classical KMS condition}]
        For all $f\in C^\infty_{\mathrm{c}}(M)$ and $g\in C^\infty(M)$ a direct computation
        leads to
        \begin{align*}
            \varphi(X_g(f))
            =\varphi(\{f,g\})
            =\varphi(\{f,1_fg\})
            \stackrel{\eqref{Eq: classical KMS condition}}{=}
            \beta\varphi(gX(f))
            \stackrel{\eqref{Eq: KMS X-invariance}}{=}-\beta\varphi(X(g)f)\,.
        \end{align*}
        The previous equality entails $\varphi\circ D_{g,X}=0$ where
        $D_{g,X}$ is the first order differential operator
        $D_{g,X}:=X_g+\beta X(g)$.  Integrating this equality leads to
        Equation~\eqref{Eq: global classical KMS condition}.

    \item[\ref{Item: KMS support properties}] Finally we prove that
        $L_x\subset\supp(\varphi)$ if $x\in\supp(\varphi)$.  We recall
        that for any $y\in L_x$ there exist $n\in\mathbb{N}$,
        $t_1, \ldots, t_n\in\mathbb{R}$ and
        $g_1, \ldots, g_n\in C^\infty(M)$ such that
        $y=\Psi^{g_1}_{t_1}\circ \cdots \circ \Psi^{g_n}_{t_n}(x)$.
        Therefore it is enough to show that
        $\Psi^g_{-t}(x)\in\supp(\varphi)$ if $x\in\supp(\varphi)$.

		By definition, $x \in \supp(\varphi)$ if for all open neighbourhood $U_x$ of $x$ there exists a positive function
		$f\in C^\infty_{\mathrm{c}}(U_x)$ such that $\varphi(f)>0$.
		
		Let now $U_{t,x}$ be an arbitrary open neighbourhood of $\Psi_{-t}^g(x)$.
		By possibly shrinking $U_{t,x}$ we may assume without loss of generality that $U_{t,x}=\Psi^g_{-t}(U_x)$ for an open neighbourhood $U_x$ of $x$.
		Since by assumption $x\in\operatorname{supp}(\varphi)$, there exists a positive function $f\in C^\infty_{\mathrm{c}}(U_x)$ such that $\varphi(f)>0$.
		Let now
		\begin{equation*}
			f_t
			:=
			\exp\left(
			\beta\int_0^t[\Psi^g_s]^*X(g)\mathrm{d}s
			\right)[\Psi^g_t]^*f
			\in
			C^\infty_{\mathrm{c}}(M).
		\end{equation*}
		It follows that $f_t\in C^\infty_{\mathrm{c}}(U_{t,x})$ and
		$\varphi(f_t)=\varphi(f)>0$ because of Equation~\eqref{Eq: global classical KMS condition}.
		Therefore
		$\Psi_{-t}(x)\in\supp(\varphi)$.
    \end{description}
\end{proof}
\begin{remark}
    We stress that the completeness of $X_g$ in Item \ref{Item: global
      classical KMS condition} is assumed only for the sake of
    simplicity: If $X$ is not complete Equation \eqref{Eq: global
      classical KMS condition} still holds true on small open subsets
    of $M$ provided $t$ is chosen so that $\Psi_s^g$ exists for all
    $s \in [0,t]$.  A similar comment applies concerning the
    compactness of $M$ in Item \ref{Item: KMS support properties}.
\end{remark}

%
%

\subsection{Stability under Hamiltonian Perturbation and Relation with Weinstein's Modular Class}

Definition~\ref{Def: classical KMS functional} does not require any
particular assumption on $X\in\Gamma(TM)$.  However, the forthcoming
Lemma~\ref{Lem: existence of faithful KMS implies that X is Poisson}
shows that under mild conditions the vector field $X$ has to be a
Poisson vector field.  We recall that a positive functional
$\varphi\colon C^\infty_{\mathrm{c}}(M)\to\mathbb{C}$ is called
\defterm{faithful} if, for all positive
$f\in C^\infty_{\mathrm{c}}(M)$, $\varphi(f)=0$ implies $f=0$.
\begin{lemma}
    \label{Lem: existence of faithful KMS implies that X is Poisson}%
    Let $\varphi\in\KMS(X,\beta)$ be faithful.  Then $X$ is
    Poisson.
\end{lemma}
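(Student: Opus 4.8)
The plan is to prove the equivalent statement that the bivector field $\mathcal{L}_X\Pi$ vanishes identically. As a biderivation, $\mathcal{L}_X\Pi$ is determined by its values on exact one-forms, namely $(\mathcal{L}_X\Pi)(\mathrm{d}f,\mathrm{d}g)=X(\{f,g\})-\{X(f),g\}-\{f,X(g)\}=:A_{f,g}$ for $f,g\in C^\infty_{\mathrm{c}}(M)$, and each $A_{f,g}$ is again compactly supported. Since the differentials of compactly supported functions span $T_x^*M$ at every point and $\mathcal{L}_X\Pi$ is a tensor, it suffices to show $A_{f,g}=0$ for all $f,g\in C^\infty_{\mathrm{c}}(M)$; and for this I will first establish $\varphi(hA_{f,g})=0$ for all $f,g,h\in C^\infty_{\mathrm{c}}(M)$ and then invoke faithfulness.

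The heart of the matter is to evaluate $\varphi(hX(\{f,g\}))$ in two ways. On one side, the very definition of $A_{f,g}$ gives $\varphi(hX(\{f,g\}))=\varphi(hA_{f,g})+\varphi(h\{X(f),g\})+\varphi(h\{f,X(g)\})$. On the other side, applying $\varphi$ to the Jacobi identity $\{\{f,g\},h\}=\{f,\{g,h\}\}-\{g,\{f,h\}\}$ and using the KMS condition \eqref{Eq: classical KMS condition} on each of the three brackets yields $\varphi(hX(\{f,g\}))=\varphi(X(f)\{g,h\})-\varphi(X(g)\{f,h\})$. Comparing the two, the claim $\varphi(hA_{f,g})=0$ reduces to the identity $\varphi(X(f)\{g,h\})-\varphi(X(g)\{f,h\})=\varphi(h\{X(f),g\})+\varphi(h\{f,X(g)\})$, which follows from the KMS condition alone: from $\psi\{a,b\}=\{a,\psi b\}-b\{a,\psi\}$ and \eqref{Eq: classical KMS condition} applied to $\{a,\psi b\}$ one gets $\varphi(\psi\{a,b\})=\beta\varphi(\psi b\,X(a))-\varphi(b\{a,\psi\})$; using this for $(\psi,a,b)=(X(f),g,h)$ and $(\psi,a,b)=(X(g),f,h)$ and subtracting, the two $\beta$-terms cancel because the pointwise product is commutative, leaving exactly the right-hand side. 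Hence $\varphi(hA_{f,g})=0$ for all $f,g,h\in C^\infty_{\mathrm{c}}(M)$.

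Finally faithfulness upgrades this to $A_{f,g}\equiv 0$: fixing $f,g$ and a point $x_0$, choose $h=A_{f,g}\chi^2$ with $\chi\in C^\infty_{\mathrm{c}}(M)$ and $\chi(x_0)=1$; then $\varphi(A_{f,g}^2\chi^2)=0$ with $A_{f,g}^2\chi^2\geq 0$ compactly supported, so faithfulness forces $A_{f,g}^2\chi^2\equiv 0$, and evaluating at $x_0$ gives $A_{f,g}(x_0)=0$. As $x_0$, $f$ and $g$ are arbitrary, $\mathcal{L}_X\Pi=0$. The one genuinely delicate point is the bookkeeping in the second paragraph: one must distribute the Leibniz rule and KMS condition so that no second-order terms $X^2(\cdot)$ appear and the quadratic terms $\varphi(\cdot\,X(f)X(g))$ cancel, and it is precisely the Jacobi identity that makes $\mathcal{L}_X\Pi$ visible, since pure KMS-plus-Leibniz manipulations are insensitive to it. (One could also record $\varphi\circ X=0$ from Proposition~\ref{Prop: properties of KMS functionals}\,\ref{Item: KMS X-invariance} to streamline intermediate steps, but it is not strictly necessary.)
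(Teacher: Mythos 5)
Your proof is correct and follows essentially the same route as the paper's: both exploit the Jacobi identity together with repeated use of the KMS condition and the Leibniz rule to show that $\varphi$ annihilates $h\cdot\big(X(\{f,g\})-\{X(f),g\}-\{f,X(g)\}\big)$ for all test functions, and then invoke faithfulness. Your final step (testing against $h=A_{f,g}\chi^2$ to get a nonnegative integrand) is in fact a slightly more explicit version of the paper's appeal to "the arbitrariness of $g$".
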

\begin{proof}
    Let $\varphi\in\KMS(X,\beta)$ and let us consider
    $f,g,h\in C^\infty_{\mathrm{c}}(M)$.  A direct inspection leads to
    \begin{align*}
        \beta \varphi(gX(\{f,h\}))
        &=
        \varphi(\{\{f,h\},g\}) \\
        &=
        -\varphi(\{\{h,g\},f\})
        -\varphi(\{\{g,f\},h\})
        \\
        &=
        -\beta \varphi(fX(\{h,g\}))
        -\beta \varphi(hX(\{g,f\}))
        \\
        &=
        \beta \varphi(X(f)\{h,g\})
        +\beta \varphi(X(h)\{g,f\})
        \\
        &=
        \beta \varphi(\{h,X(f)g\})
        -\beta \varphi(g\{h,X(f)\})
        \\
        &\quad+
        \beta \varphi(\{gX(h),f\})
        -\beta \varphi(g\{X(h),f\})\,,
    \end{align*}
    where we exploited the KMS condition \eqref{Eq: classical KMS
      condition}.  Moreover, the KMS condition also implies
    $\beta \varphi(\{h,X(f)g\})+\beta \varphi(\{gX(h),f\})=0$, so that
    we end up with
    \begin{align*}
        \varphi(g(X(\{f,h\})-\{X(f),h\}-\{f,X(h)\}))=0\,.
    \end{align*}
    If now $\varphi$ is also faithful, then the arbitrariness of
    $g\in C^\infty_{\mathrm{c}}(M)$ entails
    $X(\{f,h\})=\{X(f),h\}+\{f,X(h)\}$, that is, $X$ is Poisson.
\end{proof}

On compact symplectic manifolds, Proposition \ref{Prop: properties of
  KMS functionals} entails $\supp(\varphi)=M$ for all non-zero
$\varphi\in\KMS(X,\beta)$, no matter the choice of $X$ and
$\beta$.  If follows that $\varphi$ is faithful and therefore $X$ is
Poisson because of Lemma \ref{Lem: existence of faithful KMS implies
  that X is Poisson}.
This entails in particular that $\KMS(X,\beta)=\{0\}$ if $X$ is
not Poisson.

Choosing a Poisson vector field $X$ provides a nice interplay with the underlying Poisson structure.
In this respect Lemma \ref{Lem: existence of faithful KMS implies that X is Poisson} forces $X$ to be Poisson in the presence of a faithful KMS functional $\varphi\in\KMS(X,\beta)$.
However, we stress that in general our analysis will not be limited to faithful KMS functionals.

Lemma \ref{Lem: stability of KMS} shows that, for a Poisson vector field $X$, the set $\KMS(X,\beta)$ only depends on the Poisson cohomology class of $X$.
\begin{lemma}
    \label{Lem: stability of KMS}%
    Let $X\in\Gamma(TM)$ be a Poisson vector field.  Then for all
    $\lambda\in C^\infty(M)$ and $\beta> 0$
    \begin{equation}
        \label{eq:PerturbalteKMSByHamiltonian}
        \KMS(X,\beta)
        \ni \varphi \mapsto \varphi_\lambda
        \in
        \KMS(X-X_\lambda,\beta)\,,
        \qquad
        \varphi_\lambda(f):=\varphi(e^{\beta\lambda}f)\,,
    \end{equation}
    is an isomorphism of convex cones.  In particular if $X=X_h$ is
    Hamiltonian, where $h\in C^\infty(M)$, then for all $\beta> 0$
    \begin{align*}
        \KMS(X_h,\beta)
        \simeq
        \KMS(M,\Pi)\,.
    \end{align*}
\end{lemma}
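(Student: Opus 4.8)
The plan is to verify directly that the assignment in \eqref{eq:PerturbalteKMSByHamiltonian} is well defined, that it intertwines the two KMS conditions, and that it is invertible, so that it is automatically an isomorphism of convex cones; the ``in particular'' statement is then obtained by a single specialization.

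First I would observe that since $\lambda\in C^\infty(M)$ the function $e^{\beta\lambda}$ is smooth and strictly positive, so that $e^{\beta\lambda}f\in C^\infty_{\mathrm{c}}(M)$ whenever $f\in C^\infty_{\mathrm{c}}(M)$, and $f\geq 0$ forces $e^{\beta\lambda}f\geq 0$.  Hence $\varphi_\lambda$ is again a linear positive functional on $C^\infty_{\mathrm{c}}(M)$, and $\varphi\mapsto\varphi_\lambda$ is additive and $\mathbb{R}_{\geq0}$-homogeneous in $\varphi$.  Its inverse is visibly $\psi\mapsto\psi_{-\lambda}$, because $e^{\beta\lambda}e^{-\beta\lambda}=1$.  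Thus it suffices to show that the assignment maps $\KMS(X,\beta)$ into $\KMS(X-X_\lambda,\beta)$; applying the same statement with $\lambda$ replaced by $-\lambda$ and $X$ by $X-X_\lambda$ yields the reverse inclusion, and these two facts together give the claimed isomorphism of convex cones.

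The core is then a short computation.  Fix $\varphi\in\KMS(X,\beta)$ and $f,g\in C^\infty_{\mathrm{c}}(M)$.  By the Leibniz rule for $\{f,\cdot\}$ together with the chain rule $\{f,e^{\beta\lambda}\}=\beta e^{\beta\lambda}\{f,\lambda\}$ one has $e^{\beta\lambda}\{f,g\}=\{f,e^{\beta\lambda}g\}-\beta g e^{\beta\lambda}\{f,\lambda\}$.  Applying $\varphi$ and using the KMS condition \eqref{Eq: classical KMS condition} for $\varphi$ with the compactly supported function $e^{\beta\lambda}g$ in the second slot, together with $X_\lambda(f)=\{f,\lambda\}$, we get
\begin{align*}
    \varphi_\lambda(\{f,g\})
    =\varphi\bigl(e^{\beta\lambda}\{f,g\}\bigr)
    &=\beta\varphi\bigl(e^{\beta\lambda}g\,X(f)\bigr)-\beta\varphi\bigl(e^{\beta\lambda}g\{f,\lambda\}\bigr) \\
    &=\beta\varphi\bigl(e^{\beta\lambda}g\,(X-X_\lambda)(f)\bigr)
    =\beta\varphi_\lambda\bigl(g\,(X-X_\lambda)(f)\bigr),
\end{align*}
which is exactly the $(X-X_\lambda,\beta)$-KMS condition for $\varphi_\lambda$.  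I note that Poissonness of $X$ plays no role in this step; it is only used to guarantee that $X-X_\lambda$ is again a Poisson vector field, so that the conclusion may be read as: $\KMS(X,\beta)$ depends only on the class $[X]\in H_\Pi^1(M)$.

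Finally, the ``in particular'' assertion follows by taking $\lambda=h$, so that $X-X_\lambda=X_h-X_h=0$ and $\KMS(X_h,\beta)\simeq\KMS(0,\beta)$; and $\KMS(0,\beta)=\KMS(M,\Pi)$ since the KMS condition with $X=0$ reduces to $\varphi(\{f,g\})=0$ irrespective of $\beta$.  I do not expect any genuine obstacle: the only point that requires care throughout is keeping the non-compactly-supported factor $e^{\beta\lambda}$ always paired with a compactly supported function before $\varphi$ is evaluated, which is automatic above because $g$, and hence $e^{\beta\lambda}g$, has compact support.
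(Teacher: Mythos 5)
Your proposal is correct and follows essentially the same route as the paper: the identical Leibniz-rule computation $e^{\beta\lambda}\{f,g\}=\{f,e^{\beta\lambda}g\}-g\{f,e^{\beta\lambda}\}$ combined with the KMS condition for $\varphi$, with the inverse given by the same construction for $-\lambda$. The extra details you supply (positivity, compact supports, the explicit inverse) are exactly what the paper leaves to ``direct inspection.''
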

\begin{proof}
    Let $\varphi\in\KMS(X,\beta)$ and set
    $\varphi_\lambda(f):=\varphi(e^{\beta \lambda}f)$ for all
    $f\in C^\infty_{\mathrm{c}}(M)$.  We claim that
    $\varphi_\lambda\in\KMS(X-X_\lambda,\beta)$.  Indeed
    $\varphi_\lambda$ is linear and positive, while for all
    $f,g\in C^\infty_{\mathrm{c}}(M)$ we have
    \begin{align*}
	\varphi_\lambda(\{f,g\})
        =
	\varphi(e^{\beta\lambda}\{f,g\})
        =
	\varphi(\{f,ge^{\beta\lambda}\}-g\{f,e^{\beta\lambda}\})
        =
	\beta\varphi(ge^{\beta\lambda}X(f))
	-\beta\varphi(ge^{\beta\lambda}X_\lambda(f))\,.
    \end{align*}
    By direct inspection, the map
    $\KMS(X,\beta) \ni \varphi \mapsto \varphi_\lambda \in
    \KMS(X-X_\lambda,\beta)$ is an isomorphism of convex cones.
\end{proof}
\begin{remark}\label{Rmk: stability of KMS}
    Equation~\eqref{eq:PerturbalteKMSByHamiltonian} in Lemma~\ref{Lem:
      stability of KMS} can be interpreted as a stability property of
    the set of KMS functionals.  Indeed, let
    $f \in C^\infty_{\mathrm{c}}(M)$ and let $X\in\Gamma(TM)$ be a
    Poisson vector field: in this situation the vector field $X+X_f$
    is interpreted as a small perturbation of the dynamics associated
    with $X$.  Lemma~\ref{Lem: stability of KMS} ensures that for any
    $\varphi\in\KMS(X,\beta)$ there exists a perturbed KMS
    functional $\varphi_f\in\KMS(X+X_f,\beta)$.  In this sense
    we may say that $\KMS(X,\beta)$ is stable under small (in
    the sense of being inner) perturbations.  This feature is the
    classical analogue of the stability property for quantum KMS
    functionals in the $C^*$-algebraic setting \cite{Araki-73,
      Haag-Kastler-Trych-Pohlmeyer-74, Haag-Hugenholtz-Winnik-67} --
    see also \cite{Drago-Faldino-Pinamonti-18, Drago-Gerard-17,
      Drago-Hack-Pinamonti-17, Fredenhagen-Lindner-14} for similar
    results in the $*$-algebraic setting.
\end{remark}

Lemma~\ref{Lem: existence of faithful KMS implies that X is Poisson}
and Lemma~\ref{Lem: stability of KMS} suggest to restrict our
investigation to the case of $(X, \beta)$-KMS functionals for Poisson
vector fields only.
\begin{definition}
    \label{Def: KMS functional space}%
    Let $X\in TM$ be a Poisson vector field.  Under the identification
    proved in Lemma~\ref{Lem: stability of KMS} we denote by
    $\KMS([X],\beta)$ the set of classical KMS functionals
    associated with $[X]\in H_\Pi^1(M)$, where $H_\Pi^1(M)$ denotes
    the first Poisson cohomology group.  We will also let
    $\KMS(M,\Pi) = \KMS([0],\beta)$ denote the set of positive
    $\Pi$-traces over $M$.
\end{definition}

We now show that, at least for a specific choice of
$[X]\in H_\Pi^1(M)$ and $\beta>0$ one has
$\KMS([X],\beta)\neq\{0\}$, \textit{cf.} Remark~\ref{Rmk:
  properties of the modular class}.  This also gives an indication on
the connection between the convex cone of KMS functionals and the
underlying Poisson structure.
\begin{definition}
    \label{Def: regular classical KMS functional}%
    A classical KMS functional $\varphi$ is called \defterm{regular}
    if there exists a volume form $\mu\in \Omega^m(M)$ such that
    \begin{align}\label{Eq: regularity condition for KMS functionals}
        \varphi(f)
        =
        I_\mu(f):=\int_M f\mu
        \qquad
        \forall f\in C^\infty_{\mathrm{c}}(M)\,.
    \end{align}
\end{definition}

If $\mu\in\Omega^m(M)$ is a volume form and
$f,g\in C^\infty_{\mathrm{c}}(M)$ we have
\begin{align}
    \label{Eq: KMS identities for smooth density}%
    I_\mu(\{f,g\})=
    -\int_M\mathcal{L}_{X_f}(g)\mu=
    \int_Mg\mathcal{L}_{X_f}(\mu)=
    \int_M g\operatorname{div}_\mu(X_f)\mu\,.
\end{align}
This leads to the following result:
\begin{remark}[Weinstein \cite{Weinstein-97}]
    \label{Rmk: properties of the modular class}%
    Let $\mu\in \Omega^m(M)$ be a volume form and let
    \begin{align*}
        Y_\mu(f)
        :=\operatorname{div}_\mu(X_f)
        \qquad\forall f\in C^\infty(M)\,.
    \end{align*}
    Then $Y_\mu$ is a Poisson vector field and
    $I_\mu\in\KMS([Y_\mu],1)$.  Moreover the Poisson cohomology
    class $[Y_\mu]$ does not depend on $\mu$.  This Poisson cohomology
    class is called the \defterm{modular class} of $(M,\Pi)$ and it is
    denoted with $[Y_\Pi]$.  A Poisson manifold with vanishing modular
    class is called \defterm{unimodular}.
\end{remark}

Equation~\eqref{Eq: KMS identities for smooth density} entails in
particular that regular $(X, \beta)$-KMS functionals exist if and only
if $\beta X\in[Y_\Pi]$.
\begin{remark}
    \label{Rmk: KMS and homology}%
    If $M$ is compact, any Poisson trace can be thought as a positive
    linear functional on the first homology group
    $H^\Pi_1(M):=\dfrac{C^\infty(M)}{\{C^\infty(M),C^\infty(M)\}}$.  A
    similar interpretation is possible for $(X,\beta)$-KMS
    functionals.  For that we have to introduce the complex
    \begin{align*}
        \{0\}
        \stackrel{\delta_\beta}{\longleftarrow}
        \Omega^0(M)
        \stackrel{\delta_\beta}{\longleftarrow}
        \Omega^1(M)
        \stackrel{\delta_\beta}{\longleftarrow}
        \cdots\stackrel{\delta_\beta}{\longleftarrow}
        \Omega^m(M)\,,
    \end{align*}
    where
    $\delta_\beta:=\mathrm{d}\iota_\Pi-\iota_\Pi\mathrm{d}-\beta\iota_X$.
    A direct inspection shows that $\delta_\beta^2=0$, therefore we
    can define the $(X,\beta)$-homology group
    $H^\beta_\bullet(M):=\dfrac{\ker\delta_\beta}{\operatorname{im}\delta_\beta}$.
    Moreover a direct computation leads to
    \begin{align*}
        \delta_\beta(g\mathrm{d}f)=\{f,g\}-\beta gX(f)\,,
    \end{align*}
    which shows that positive linear functionals on $H^\beta_0(M)$
    satisfies the classical KMS condition \eqref{Eq: classical KMS
      condition}.

    For the case of a unimodular Poisson manifold, \textit{i.e.}
    $[Y_\Pi]=[0]$, Poisson homology and cohomology are isomorphic
    \cite[Prop. 4.18]{LaurentGengoux-Pichereau-Vanhaeckie-13}.  In
    fact, $H^k_\Pi(M)\ni[X] \mapsto [\iota_X\mu]\in H^\Pi_{m-k}(M)$ is
    an isomorphism. Here $\mu$ is any volume form such that $Y_\mu=0$.
    Similarly we may define a $(X,\beta)$-Poisson cohomology
    $H_\beta^\bullet(M)$ such that $[X]\to[\iota_X\mu]$ descends to an
    isomorphism $H_\beta^k(M)\simeq H^\beta_{m-k}(M)$.  To this avail
    it suffices to consider the complex
    \begin{align*}
        C^\infty(M)
        \stackrel{\mathrm{d}_\beta}{\longrightarrow}
        \Gamma(TM)
        \stackrel{\mathrm{d}_\beta}{\longrightarrow}
        \Gamma(\wedge^2TM)
        \stackrel{\mathrm{d}_\beta}{\longrightarrow}
        \cdots
        \stackrel{\mathrm{d}_\beta}{\longrightarrow}
        \Gamma(\wedge^mTM)\,,
    \end{align*}
    where $\mathrm{d}_\beta=[\Pi,\;]-\beta X\wedge$.  A direct
    inspection shows that $\mathrm{d}_\beta^2=0$, therefore
    $H_\beta^\bullet(M) :=
    \dfrac{\ker\mathrm{d}_\beta}{\operatorname{im}\mathrm{d}_\beta}$
    are well-defined.  The proof that
    $H_\beta^k(M)\ni[X] \mapsto [\iota_X\mu]\in H^\beta_{m-k}(M)$ is a
    well-defined isomorphism follows the one presented in
    \cite[Prop. 4.18]{LaurentGengoux-Pichereau-Vanhaeckie-13}.
\end{remark}

%
%

\subsection{Extremal KMS Functionals}

By the very definition, the set of KMS functionals
$\KMS([X],\beta)$ is a convex cone.  For a compact manifold $M$
we can normalize all KMS functionals by setting
$\widehat{\varphi}(f):=\varphi(1)^{-1}\varphi(f)$. Such normalized
positive functionals are referred to as \defterm{states}.  In this way
we are reduced to consider the convex set of KMS states which we shall
denote by $\kms([X],\beta)$.

We recall that for a convex subset $C$ of a real vector space $V$
a point $x \in C$ is called \defterm{extremal} if for all
$x_1, x_2 \in C$ such that $x=\lambda_1x_1+\lambda_2x_2$ for
$\lambda_1,\lambda_2\in(0,1)$ with $\lambda_1+\lambda_2=1$, we have
$x=x_1=x_2$.  In what follows we shall denote by
$\kms_0([X],\beta)\subseteq\kms([X],\beta)$ the set of
extremal KMS states.

A standard application of the Krein-Milman Theorem entails that the
convex hull of extremal KMS states generates all KMS states:
\begin{align*}
    \kms([X],\beta)
    =
    \overline{\operatorname{conv}}(\kms_0([X],\beta))\,,
\end{align*}
where the bar denotes closure with respect to the weak* topology of
$\mathcal{M}_+(M)$.

Indeed Equation~\eqref{Eq: classical KMS condition} entails that
$\kms([X],\beta)$ is a weakly*-closed subset of the
(weakly*-compact) unit ball of the Banach space $C(M)'$.  Thus
$\kms([X],\beta)$ is weakly*-compact and the Krein-Milman
theorem applies.  Therefore, whenever $M$ is compact we may focus on
$\kms_0([X],\beta)$.

Extremal KMS states are interpreted as pure thermodynamical phases of
the system under investigation and it is natural to conjecture that
extremal states have to be supported on the leaves of $M$. Indeed, for
leaves that embed nicely into $M$ we have the following statement.
\begin{proposition}
    \label{Prop: extremal KMS supported on a proper leaf is extremal}%
    Let $M$ be compact, $[X]\in H_\Pi^1(M)$ and let $L\subset M$ be a
    leaf which is a smooth embedded submanifold of $M$.  Then any
    $\varphi\in\kms([X],\beta)$ such that
    $\supp(\varphi)\subseteq L$ is extremal.
\end{proposition}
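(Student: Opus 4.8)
The plan is to reduce the extremality question on $M$ to a classification of KMS functionals \emph{on the leaf} $L$, where the Poisson structure is symplectic, and then invoke the symplectic classification from Remark~\ref{Rmk: KMS for symplectic manifolds}. First I would observe that since $L$ is an embedded submanifold and $\supp(\varphi)\subseteq L$, the functional $\varphi$ factors through restriction to $L$: for $f\in C^\infty_{\mathrm c}(M)$ the value $\varphi(f)$ depends only on $f|_L$, because $\varphi$ is a measure supported on $L$ and any $f$ vanishing on $L$ can be approximated (in the relevant $C^0$ sense on a neighbourhood of $\supp\varphi$) well enough that positivity forces $\varphi(f)=0$; more carefully, writing $f = f_+ - f_-$ and using that $|\varphi(h)|\le C\sup|h|$ on compacts together with $h|_L=0$ and a bump argument gives $\varphi(f)=\varphi(g)$ whenever $f|_L=g|_L$. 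Thus $\varphi$ descends to a positive functional $\bar\varphi$ on $C^\infty(L)$ (using compactness of $M$, hence properness of $L$ if it is closed; in general on $C^\infty_{\mathrm c}(L)$). The next step is to check that $\bar\varphi$ satisfies the symplectic KMS condition on $L$: for $f,g\in C^\infty(M)$ one has $\{f,g\}|_L = \{f|_L, g|_L\}_L$ since $L$ is a symplectic leaf, and $X|_L$ is tangent to $L$ (Poisson vector fields preserve the symplectic foliation), so \eqref{Eq: classical KMS condition} on $M$ restricts to the KMS condition on $(L,\Pi|_L)$ with the same $\beta$ and the restricted vector field $X|_L$, whose Poisson cohomology class on $L$ is the pullback $[X|_L]\in H^1_{\Pi|_L}(L)$.

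With $\bar\varphi\in\kms([X|_L],\beta)$ on the symplectic manifold $L$, Remark~\ref{Rmk: KMS for symplectic manifolds} (i.e.\ \cite[Thm.~8]{Bordermann-Hartmann-Romer-Waldmann-98}) tells us that the cone of symplectic KMS functionals is either $\{0\}$ or a half-line $[0,+\infty)$; in either case the set of KMS \emph{states} on $L$ is at most a single point. Hence $\bar\varphi$ is uniquely determined once we know $\bar\varphi(1)=1$ — and when $L$ is noncompact the cone is forced to be $\{0\}$, so this case does not arise for a genuine state. Now suppose $\varphi = \lambda_1\varphi_1 + \lambda_2\varphi_2$ with $\varphi_i\in\kms([X],\beta)$, $\lambda_i\in(0,1)$, $\lambda_1+\lambda_2=1$. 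Positivity of the $\varphi_i$ and $\supp(\varphi)\subseteq L$ force $\supp(\varphi_i)\subseteq L$ as well (if some $\varphi_i$ charged a point outside $L$, so would the convex combination). Therefore each $\varphi_i$ descends to a state $\bar\varphi_i\in\kms([X|_L],\beta)$ on $L$, and by uniqueness $\bar\varphi_1=\bar\varphi_2=\bar\varphi$; pulling back, $\varphi_1=\varphi_2=\varphi$. This is exactly extremality.

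The main obstacle I anticipate is the first step: making precise that a KMS measure supported on an embedded leaf genuinely \emph{factors} through $C^\infty(L)$ in a way compatible with the Poisson bracket, including the subtlety that the KMS condition on $M$ involves $X(f)$ for $f$ defined on all of $M$, so one must verify that the restriction $X|_L$ is well defined as a vector field on $L$ (this needs $X$ Poisson, which by the discussion following Lemma~\ref{Lem: existence of faithful KMS implies that X is Poisson} is automatic here since on a compact $M$ a nonzero KMS functional forces $X$ Poisson) and that $\{f,g\}|_L$ only depends on the restrictions — true because $L$ carries the leaf symplectic form and the bracket is the pullback bracket. The use of Proposition~\ref{Prop: properties of KMS functionals}\ref{Item: KMS support properties} is implicit in knowing the support structure is leaf-saturated, but the real content is the reduction-to-the-leaf argument plus the symplectic uniqueness input. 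Once those are in place, the extremality is immediate.
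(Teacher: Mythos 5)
Your proposal is correct and follows essentially the same route as the paper: positivity forces $\supp(\varphi_i)\subseteq \overline{L}=L$ for any convex decomposition, and then the uniqueness (up to scale) of KMS functionals on the symplectic leaf $L$ from Remark~\ref{Rmk: KMS for symplectic manifolds} gives $\varphi_1=\varphi_2=\varphi$. The paper's proof is terser — it leaves the reduction of $\varphi$ to a KMS functional on $(L,\omega_L)$ implicit in the citation of that remark — whereas you spell out the factoring through $C^\infty(L)$ and the compatibility of the bracket and of $X|_L$ with the leaf, which is a useful elaboration rather than a different argument.
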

\begin{proof}
    Let $\varphi_1,\varphi_2\in\kms([X],\beta)$ be such that
    $\varphi=\lambda_1\varphi_1+\lambda_2\varphi_2$, where
    $\lambda_1,\lambda_2\in(0,1)$ are such that
    $\lambda_1+\lambda_2=1$.
    For every function $f \ge 0$ with $\supp(f) \subseteq M
    \setminus L$ we have
    \begin{align*}
        0
        =
        \varphi(f)
        =
        \lambda_1\varphi_1(f)+\lambda_2\varphi_2(f)\,.
    \end{align*}
    Positivity of $\varphi_1$ and $\varphi_2$ entails that
    $\varphi_1(f) = \varphi_2(f) = 0$, too. As non-negative functions
    are sufficient to test the support of positive functionals, we
    conclude that
    $\supp(\varphi_1) \cup \supp(\varphi_2) \subseteq \overline{L} = L$ as
    $L$ is proper. Therefore we have $\varphi = \varphi_1 = \varphi_2$
    on account of Remark~\ref{Rmk: KMS for symplectic manifolds}.
\end{proof}

Even if $M$ is not compact the notion of extremal KMS functionals
still make sense.  In fact we shall call
$\varphi\in\KMS([X],\beta)$ \defterm{extremal} whenever
$\varphi=\varphi_1+\varphi_2$ for
$\varphi_1,\varphi_2\in\KMS([X],\beta)$ implies that there
exists $\lambda_1,\lambda_2>0$ such that $\varphi_1=\lambda_1\varphi$,
$\varphi_2=\lambda_2\varphi$.

With this definition at hand one may try to apply a suitable
generalization of the Krein-Milman theorem \cite{Klee-57} to the
convex cone made of KMS functionals.  However, at first glance this
seems not directly applicable and we shall postpone this discussion to
a future investigation.
\begin{remark}
    Notice that the proof of Proposition~\ref{Prop: extremal KMS
      supported on a proper leaf is extremal} generalizes to the
    non-compact case.  In fact, if $L \subset M$ is a leaf which is a
    smooth embedded submanifold of $M$ and $\varphi\in\KMS([X],\beta)$ is
    such that $\supp(\varphi)\subseteq L$, then $\varphi$ is extremal.
\end{remark}
\begin{example}
    For the degenerate case of a trivial Poisson manifold $(M, \Pi)$,
    $\Pi=0$, we find that, for any $X\in\Gamma(TM)$ and
    $\varphi\in\KMS(X,\beta)$ it holds
    \begin{align*}
        \supp(\varphi)\subseteq\{x\in M\,|\, X|_x=0\}\,.
    \end{align*}
    Indeed for any $x\in M$ such that $X|_x\neq 0$ we can find a
    neighbourhood $U$ of $x$ and $f\in C^\infty_{\mathrm{c}}(M)$ with
    $U\subset\supp(f)$ and such that $X(f)|_U=1$.  Applying the KMS
    condition we obtain, for all $g\in C^\infty_{\mathrm{c}}(U)$,
    \begin{align*}
        \beta\varphi(g)=\beta\varphi(X(f)g)=\varphi(\{f,g\})=0\,.
    \end{align*}
    In particular $\varphi\in\KMS(X,\beta)$ is an extremal point
    in $\KMS(X,\beta)$ if and only if
    \begin{align*}
        \varphi\in\{\delta_x\,|\,x\in M\,,\,X|_x=0\}\,,
    \end{align*}
    where $\delta_x(f):=f(x)$ is the Dirac delta measure centred at
    $x\in M$.
\end{example}

%
%

\subsection{Cosymplectic Manifolds}\label{Sec: cosymplectic manifolds}

Before starting with the study of the convex set of KMS functionals
for $b$-Poisson manifolds, we shall consider the case of a
cosymplectic manifold $(M,\Pi)$.  The latter is a particular example
of a Poisson manifold whose foliation is regular with codimension $1$.
Moreover, cosymplectic manifolds naturally arise in the study of
$b$-Poisson manifold, see also Remark~\ref{Rmk: results on b-Poisson
  manifolds}.

We shall now recall some basic facts on cosymplectic manifolds, see
\cite{CappellettiMontano-DeNicola-Yudin-13} and references therein for
an extensive review.  A smooth manifold $M$ is called
\defterm{cosymplectic} if $\dim M=2n+1$ and there exists
$\eta\in\Omega^1(M)$ and $\omega\in\Omega^2(M)$ such that
\begin{align}
    \label{Eq: cosymplectic conditions}
    \mathrm{d}\eta=0\,,\qquad
    \mathrm{d}\omega=0\,,\qquad
    \eta\wedge\omega^{\wedge n}>0\,.
\end{align}

Any cosymplectic manifold $(M,\eta,\omega)$ admits a Poisson manifold
structure defined as follows.  One first observes that the map
\begin{align}
    \label{Eq: isomorphism for cosymplectic manifolds}%
    \flat\colon\Gamma(TM)\to\Omega^1(M)\,,
    \qquad
    \flat(X):=\iota_X\omega+\eta(X)\eta\,,
\end{align}
is an isomorphism of vector spaces.  For later convenience we recall
that the vector field $\xi:=\flat^{-1}(\eta)\in\Gamma(TM)$ is called
\defterm{Reeb vector field}.  The Poisson tensor
$\Pi\in\Gamma(\wedge^2TM)$ is then defined by
\begin{align}
    \label{Eq: Poisson tensor for cosymplectic manifolds}%
    \Pi(\alpha,\sigma)
    :=
    \omega(\flat^{-1}(\alpha),\flat^{-1}(\sigma))\,,
\end{align}
for all $\alpha,\sigma\in\Omega^1(M)$.  The following remark
recollects all relevant results for the forthcoming discussion
\cite{CappellettiMontano-DeNicola-Yudin-13}.
\begin{remark}\label{Rmk: results on cosymplectic manifolds}
    Let $(M,\eta,\omega)$ be a cosymplectic manifold.  Then:
    \begin{enumerate}
    \item The anchor map $\sharp\colon\Omega^1(M)\to\Gamma(TM)$
        associated with the Poisson tensor $\Pi$ defined in
        Equation~\eqref{Eq: Poisson tensor for cosymplectic manifolds}
        is given by
        \begin{align*}
            \alpha^\sharp
            =
            -\iota_{\flat^{-1}(\alpha)}\omega
            =
            \flat^{-1}(\alpha)-\alpha(\xi)\xi\,.
        \end{align*}
        Moreover, the symplectic foliation coincides with the
        foliation $x \mapsto \ker\eta_x$ induced by
        $\eta\in\Omega^1(M)$ and the symplectic form $\omega_L$
        associated with any leaf $L$ is given by
        $\omega_L=\iota_L^*\omega$.
    \item For all $k\in\mathbb{N}\cup\{0\}$, let
        $\Omega^k_\eta(M) :=
        \dfrac{\Omega^k(M)}{\eta\wedge\Omega^{k-1}(M)}$.  Then the
        differential descends to a well-defined linear map
        $\mathrm{d}_k\colon\Omega^k_\eta(M)\to\Omega_\eta^{k+1}(M)$ and we
        shall denote by
        $H_\eta^k(M):=\dfrac{\ker\mathrm{d}_k}{\operatorname{im}\mathrm{d}_{k-1}}$
        the associated cohomology group.  Moreover, there is an
        isomorphism of vector spaces
        \begin{align}
            \label{Eq: characterization of cosymplectic Poisson cohomology}%
            H_\Pi^1(M)\simeq H^1_\eta(M)\oplus H^0_\eta(M)\,.
        \end{align}
        In particular any $[X]\in H_\Pi^1(M)$ can be written as
        $[X] = [\alpha^\sharp + f\xi]$ where $[\alpha]\in H_\eta^1(M)$
        and $f\in H_\eta^0(M)$.
    \end{enumerate}
\end{remark}

We now address the problem of describing the convex cone $\KMS([\sigma^\sharp+f\xi],\beta)$ where $f\in H_\eta^0(M)$ and $[\sigma]\in H_\eta^1(M)$.
We begin
with a result which holds for any regular codimension $1$ Poisson
manifold.
\begin{lemma}
	\label{Lem: non-existence criterion for regular codim 1 Poisson manifolds}
	Let $(M,\Pi)$ be a codimension $1$ regular Poisson manifold.
	Let $x\in M$ and $[X]\in H_\Pi^1(M)$ be such that one (hence all) $X\in[X]$ is transverse at $x$ to the symplectic leaf $L_x$.
	Then for all $\varphi\in\KMS([X],\beta)$ we have $x\notin\supp\varphi$.
\end{lemma}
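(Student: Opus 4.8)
The plan is to reduce the statement to an elementary local computation near $x$, exploiting that a regular codimension-one symplectic foliation is, locally, the family of fibres of a submersion which is automatically a Casimir function.

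First I would fix the local picture. Pick a representative $X\in[X]$; since Hamiltonian vector fields are tangent to the symplectic leaves, transversality at $x$ is independent of this choice, and by hypothesis $X|_x\notin T_xL_x$. By regularity of the foliation there are an open neighbourhood $U$ of $x$ and a submersion $t\colon U\to\mathbb{R}$ whose level sets are the plaques of the leaves through the points of $U$ (a special case of Weinstein's splitting theorem). Such a $t$ is a Casimir on $U$, since $\{t,g\}=X_g(t)=\mathrm{d}t(X_g)=0$ for every $g$, the vector field $X_g$ being tangent to the leaves $=\ker\mathrm{d}t$; consequently $\rho\circ t$ is a Casimir on $U$ for every $\rho\in C^\infty(\mathbb{R})$ and $X(\rho\circ t)=(\rho'\circ t)\,X(t)$ on $U$. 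Transversality of $X$ at $x$ now reads $X(t)|_x=\mathrm{d}t(X)|_x\neq0$, so after possibly replacing $t$ by $-t$ we may assume $X(t)>0$ on a smaller open neighbourhood $U'$ of $x$.

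Next I would feed functions of $t$ into the KMS identity. Fix a cutoff $\chi\in C^\infty_{\mathrm{c}}(U)$ with $\chi\equiv1$ on $U'$, and for $\rho\in C^\infty(\mathbb{R})$ put $f:=\chi\,(\rho\circ t)\in C^\infty_{\mathrm{c}}(M)$. For any $g\in C^\infty_{\mathrm{c}}(U')$ the function $f$ agrees with the Casimir $\rho\circ t$ on the open set $U'\supseteq\supp g$, whence $\{f,g\}=0$ on all of $M$ and $g\,X(f)=g\,(\rho'\circ t)\,X(t)$, so \eqref{Eq: classical KMS condition} gives
\begin{align*}
    0=\varphi(\{f,g\})=\beta\,\varphi\bigl(g\,(\rho'\circ t)\,X(t)\bigr)\,.
\end{align*}
Since $\rho'$ runs through all of $C^\infty(\mathbb{R})$ as $\rho$ does, this shows $\varphi\bigl(g\,h(t)\,X(t)\bigr)=0$ for every $g\in C^\infty_{\mathrm{c}}(U')$ and every $h\in C^\infty_{\mathrm{c}}(\mathbb{R})$. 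Finally, given an arbitrary $\phi\in C^\infty_{\mathrm{c}}(U')$, I would take $h\in C^\infty_{\mathrm{c}}(\mathbb{R})$ with $h\equiv1$ on the compact set $t(\supp\phi)$ and $g:=\phi/X(t)$, which lies in $C^\infty_{\mathrm{c}}(U')$ because $X(t)>0$ on $U'$ and $\supp\phi$ is a compact subset of $U'$; then $g\,h(t)\,X(t)=\phi$ and hence $\varphi(\phi)=0$. As $\phi$ was arbitrary, $\varphi$ vanishes on $C^\infty_{\mathrm{c}}(U')$, which is exactly $x\notin\supp\varphi$.

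The only genuinely geometric ingredient is the existence of a local Casimir defining function for the codimension-one foliation; this is what makes the transverse direction visible to the KMS identity through the factor $X(t)$, and it is precisely where the codimension-one and regularity hypotheses enter. Everything afterwards is a purely local manipulation of \eqref{Eq: classical KMS condition}: positivity of $\varphi$ is used only, via Remark~\ref{Rmk: on definition of KMS functional}, to give $\supp\varphi$ a meaning, and $M$ need not be compact. I do not anticipate a serious obstacle beyond bookkeeping of supports and the smoothness of $\phi/X(t)$.
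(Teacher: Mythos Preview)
Your argument is correct and shares the paper's core mechanism: manufacture a local Casimir $t$ near $x$, take $f$ to agree with (a function of) $t$ on a neighbourhood of $\supp g$, so that $\{f,g\}=0$ while $gX(f)$ does not vanish, and let the KMS identity force $\varphi(g)=0$. The only real difference is in how the Casimir is produced. The paper builds $t$ as the time parameter of the flow $\Psi^X$ through a tubular neighbourhood of a compact piece $K\subseteq L_x$; this uses that $X$ is Poisson (so $\Psi^X$ carries leaves to leaves, making the level sets of $t$ plaques) and buys the normalisation $X(t)\equiv1$, whence $gX(f)=g$ immediately. You instead pull $t$ directly from the regular codimension-one foliation via Weinstein splitting, which is a bit more elementary---it does not invoke the flow of $X$ or the Poisson property in the construction---at the cost of only knowing $X(t)>0$ and having to divide by it at the end. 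One cosmetic point: the passage through general $\rho$ is unnecessary, since taking $\rho=\mathrm{id}$ already yields $\varphi(g\,X(t))=0$ for all $g\in C^\infty_{\mathrm{c}}(U')$, and then $g:=\phi/X(t)$ finishes.
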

\begin{proof}
	We shall prove that for all $\varphi\in\KMS(X,\beta)$ there exists a neighbourhood $U$ of $x$ such that $\varphi|_{C^\infty_{\mathrm{c}}(U)}=0$.
	For that let $L_x\subseteq M$ be the leaf of $M$ passing through $x$.
	Let $K\subseteq L_x$ be a compact subset of $L_x$ containing $x$ and such that $X_y$ is transverse to $L_x$ for all $y\in K$.
	Let $K_\varepsilon$ be an $\varepsilon$-tubular neighbourhood of $K$ built out of the
	flow $\Psi^{X}$ of $X$ -- that is
	$(-\varepsilon,\varepsilon)\times K\ni(t,x)\mapsto\Psi^{X}_t(x)\in
	K_\varepsilon$ is a diffeomorphism.
	Notice that, since $X$ is transversal to $L_x$ on $K$, $\Psi^X_t(K)$ is a (compact subset of a) leaf for all $t\in(-\varepsilon,\varepsilon)$.
	In particular this entails that the locally defined $1$-form $\mathrm{d}t$ vanishes on all	Hamiltonian vector fields.

	For any $g\in C^\infty_{\mathrm{c}}(K_\varepsilon)$ let	$1_g\in C^\infty_{\mathrm{c}}(K_\varepsilon)$ be such that $1_g|_{\supp(g)}=1$ and set $f(\Psi^{X}_t(x)):=t\,1_g(\Psi^{X}_t(x))$.
	Then $f\in C^\infty_{\mathrm{c}}(M)$ and $X(f)|_{\supp(g)}=1$.
	The classical KMS condition implies that
	\begin{align*}
		\beta\varphi(g)
		=\beta\varphi(gX(f))
		\stackrel{\eqref{Eq: classical KMS condition}}{=}\beta\varphi(\{f,g\})
		=\beta\varphi(X_g(t))
		=0\,,
	\end{align*}
	where we used that $X_g(f)=X_g(t1_g)=X_g(t)=0$.
	The arbitrariness of $g\in C^\infty_{\mathrm{c}}(K_\varepsilon)$ and $K$ ensures that $\varphi=0$.
\end{proof}

Lemma~\ref{Lem: non-existence criterion for regular codim 1 Poisson manifolds} entails that any $\varphi\in\KMS([\sigma^\sharp+f\xi],\beta)$ is supported away from $\supp f$.
Thus we are lead to consider the case $f=0$, for which we have the following theorem.
\begin{theorem}
    \label{Thm: KMS functional space for cosymplectic manifolds}%
    Let $(M,\eta,\omega)$ be a cosymplectic manifold.  Moreover, let
    $[\sigma]\in H_\eta^1(M)$ as well as $\xi:=\flat^{-1}(\eta)$ and
    let $\mu:=\eta\wedge\omega^{\wedge n}$.  Then
    \begin{enumerate}
    \item\label{Item: uniqueness of Reeb-invariant Poisson trace}
    setting $\varphi_\mu:=I_\mu$ we have that
        $\varphi_\mu\in\KMS(M,\Pi)$ as well as
        $\varphi_\mu\circ\xi=0$.  In particular $M$ is unimodular.
        Moreover any $\varphi\in\KMS(M,\Pi)$ which is invariant under
        $\xi$ is a multiple of $\varphi_\mu$.
    \item\label{Item: KMS for compact cosymplectic manifold}
    if $M$ is compact and there exists a proper leaf
        $L_0 \hookrightarrow M$ then $\varphi\in\kms_0([\sigma^\sharp],\beta)$ implies that
        $\supp\varphi\subseteq L$ where $L$ is a leaf such that
        $\iota_L^*\sigma$ is exact.  In particular
        \begin{align*}
            \kms_0([\sigma^\sharp],\beta)
            \simeq
            \left\{
                I_{\iota_L^*\omega^{\wedge n}}
                \; \Big| \;
                L\textrm{ leaf }\,,\,\iota_L^*\sigma\textrm{ is exact}
            \right\}\,.
        \end{align*}
        where $\iota_L^*\omega^{\wedge n}$ is the Liouville form
        associated with the leaf $L$.  In particular the support of
        $\varphi\in\kms([\sigma^\sharp],\beta)$ is contained in the union of
        the leaves $L$ such that $\iota_L^*\sigma$ is exact.
    \end{enumerate}
\end{theorem}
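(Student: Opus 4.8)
The plan is to prove the two items separately: the first by a direct computation followed by a local reduction to the symplectic case, the second by disintegrating $\varphi$ over the leaf space and again invoking the symplectic classification leafwise.

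\textbf{Item~\ref{Item: uniqueness of Reeb-invariant Poisson trace}.} First I would show $\mathcal{L}_{X_f}\mu = 0$ for all $f\in C^\infty(M)$; by Equation~\eqref{Eq: KMS identities for smooth density} this gives $I_\mu(\{f,g\}) = 0$, hence $\varphi_\mu\in\KMS(M,\Pi)$, and since moreover $Y_\mu = 0$ it gives unimodularity in the sharp sense of Remark~\ref{Rmk: properties of the modular class}. The computation rests on the cosymplectic identities $\eta(\xi) = 1$, $\iota_\xi\omega = 0$ (immediate from $\flat(\xi) = \eta$) and on $\eta(X_f) = 0$, $\iota_{X_f}\omega = \mathrm{d}f - \xi(f)\,\eta$, which follow from the description of $\sharp$ in Remark~\ref{Rmk: results on cosymplectic manifolds}: then $\mathcal{L}_{X_f}\eta = \mathrm{d}\,\eta(X_f) = 0$ and $\mathcal{L}_{X_f}\omega = -\mathrm{d}(\xi(f))\wedge\eta$, so $\mathcal{L}_{X_f}\mu = n\,\eta\wedge(\mathcal{L}_{X_f}\omega)\wedge\omega^{\wedge(n-1)} = 0$ because $\eta\wedge\eta = 0$. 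Likewise $\iota_\xi\mu = \omega^{\wedge n}$ yields $\mathcal{L}_\xi\mu = \mathrm{d}\omega^{\wedge n} = 0$, whence $\varphi_\mu(\xi(f)) = \int_M\mathcal{L}_\xi(f\mu) = \int_M\mathrm{d}\iota_\xi(f\mu) = 0$ by Stokes.

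For the uniqueness I would argue locally and patch. Around an arbitrary point pick a cosymplectic Darboux chart $U\cong\mathbb{R}\times S$ with $S$ connected symplectic, $\eta = \mathrm{d}t$, $\omega$ the pullback of $\omega_S$, $\xi = \partial_t$ and $\mu = \mathrm{d}t\wedge\omega_S^{\wedge n}$, so that the leaves in $U$ are the slices $\{t\}\times S$ and $t$ is a Casimir. A $\xi$-invariant positive measure on $\mathbb{R}\times S$ is of the form $\mathrm{d}t\otimes\psi$ for some positive measure $\psi$ on $S$ (translation invariance along $\mathbb{R}$, as for Haar measure). Testing the $\Pi$-trace property against Hamiltonians of compactly supported functions pulled back from $S$, and using that $t$ is a Casimir to strip off the $\mathbb{R}$-factor, shows that $\psi$ is a positive Poisson trace on the connected symplectic manifold $S$; by Remark~\ref{Rmk: KMS for symplectic manifolds} (case $[X] = [0]$) we get $\psi = c\,\omega_S^{\wedge n}$ with $c\ge 0$, hence $\varphi = c\,\mu$ on $U$. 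The factor $c$ is locally constant, hence constant on each connected component of $M$, giving $\varphi = c\,\varphi_\mu$.

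\textbf{Item~\ref{Item: KMS for compact cosymplectic manifold}.} I would first record the structure of a compact cosymplectic manifold with a proper leaf: the product neighbourhood of a compact leaf produced by the Reeb flow (in which $\eta$ becomes $\mathrm{d}t$) shows that the leaf space is a compact connected $1$-manifold, hence $p\colon M\to S^1$ is a fibre bundle whose fibres are exactly the symplectic leaves, all compact, embedded and therefore proper. Taking $X = \sigma^\sharp$, which is tangent to every leaf since $\eta$ annihilates $\sigma^\sharp$, I would disintegrate $\varphi = \int_{S^1}\varphi_c\,\mathrm{d}\bar\nu(c)$ along $p$, with $\bar\nu = p_*\varphi$ and $\varphi_c$ a probability measure on $L_c$ for $\bar\nu$-a.e.\ $c$. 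Since $k\circ p$ is constant on leaves for $k\in C^\infty(S^1)$, one has $\{(k\circ p)f,g\} = (k\circ p)\{f,g\}$ and $X((k\circ p)f) = (k\circ p)X(f)$; inserting such functions into the KMS condition and varying $k$ gives, for $\bar\nu$-a.e.\ $c$, that $\varphi_c\in\KMS(X|_{L_c},\beta)$ on the symplectic leaf $(L_c,\iota_{L_c}^*\omega)$, where $X|_{L_c}$ is $\omega_{L_c}$-dual to the closed form $\iota_{L_c}^*\sigma$, i.e.\ has Poisson class $[\iota_{L_c}^*\sigma]$. By Remark~\ref{Rmk: KMS for symplectic manifolds} this forces $\varphi_c = 0$ unless $[\iota_{L_c}^*\sigma] = [0]$, in which case $\varphi_c$ is the normalised Gibbs measure of $L_c$, which under the isomorphism of Lemma~\ref{Lem: stability of KMS} corresponds to the Liouville form $\iota_{L_c}^*\omega^{\wedge n}$. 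As $\{c : [\iota_{L_c}^*\sigma] = 0\}$ is closed, this already gives $\supp\varphi\subseteq\bigcup\{L : \iota_L^*\sigma\text{ exact}\}$.

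Finally, for extremality: if $\bar\nu$ were not a single Dirac mass I would split $\bar\nu = \lambda_1\bar\nu_1 + \lambda_2\bar\nu_2$ nontrivially into probability measures and reassemble $\varphi = \lambda_1\varphi^{(1)} + \lambda_2\varphi^{(2)}$ with $\varphi^{(i)} = \int_{S^1}\varphi_c\,\mathrm{d}\bar\nu_i(c)$; the same leafwise identity shows $\varphi^{(i)}\in\kms([\sigma^\sharp],\beta)$ and $\varphi^{(1)}\neq\varphi^{(2)}$, contradicting extremality. Hence $\bar\nu = \delta_{c_0}$, so $\supp\varphi\subseteq L := L_{c_0}$ with $\iota_L^*\sigma$ exact and $\varphi$ the Gibbs state of $L$; conversely, for each leaf $L$ with $\iota_L^*\sigma$ exact the Gibbs state of $L$ is a $(\sigma^\sharp,\beta)$-KMS state, and it is extremal by Proposition~\ref{Prop: extremal KMS supported on a proper leaf is extremal} since $L$ is a proper embedded leaf. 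The main obstacle I expect is the rigorous handling of the disintegration --- measurable choices of leafwise normalisations and the passage to $\bar\nu$-a.e.\ statements --- together with the structure theorem input for compact cosymplectic manifolds with a proper leaf; the computations behind Item~\ref{Item: uniqueness of Reeb-invariant Poisson trace} are comparatively routine.
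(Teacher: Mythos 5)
Your proposal is correct in substance but follows a genuinely different route from the paper in both items, while resting on the same two external inputs: the symplectic classification of Remark~\ref{Rmk: KMS for symplectic manifolds} and (for the second item) the mapping-torus structure of compact cosymplectic manifolds with a proper leaf from \cite{Guillemin-Miranda-Pires-11}. For the uniqueness in Item~\ref{Item: uniqueness of Reeb-invariant Poisson trace}, the paper does not localize: it passes to the symplectification $\widehat{M}=M\times\mathbb{T}$ with $\widehat{\omega}=\omega+\eta\wedge\mathrm{d}\theta$, averages $\varphi$ over the circle to obtain a Poisson trace on a genuine connected symplectic manifold (this is exactly where the $\xi$-invariance is consumed), and then restricts back to $C^\infty_{\mathrm{c}}(M)$; this avoids your local-to-global patching of the constant $c$, which in turn forces you to justify that the $\xi$-invariance survives restriction to a chart --- your ``Haar measure'' phrasing should really be the statement that a measure with vanishing distributional $\partial_t$-derivative on $I\times S$ is $\lambda_I\otimes\psi$, since the Reeb flow does not preserve the chart. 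For Item~\ref{Item: KMS for compact cosymplectic manifold} the paper avoids disintegration altogether: given an extremal state whose support meets two leaves, it multiplies by a compactly supported Casimir $h_\chi$ (constant on leaves, a bump in the $S^1$-direction) to manufacture a KMS sub-functional $\varphi_\chi\leq\lambda^{-1}\varphi$, contradicting extremality; your disintegration $\varphi=\int_{S^1}\varphi_c\,\mathrm{d}\bar\nu(c)$ reaches the same conclusion but requires the measure-theoretic care you yourself flag (a single $\bar\nu$-null set working for all test pairs $(f,g)$, via separability of $C^\infty_{\mathrm{c}}(M)$). What your route buys is a stronger output: it exhibits every $\varphi\in\kms([\sigma^\sharp],\beta)$ explicitly as an integral of the extremal leafwise Gibbs states, which the paper only obtains abstractly through Krein--Milman; what the paper's route buys is economy and no recourse to disintegration or a.e.\ arguments.
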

\begin{proof}
	\noindent
    \begin{description}
    \item[\ref{Item: uniqueness of Reeb-invariant Poisson trace}] We
        first observe that, for all $f\in C^\infty(M)$,
        \begin{align*}
            \mathcal{L}_{X_f}(\mu)
            =\mathcal{L}_{X_f}(\eta)\wedge\omega^{\wedge n}
            -\eta\wedge\mathcal{L}_{X_f}(\omega^{\wedge n})
            =\mathrm{d}\iota_{X_f}(\eta)\wedge\omega^{\wedge n}
            -\eta\wedge\mathrm{d}\iota_{X_f}\omega^{\wedge n}
            =0\,,
        \end{align*}
        where we used $\eta(X_f)=0$ as well as
        $\iota_{X_f}\omega=\mathrm{d}f$.  This shows that $I_\mu$
        induces a Poisson trace -- in fact, it also shows that
        $[Y_\Pi]=[0]$, \textit{i.e.} $M$ is unimodular.  Moreover we
        have
        \begin{align*}
            \mathcal{L}_\xi(\mu)
            =\mathcal{L}_\xi(\eta)\wedge\omega^{\wedge n}
            +\eta\wedge\mathcal{L}_\xi(\omega^{\wedge n})
            =0\,.
        \end{align*}
        which entails that $\varphi_\mu$ is $\xi$-invariant.

        Conversely, let $\varphi\in\KMS(M,\Pi)$ be $\xi$-invariant.
        We consider the symplectification
        \cite{CappellettiMontano-DeNicola-Yudin-13} of
        $(M,\eta,\omega)$, that is, the symplectic manifold
        $(\widehat{M},\widehat{\omega})$ given by
        \begin{align*}
            \widehat{M}=M\times\mathbb{T}\,,
            \qquad
            \widehat{\omega}=\omega+\eta\wedge\mathrm{d}\theta\,,
        \end{align*}
        where $\theta$ is the usual quasi-global angle coordinate on
        the circle.  Let $\widehat{\varphi}$ be the functional
        \begin{align*}
            \widehat{\varphi}\colon
            C^\infty_{\mathrm{c}}(\widehat{M})\to\mathbb{C}\,,
            \qquad
            \widehat{\varphi}(f)
            :=
            \frac{1}{2\pi}\int_0^{2\pi}\varphi(f_\theta)\mathrm{d}\theta\,,
        \end{align*}
        where $f_\theta\in C^\infty_{\mathrm{c}}(M)$ denotes the
        restriction of $f\in C^\infty_{\mathrm{c}}(\widehat{M})$ to
        $M\times\{e^{i\theta}\}$.  A direct inspection leads to
        \begin{align*}
            \widehat{\varphi}(\{f,g\}_{\widehat{M}})
            &=\frac{1}{2\pi}\int_0^{2\pi}\left[
                \cancel{\varphi(\{f_\theta,g_\theta\}_M)}
                +\varphi(\xi(f_\theta)\partial_\theta g_\theta)
                -\varphi(\partial_\theta f_\theta \xi(g_\theta))
            \right]\mathrm{d}\theta
            \\
            &=
            \frac{1}{2\pi}\int_0^{2\pi}
            \frac{\mathrm{d}}{\mathrm{d}\theta}\varphi(\xi(f_\theta)g_\theta)
            \mathrm{d}\theta
            =
            0\,,
        \end{align*}
        where we used the $\xi$-invariance to obtain the equality
        $\varphi(\partial_\theta f_\theta
        \xi(g_\theta))=-\varphi(\partial_\theta\xi(f_\theta)g_\theta)$.
        It follows that
        $\widehat{\varphi}\in\KMS(\widehat{M},\widehat{\Pi})$: on
        account of Remark \ref{Rmk: KMS for symplectic manifolds}
        there exists $c>0$ such that
        \begin{align*}
            \widehat{\varphi}(f)
            =
            \frac{c}{2\pi} \int_0^{2\pi}
            f\mathrm{d}\theta\wedge\eta\wedge\omega^{\wedge n}\,,
            \qquad
            \forall f\in C^\infty_{\mathrm{c}}(\widehat{M})\,.
        \end{align*}
        Finally $\widehat{\varphi}(f)=\varphi(f)$ for all
        $f\in C^\infty_{\mathrm{c}}(M)\subset
        C^\infty_{\mathrm{c}}(\widehat{M})$, therefore
        $\varphi = c\varphi_\mu$.

    \item[\ref{Item: KMS for compact cosymplectic manifold}] As we
        assumed $M$ to be compact, we are reduced to consider the
        convex set $\kms([\sigma^\sharp],\beta)$ of KMS states.  We shall now
        prove that an extremal KMS state
        $\varphi\in\kms_0([\sigma^\sharp],\beta)$ is necessarily supported on
        a leaf $L$.  Indeed, since $M$ admits a proper leaf $L_0$, all
        leaves are proper and diffeomorphic to $L_0$, the
        diffeomorphism being given by the flow $\Phi^\xi$ associated
        with $\xi$, see \cite[Thm. 13]{Guillemin-Miranda-Pires-11}.
        By Proposition~\ref{Prop: extremal KMS supported on a proper leaf is extremal} any KMS state supported on a
        leaf is necessarily extremal.

        Conversely if $\varphi\in\kms([\sigma^\sharp],\beta)$ is extremal it
        has to be supported on a leaf.  Indeed, let $t,s\in\mathbb{R}$
        be such that the leaves $L_t=\Phi^\xi_t(L_0)$ and
        $L_s=\Phi^\xi_s(L_0)$ are such that
        $L_t\cap\supp(\varphi)\neq\emptyset$,
        $L_s\cap\supp(\varphi)\neq\emptyset$. By
        Proposition~\ref{Prop: properties of KMS functionals} we
        necessarily have $L_t\cup L_s\subseteq\supp(\varphi)$.  We
        shall now built $\varphi_1\in\kms([\sigma^\sharp],\beta)$ such that
        $\lambda\varphi_1\leq\varphi$ for $0<\lambda<1$.
        This shows that $\varphi$ is not extremal as
        \begin{align*}
            \varphi=\lambda\varphi_1
            +(1-\lambda)\varphi_2\,,
            \qquad
            \varphi_2:=\frac{1}{1-\lambda}(\varphi-\lambda\varphi_1)\,.
        \end{align*}
        Let $\chi\in C^\infty_{\mathrm{c}}(\mathbb{R})$ be such that
        $0\leq\chi\leq 1$, $\chi(t)=1$ and $s\notin\supp\chi$.  Let
        $h_\chi\in C^\infty(M)$ be defined by
        $h(\Phi^\xi_\tau(x))=\chi(\tau)$ where $x\in L_0$ and
        $\tau\in\mathbb{R}$.  Notice that $h_\chi$ is a well-defined
        function provided $\supp\chi$ is sufficiently small.
    	Moreover, per construction
        $h_\chi\in C^\infty_{\mathrm{c}}(M)\cap H_\Pi^0(M)$.

        We then define
        $\varphi_\chi\colon C^\infty_{\mathrm{c}}(M)\to\mathbb{C}$ by
        $\varphi_\chi(f):=\varphi(h_\chi)^{-1}\varphi(h_\chi f)$.  It
        follows that $\varphi_\chi$ is a state: In fact,
        $\varphi_\chi\in\kms([\sigma^\sharp],\beta)$ as
        \begin{align*}
            \varphi_\chi(\{f,g\})
            =\varphi(\{f,h_\chi g\})
            =\beta\varphi_\chi(g\sigma^\sharp(f))\,.
        \end{align*}
        Finally $\varphi_\chi\leq\lambda\varphi$ for
        $\lambda:=\varphi(h)$.

        Therefore $\varphi\in\kms_0([\sigma^\sharp],\beta)$ is necessarily
        supported on a leaf $L$: By Remark \ref{Rmk: KMS for
          symplectic manifolds} this implies that $\iota_L^*\sigma$ is
        exact and that $\varphi=I_{\mu_\beta}$ where
        $\mu_\beta=e^{-\beta H_L}\iota_L^*\omega^{\wedge n}$, where
        $H_L\in C^\infty(L)$ is such that
        $\mathrm{d}H_L=\iota_L^*\sigma$ while
        $\iota_L^*\omega^{\wedge n}$ is the associated Liouville form.
        This shows that $\kms_0([\sigma^\sharp],\beta)$ can be identified
        with the collection
        \begin{align*}
            \left\{
                I_{\iota_L^*\omega^{\wedge n}}
                \; \Big| \;
                L\textrm{ leaf }\,,\,\iota_L^*\sigma\textrm{ is exact}
            \right\}\,.
        \end{align*}
        The last statement follows from the equality
        $\kms([\sigma^\sharp],\beta)=\overline{\kms_0}([\sigma^\sharp],\beta)$.
    \end{description}
\end{proof}

Theorem~ \ref{Thm: KMS functional space for cosymplectic manifolds}
shows that the convex cone of KMS functionals associated with a
cosymplectic structure is essentially made by Poisson traces with
support constraints.  Remarkably, among all Poisson traces, there
exists a preferred one, which is invariant under the flow of the
vector field $\xi$.
The result in item \ref{Item: KMS for compact cosymplectic manifold} is obtained under the assumption of compactness of $M$ together with the existence of a proper leaf $L$.
The example discussed in Section \ref{Sec: 3-torus regular Poisson manifolds} shows that the last assumption (the existence of a proper leaf) cannot be dropped.

\begin{example}
    \label{Ex: KMS functionals for plane Poisson manifold}%
    We shall discuss the convex cone of KMS functionals for the
    non-compact cosymplectic manifold defined by
    \begin{align*}
        M:=\mathbb{R}^2\times\mathbb{T}\,,\qquad
        \eta:=\mathrm{d}x\,,\qquad
        \omega=\mathrm{d}y\wedge\mathrm{d}\theta\,,
    \end{align*}
    where $\theta\in[0,2\pi)$.  By direct inspection we have
    \begin{align}\label{Eq: plane manifold Poisson cohomology}
        H_\Pi^1(M)\simeq C^\infty(\mathbb{R})^2\,,\qquad
        [X]=[a\partial_x
        +b\partial_y]\,,
    \end{align}
    where $a, b \in C^\infty(\mathbb{R})$ only depend on $x$.  We
    shall now discuss the convex cone $\KMS([a\partial_x+b\partial_y],\beta)$
    for all $a,b\in C^\infty(\mathbb{R})$.  Let
    $\varphi\in\KMS([a\partial_x+b\partial_y],\beta)$.  Whenever $x\in\mathbb{R}$ is such
    that $a(x)\neq 0$ we can find a neighbourhood $U$ of $M$ such that
    $X$ is transversal to the leaves in $U$: Lemma~\ref{Lem:
      non-existence criterion for regular codim 1 Poisson manifolds}
    entails $\supp(\varphi)\cap\supp(a)=\emptyset$.  Henceforth we shall
    assume $a=0$.

    Similarly, let $x\in\mathbb{R}$ be such that $b(x)>0$ and let
    $U=(x-\varepsilon,x+\varepsilon)\times\mathbb{R}\times\mathbb{T}$
    be a neighbourhood of $M$ such that $b>0$.  A direct computation
    shows that, for all $g\in C^\infty_{\mathrm{c}}(U)$,
    \begin{align*}
        \beta\varphi(b g)
        =\beta\varphi(gX(y))
        =\varphi(\{y,g\})
        =-\varphi(\partial_\theta g)\,.
    \end{align*}
    Since $b$ and $U$ are invariant under the flow $\Psi$
    associated with $\partial_\theta$ we find
    \begin{align*}
        \varphi(\Psi_t^*g)
        =
        \varphi(e^{-\beta b t}g)\,,
    \end{align*}
    which is nothing but Equation~\eqref{Eq: global classical KMS condition}.  Since $e^{-\beta bt}\Psi^*_{-t}g\to 0$ as
    $t\to+\infty$ in $C^\infty_{\mathrm{c}}(M)$ we find
    \begin{align*}
        \varphi(g)
        =\lim_{t\to+\infty}
        \varphi(e^{-\beta b t}\Psi^*_{-t}g)
        =0\,,
    \end{align*}
    since the measure $\varphi$ is continuous for locally uniform
    convergence.  This arguments shows that
    $\varphi\in\KMS([a\partial_x+b\partial_y],\beta)$ is supported outside
    $\supp(a)\cup\supp(b)$.  If follows that $\KMS([a\partial_x+b\partial_y],\beta)$ embeds
    in the convex cone $\KMS(M,\Pi)$ made by $\Pi$-traces.

    We shall now consider $\varphi\in\KMS(M,\Pi)$.  We first observe
    that $\varphi$ is invariant under the flow of $\partial_y$.
    Indeed, let $\chi_1$, $\chi_2$ be a partition of unity
    subordinated to the cover
    $[0,2\pi)=[0,\frac{3}{2}\pi)\cup(\pi,2\pi)$.
    Then for all $f\in C^\infty_{\mathrm{c}}(M)$ we have
    \begin{align*}
        \varphi(\partial_yf)
        =\varphi(\partial_y(\chi_1 f))
        +\varphi(\partial_y(\chi_2 f))
        =\varphi(\{\theta 1_f,\chi_1f\})
        +\varphi(\{\theta 1_f,\chi_2f\})
        =0\,,
    \end{align*}
    where $1_f\in C^\infty_{\mathrm{c}}(M)$ is such that
    $1_f|_{\supp(f)}=1$.

    Let $\mu_\varphi\in\mathcal{M}_+(M)$ be the positive measure on
    $M$ associated with the natural extension of $\varphi$ on $C(M)$,
    \textit{cf.}  Remark~\ref{Rmk: on definition of KMS functional}.
    Let $\chi\in C^\infty_{\mathrm{c}}(\mathbb{R})$ be such that
    $\chi\geq0$ and $\int\chi(y)\mathrm{d}y=1$ and for all
    $I\subseteq\mathbb{R}$ we define
    \begin{align*}
        \nu_\varphi(I)
        :=\int_M \varrho_I(x)\chi(y)\mathrm{d}\mu_{\varphi}(x,y,\theta)
        =\varphi(\varrho_I\chi)\,,
    \end{align*}
    where $\varrho_I$ is the characteristic function over $I$.  It
    follows that $\nu_\varphi$ defines a positive Borel measure on
    $\mathbb{R}$.  Notice that $\nu_\varphi$ does not depend on
    $\chi$: Indeed, if
    $\chi,\hat{\chi}\in C^\infty_{\mathrm{c}}(\mathbb{R})$ are such
    that $\int[\chi-\hat{\chi}](y)\mathrm{d}y=0$ then
    $\chi-\hat{\chi}=g_y$ for $g\in C^\infty_{\mathrm{c}}(\mathbb{R})$
    and
    \begin{align*}
        \varphi(\varrho_I\chi)
        -\varphi(\varrho_I\hat{\chi})
        =\varphi(\partial_y(\varrho_Ig))=0\,.
    \end{align*}
    Finally, for all $f\in C^\infty_{\mathrm{c}}(\mathbb{R})$ we have
    \begin{align*}
        f
        &=f-\int_0^{2\pi}f\mathrm{d}\theta
        +\int_0^{2\pi}f\mathrm{d}\theta
        -\chi\int_0^{2\pi}\int_{\mathbb{R}}f(\cdot,\theta,y)\mathrm{d}y\mathrm{d}\theta
        +\chi\int_0^{2\pi}\int_{\mathbb{R}}f(\cdot,\theta,y)\mathrm{d}y\mathrm{d}\theta
        \\
        &=
        \partial_{\theta}g
        +\partial_yh
        +\chi\int_0^{2\pi}\int_{\mathbb{R}}f(\cdot,\theta,y)\mathrm{d}y\mathrm{d}\theta\,,
    \end{align*}
    where $g,h\in C^\infty_{\mathrm{c}}(M)$.  The $\Pi$-trace property
    of $\varphi$ implies that
    \begin{align*}
        \varphi(f)
        =\int_Mf(x,\theta,y)\mathrm{d}\nu_\varphi(x)\mathrm{d}\theta\mathrm{d}y\,.
    \end{align*}
    The correspondence $\varphi\to\nu_\varphi$ is thus 1-1, therefore,
    $\KMS(M,\Pi)\simeq\mathcal{M}_+(\mathbb{R})$, where
    $\mathcal{M}_+(\mathbb{R})$ denotes the convex cone of positive
    Borel measure over $\mathbb{R}$.

    Summing up, the KMS convex cones for $(M,\Pi)$ satisfy
    \begin{subequations}
        \begin{align}
            \label{eq:KMSPi}
            \KMS(M,\Pi)
            &\simeq\mathcal{M}_+(\mathbb{R})\,,\\
            \label{eq:KMSPiabbeta}
            \KMS([a\partial_x+b\partial_y],\beta)
            &\simeq
            \{
            \nu\in\mathcal{M}_+(\mathbb{R})
            \,|\,
            \supp(\nu)\cap(\supp(a)\cup\supp(b))=\emptyset\}\,.
        \end{align}
    \end{subequations}
\end{example}

%
%

\section{KMS Convex Cone for $b$-Poisson Manifolds}
\label{Sec: KMS convex cone for b-Poisson manifolds}

In this section we discuss the convex cone of KMS functionals for the
case of a $b$-Poisson manifold \cite{Melrose-93, Nest-Tsygan-96}.  To
this avail we recall the basic definitions and properties of
$b$-Poisson manifolds whose proofs can be found in the literature
\cite{Braddell-Kiesenhofer-Miranda-18,
  Braddell-Kiesenhofer-Miranda-20, Cavalcanti-17,
  Guillemin-Miranda-Pires-14, Guillemin-Miranda-Pires-Scott-17,
  Guillemin-Miranda-Weitsman-19, Marcut-Torres-14a, Marcut-Torres-14b,
  Miranda-Oms-Salas-20}.

%
%

\subsection{$b$-Poisson Manifolds}
\label{Sec: b-Poisson manifolds}

To fix our notation we need to recall some basic facts about
$b$-Poisson manifolds in this preliminary section
\cite{Guillemin-Miranda-Pires-14,Marcut-Torres-14b}.  A Poisson
manifold $(M,\Pi)$ is called \defterm{$b$-Poisson} if $\dim M=2n$ and
$\Pi^{\wedge n}$ is transversal to the zero section in
$\Gamma(\wedge^{2n}TM)$.  In what follows we shall denote by
$Z:=\{x\in M\;|\; \Pi^{\wedge n}|_x=0\}$ the zero locus of
$\Pi^{\wedge n}$.  A \defterm{$Z$-defining function} is a smooth
function $\zeta\in C^\infty(M)$ such that
$Z=\{x\in M\;|\;\zeta(x)=0\}$ and $\mathrm{d}\zeta|_Z\neq 0$.

\begin{remark}\label{Rmk: global Z-definition function}%
    Since we are assuming that $M$ is orientable we can always find a
    $Z$-defining function $\zeta\in C^\infty(M)$.  Indeed, for a given
    volume form $\mu\in\Omega^{2n}(M)$ we can set
    $\zeta_\mu:=\iota_{\Pi^{\wedge n}}\mu$.  Moreover, given a
    $Z$-defining function $\zeta_0$, all other $Z$-defining functions
    are of the form $h\zeta_0$ for a non-vanishing $h\in C^\infty(M)$.
    Notice in particular that $\zeta_\mu Y_\mu=-X_{\zeta_\mu}$.
\end{remark}

As $b$-Poisson manifolds are best understood in the context of
$b$-geometry we recall a few basic fact of $b$-geometry introduced in
\cite{Melrose-93, Nest-Tsygan-96}.  The \defterm{$b$-tangent bundle}
relative to the pair $(M, Z)$ is the unique (up to vector bundle
isomorphisms) vector bundle $^bTM$ whose sections are isomorphic to
the subspace $\Gamma(TM)_Z$ of vector fields in $M$ which are tangent
to $Z$, that is,
\begin{align*}
	\Gamma(^bTM)
    \simeq\Gamma(TM)_Z
    :=\{X\in\Gamma(TM)\;|\;X|_Z\in\Gamma(TZ)\}\,.
\end{align*}
In what follows we shall identify $\Gamma(^bTM)$ and $\Gamma(TM)_Z$.

Existence and uniqueness of $^bTM$ is ensured by the fact that, on a
coordinate neighbourhood $U$, we have the local frame
$\{z\partial_z,\partial_{x^1},\ldots\partial_{x^{2n-1}}\}$ spanning
the sections $\Gamma(TM)_Z$, where $z$ is a defining function for
$Z\cap U$.

By definition it also follows that $^bT_xM= T_xM$ for all
$x\in M\setminus Z$.  For points $x\in Z$ we have
$^bT_xM=T_xZ\oplus\operatorname{sp}w_x$, where $w_x$ is the canonical
non-vanishing section of the line bundle $^bTM|_Z\to TZ$ \textit{cf.}
\cite[Prop. 4]{Guillemin-Miranda-Pires-14}.  Roughly speaking, the
latter bundle is induced by the restriction map
$\Gamma(^bTM)\to\Gamma(TZ)$, while $w$ is defined by considering a
$Z$-defining function $\zeta$ and setting $w:=\zeta Z_\zeta$ where
$Z_\zeta\in\Gamma(TM)$ is any vector field such that $Z_\zeta(\zeta)=1$ -- the definition does not depend on the chosen
$\zeta$.

By duality one can introduce the \defterm{$b$-cotangent bundle} $^bT^*M$.  As
for $^bTM$, we have that
\begin{align}
    \label{Eq: decomposition of b-cotangent bundle}
    ^bT^*_xM
    =
    \begin{cases}
        T^*_xM
        & \textrm{for } x\in M\setminus Z
        \\
        T^*_xZ\oplus\operatorname{sp}\nu_w
        & \textrm{for } x\in Z.
    \end{cases}
\end{align}
Here $\nu_w\in\Gamma(^bT^*M)$ is defined by the requirement that
$\nu_w(w) = 1$.  In what follows we shall adopt the notation
$\nu_w=\dfrac{\mathrm{d}\zeta}{\zeta}$ once a $Z$-defining function
$\zeta$ has been fixed.

Let $^b\Omega^k(M):=\Gamma(\wedge^k\,^bT^*M)$ be the space of
$b$-forms of degree $k$.  Notice that any $\eta\in\, ^b\Omega^k(M)$
can be decomposed as
\begin{align}
    \label{Eq: decomposition of b-forms}
    \eta=\alpha\wedge\frac{\mathrm{d}\zeta}{\zeta}+\sigma\,,
\end{align}
where $\alpha\in\Omega^{k-1}(M)$ and $\sigma\in\Omega^k(M)$ while
$\zeta$ is a $Z$-defining function.  In Equation~\eqref{Eq:
  decomposition of b-forms} we interpreted elements in
$\Omega^\bullet(M)$ as $b$-forms $^b\Omega^\bullet(M)$ according to
the following convention
\begin{align*}
    \alpha_x
    =
    \begin{cases}
        \alpha_x
        & \textrm{for } x\in M\setminus Z
        \\
        \iota_Z^*\alpha_x
        & \textrm{for } x\in Z\,,
    \end{cases}
\end{align*}
being $\iota_Z\colon Z\hookrightarrow M$.  This convention is
consistent with Equation~\eqref{Eq: decomposition of b-cotangent
  bundle}.
\begin{remark}
    \label{Rmk: on decomposition of b-forms}%
    Notice that the forms $\alpha$ and $\sigma$ appearing in
    Equation~\eqref{Eq: decomposition of b-forms} are not unique:
    $\alpha \mapsto \alpha+h\mathrm{d}\zeta$ does not affect
    Equation~\eqref{Eq: decomposition of b-forms} as well as the
    simultaneous replacement $\alpha \mapsto \alpha+\zeta\alpha'$,
    $\sigma \mapsto \sigma-\alpha'\wedge\mathrm{d}\zeta$, where
    $h\in C^\infty(M)$ and $\alpha'\in\Omega^{k-1}(M)$.  Finally if
    $\zeta_\lambda=\lambda\zeta$ is another $Z$-defining form -- where
    $\lambda\in C^\infty(M)$ is no-where vanishing -- then the form
    $\sigma$ in Equation~\eqref{Eq: decomposition of b-forms} gets
    shifted by $\mathrm{d}\log|\lambda|$.  Nevertheless, on account of
    Equation~\eqref{Eq: decomposition of b-cotangent bundle}, the
    decomposition appearing in Equation~\eqref{Eq: decomposition of
      b-forms} is unique on $Z$ -- \textit{cf.}
    \cite[Prop. 5]{Guillemin-Miranda-Pires-14}.
\end{remark}

On account of Equation~\eqref{Eq: decomposition of b-forms} we can
extend the differential $\mathrm{d}$ to $^b\Omega^\bullet(M)$.  Recall
from \cite{Melrose-93} that the \defterm{$b$-de Rham differential}
$^b\mathrm{d}\colon\,^b\Omega^{\bullet}(M)\to\,^b\Omega^{\bullet+1}(M)$
is defined by
\begin{align}
    \label{Eq: b-de Rham differential}
    ^b\mathrm{d}\eta
    :=\mathrm{d}\alpha\wedge\frac{\mathrm{d}\zeta}{\zeta}
    +\mathrm{d}\sigma\,,
\end{align}
for all $\eta\in\,^b\Omega^k(M)$.  The \defterm{$b$-de Rham complex} is the
chain complex
\begin{align*}
    0
    \stackrel{^b\mathrm{d}}{\longrightarrow}
    \,^b\Omega^1(M)
    \stackrel{^b\mathrm{d}}{\longrightarrow}
    \cdots
    \stackrel{^b\mathrm{d}}{\longrightarrow}
    \,^b\Omega^{2n}(M)
    \stackrel{^b\mathrm{d}}{\longrightarrow}
    0\,.
\end{align*}
The associated cohomology groups $^bH^\bullet(M)$ are called
\defterm{$b$-de Rham cohomology group}.

\begin{remark}[\cite{Melrose-93}]
    Notice that the definition of $^b\mathrm{d}$ does not depend on
    the chosen $Z$-defining function $\zeta$ -- \textit{cf.} Remark
    \ref{Rmk: on decomposition of b-forms}.  Moreover, since the forms
    $\alpha$ and $\sigma$ appearing in Equation~\eqref{Eq:
      decomposition of b-forms} are not unique, the identity
    $^b\mathrm{d}\eta=0$ does not imply $\mathrm{d}\alpha=0$ and
    $\mathrm{d}\sigma=0$ in general.  However, one can exploit the
    non-uniqueness of $\alpha,\sigma$ to prove that, for a suitable
    choice of a $Z$-defining function $\zeta$, the forms
    $\alpha,\sigma$ can be chosen to be closed.  This leads to the
    following result \cite[Prop. 2.49]{Melrose-93}: for the
    $b$-cohomology groups one has
    \begin{align}
        \label{Eq: characterization of b-de Rham cohomology}
        ^bH^\bullet(M)
        \simeq
        H^\bullet(M)\oplus
        H^{\bullet-1}(Z)\,,
    \end{align}
    where $H^\bullet(M)$, $H^\bullet(Z)$ denote the standard de-Rham
    cohomology groups of $M$ and $Z$ respectively.
\end{remark}

We now discuss the relation between $b$-Poisson manifolds and
$b$-geometry -- \textit{cf.} \cite{Guillemin-Miranda-Pires-14} for an
extensive treatment.  If $(M,\Pi)$ is a $b$-Poisson manifold, then
$\Pi\in\Gamma(\wedge^2\,^bTM)$.  Therefore, though $\Pi$ is degenerate
at $Z$ as an element of $\Gamma(\wedge^2 TM)$, it is non-degenerate as
a bivector field in the $b$-tangent bundle -- this is actually the
main reason for considering the $b$-geometrical setting.  It follows
that there exists a unique non-degenerate closed $b$-form
$\omega \in { }^b\Omega^2(M)$ such that $\iota_\Pi\omega=1$.
Manifolds $M$ equipped with such a $b$-form $\omega$ are usually
called \defterm{$b$-symplectic manifolds} -- \textit{cf.}
\cite[Def. 8]{Guillemin-Miranda-Pires-14}. We collect some of the
properties needed later
\cite{Guillemin-Miranda-Pires-14,Marcut-Torres-14b}:
\begin{remark}
    \label{Rmk: results on b-Poisson manifolds}%
    Let $(M,\Pi)$ be a $b$-Poisson manifold.  Let
    $\omega\in { }^b\Omega^2(M)$ be the unique non-degenerate closed
    $b$-form such that $\iota_\Pi\omega=1$.
    \begin{enumerate}
    \item For all $x\in M$ there exists a coordinate neighbourhood $U$
        of $x$ such that
        \begin{align}
            \Pi\big|_U
            =z\partial_z\wedge\partial_t
            +\sum_{\ell=1}^{n-1}\partial_{x^\ell}\wedge\partial_{y^\ell}\,,
            \qquad
            \omega\big|_U
            =\frac{\mathrm{d}z}{z}\wedge\mathrm{d}t
            +\sum_{\ell=1}^{n-1}\mathrm{d}x^\ell\wedge\mathrm{d}y^\ell\,,
        \end{align}
        where $z$ is a defining function for $Z\cap U$.
    \item Let
        $\omega=\alpha\wedge\frac{\mathrm{d}\zeta}{\zeta}+\sigma$ be
        the decomposition of $\omega$ according to Equation~\eqref{Eq:
          decomposition of b-forms}.  Then
        $\alpha_Z:=\iota_Z^*\alpha\in\Omega^1(Z)$ is uniquely defined
        and closed.  Moreover it is possible to choose $\alpha$ and
        $\sigma$ so that $(Z,\alpha_Z,\sigma_Z)$ is a cosymplectic
        manifold, where $\sigma_Z := \iota_Z^*\sigma\in\Omega^2(Z)$.
        In fact the inclusion $(Z, \iota_Z)\hookrightarrow(M,\Pi)$ is
        a Poisson morphism.  Finally, any representative $X\in[Y_\Pi]$
        of the modular class is tangent to $Z$ and $X|_Z$ is the Reeb
        vector field associated with $(Z,\alpha_Z,\sigma_Z)$ --
        \textit{cf.}  Equation~\eqref{Eq: isomorphism for cosymplectic
          manifolds}.
    \item For the Poisson cohomology $H_\Pi^\bullet(M)$ of $(M,\Pi)$
        one has
        \begin{align}
            \label{Eq: characterization of b-Poisson cohomology}
            H_\Pi^\bullet(M)
            \simeq\; ^bH^\bullet(M)
            \simeq
            H^\bullet(M)
            \oplus H^{\bullet-1}(Z)\,.
        \end{align}
        In particular an element $[X]\in H_\Pi^1(M)$ is always of the
        form $X=\Theta^\sharp$ for a closed $b$-form
        $\Theta\in {}^bH^1(M)$.  Under this identification the modular
        class $[Y_\Pi]$ is associated with the $b$-de Rham cohomology
        class $\left[\dfrac{\mathrm{d}\zeta}{\zeta}\right]$ for a given choice of
        a $Z$-defining function $\zeta$.
    \end{enumerate}
\end{remark}

%
%

\subsection{An Explicit Example}
\label{Sec: toy example b-cohomology class}

To make the forthcoming discussion more concrete, we will consider the
following toy example.  Let $(M,\Pi)$ be the $b$-Poisson manifold
defined by
\begin{align}\label{Ex: toy b-Poisson manifold}
    M:=\mathbb{T}\times\mathbb{R}\,,
    \qquad
    \Pi:=\sin(\theta)\partial_\theta\wedge\partial_y\,,
\end{align}
where $\theta$ denotes the quasi-global angular coordinate on
$\mathbb{T}$ as before.  Notice that
\begin{align*}
	Z
	=(\{0\}\times\mathbb{R})
	\cup(\{\pi\}\times\mathbb{R})
	=:Z_0\cup Z_\pi\,,
\end{align*}
so that this is actually an example of a $b$-Poisson manifold with
non-connected $Z$.

It is instructive to compute $H_\Pi^1(M)$ without exploiting
Remark~\ref{Rmk: results on b-Poisson manifolds}.  A vector field
$X=a\partial_\theta+b\partial_y$ where $a,b\in C^\infty(M)$ is Poisson
if and only if
\begin{align*}
    \cos(\theta)a-\sin(\theta)a_\theta+\sin(\theta)b_y=0\,,
\end{align*}
where the sub-indexes $_\theta$ and $_y$ are short notations for
$\partial_\theta,\partial_y$.  Evaluating the previous equation for
$\theta=0$ and $\theta=\pi$ we find that $a(0,y)=a(\pi,y)=0$,
therefore $a(\theta,y)=\sin(\theta)\alpha(\theta,y)$ for
$\alpha\in C^\infty(M)$.  We thus obtain
\begin{align*}
    -\sin(\theta)^2\alpha_\theta+\sin(\theta)b_y=0\,.
\end{align*}
Differentiation in $\theta$ and evaluation at $\theta=0$ and
$\theta=\pi$ leads to $b_y(0,y)=b_y(\pi,y)=0$, that is, $b_0:=b(0,y)$
and $b_\pi:=b(\pi,y)$ are constant.  If $b_0=b_\pi=0$ we have
$b(\theta,y)=\sin(\theta)\beta(\theta,y)$ for $\beta\in C^\infty(M)$,
moreover,
\begin{align*}
    -\alpha_\theta+\beta_y=0\,,
\end{align*}
which implies that $\Theta:=\alpha\mathrm{d}y+\beta\mathrm{d}\theta$ is
closed.  It follows that $[X]=[\Theta^\sharp]$ for
$[\Theta]\in\operatorname{sp}[\mathrm{d}\theta]$.  For the general
case let
\begin{align}
    \label{Eq: psi0, psipi definition}
    \psi_0(\theta):=\frac{1}{2}(1+\cos(\theta))
    =\cos(\theta/2)^2\,,
    \qquad
    \psi_\pi(\theta):=\frac{1}{2}(1-\cos(\theta))
    =\sin(\theta/2)^2\,.
\end{align}
Notice that $\psi_0\partial_y$ and $\psi_\pi\partial_y$ are Poisson
vector fields.  Moreover,
$[\psi_0\partial_y]\neq[0]\neq[\psi_\pi\partial_y]$: Indeed
$\psi_0\partial_y=X_f$ would imply
\begin{align*}
    f_y=0\,,
    \qquad
    \psi_0=-\sin(\theta)f_\theta\,.
\end{align*}
However the left-hand side of the last equation is $1$ for $\theta=0$
while the right-hand side gives $0$.  A similar argument applies for
$\psi_\pi\partial_y$.  Since $\psi_0(0)=\psi_\pi(\pi)=1$ as well as
$\psi_0(\pi)=\psi_\pi(0)=0$, we have
\begin{align*}
    [X]
    &=[\sin(\theta)\alpha\partial_\theta+(b-b_0\psi_0-b_\pi\psi_\pi)\partial_y]
    +[b_0\psi_0\partial_y]
    +[b_\pi\psi_\pi\partial_y]
    \\&=[c\mathrm{d}\theta^\sharp]
    +[b_0\psi_0\partial_y]
    +[b_\pi\psi_\pi\partial_y]\,,
\end{align*}
for $c\in\mathbb{R}$.  It then follows that
$H_\Pi^1(M)\ni X\to(b_0,b_\pi,c)\in\mathbb{R}^3$ realizes the general
isomorphism in Equation~\eqref{Eq: characterization of b-Poisson
  cohomology}.

%
%

\subsection{KMS Convex Cone}
\label{Sec: KMS convex cone}

We now investigate the convex cone of KMS functionals
$\KMS([X],\beta)$ for all $[X]\in H_\Pi^1(M)$ and $\beta> 0$.
\begin{remark}
    \label{Rmk: Z-connectedness assumption}%
    For the easy of the presentation we shall assume that both $M,Z$
    are connected.  The non-connected case would not spoil the results
    of our analysis -- \textit{cf.} Example~\ref{Ex: toy example of
      KMS for vartheta vanishing on Mz} -- however we point out the
    main differences in a series of Remarks~\ref{Rmk: Z-connectedness
      assumption for Lemma}, \ref{Rmk: Z-connectedness assumption for
      Theorem}, and \ref{Rmk: Z-connectedness assumption for
      extremality}.  Notice that if $M,Z$ are connected then
    $M_Z:=M\setminus Z$ has only two connected components
    $M_Z=M_Z^+\cup M_Z^-$.  Once a $Z$-defining function has been
    fixed -- \textit{cf.} Remark \ref{Rmk: global Z-definition
      function} -- these can be realized as
    $M_Z^\pm=\{\pm\zeta>0\}$.
\end{remark}

According to Remark~\ref{Rmk: results on b-Poisson manifolds} we
parametrize $[X]=[\Theta^\sharp]\in H_\Pi^1(M)$ with
$[\Theta]\in\,^bH^1(M)$.  Moreover, we shall consider a defining
function $\zeta$ such that
\begin{align}
    \label{Eq: decomposition of closed b-form of degree 1}
    \Theta=-a\frac{\mathrm{d}\zeta}{\zeta}+\vartheta\,,
    \qquad
    a\in\mathbb{R}\,,
    [\vartheta]\in H^1(M)\,.
\end{align}
(The minus sign is inserted so that
$a\dfrac{\mathrm{d}\zeta}{\zeta}^\sharp\in[aY_\Pi]$.)  Notice that any
other defining function would not change the $b$-de Rham class of
$[\Theta]$.
In what follows $\mu$ will denote the volume form associated with the $Z$-defining function $\zeta$.

As a first step we will consider the restriction of
$\varphi\in\KMS([\Theta^\sharp],\beta)$ to either the symplectic leaf
$M_Z:=M\setminus Z$ or the cosymplectic manifold $Z$.  This provides a
strong constraint on the chosen $\Theta$.

\begin{lemma}
    \label{Lem: non-existence of KMS functional supported on Z with dynamics containing modular class}%
    Let $\varphi\colon C^\infty_{\mathrm{c}}(M)\to\mathbb{C}$ be a
    (non-necessarily positive) distribution of order $0$ satisfying
    Equation~\eqref{Eq: classical KMS condition} for
    $X=\Theta^\sharp$, where $\Theta\in\,^b\Omega^1(M)$ decomposes as
    in Equation~\eqref{Eq: decomposition of closed b-form of degree
      1}.  If $\supp(\varphi)\subseteq Z$ and $a\neq 0$, then
    $\varphi=0$.
\end{lemma}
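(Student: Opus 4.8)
The plan is to localize near $Z$ and exploit that, when $a\neq 0$, the vector field $\Theta^\sharp$ has a component along the modular direction $\frac{\mathrm{d}\zeta}{\zeta}^\sharp$, which is transverse to $Z$ in the $b$-geometric sense and generates, after restriction, a flow that forces any $Z$-supported distribution to vanish. Concretely, writing $X = \Theta^\sharp = a\,Y_\Pi + \vartheta^\sharp$ (up to the sign convention in Equation~\eqref{Eq: decomposition of closed b-form of degree 1}), I would first recall from Remark~\ref{Rmk: results on b-Poisson manifolds} that $Y_\Pi$ restricts on $Z$ to the Reeb vector field $\xi$ of the cosymplectic structure $(Z,\alpha_Z,\sigma_Z)$, and that $\vartheta^\sharp|_Z$ is tangent to $Z$. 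Since $\varphi$ is supported on $Z$, it is determined by its behaviour under functions near $Z$, and the KMS condition relates $\varphi(\{f,g\})$ to $\beta\varphi(gX(f))$.

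The key computational step is to produce, for a given compactly supported test function $g$ near a point of $Z$, a function $f$ with $X(f) = 1$ on $\supp(g)$ while $\{f,g\}$ pairs trivially with $\varphi$. In the local model $\Pi|_U = z\partial_z\wedge\partial_t + \sum \partial_{x^\ell}\wedge\partial_{y^\ell}$ with $Z\cap U = \{z=0\}$, one has $Y_\Pi = \partial_t$ (the Reeb field), so $X = a\partial_t + (\text{tangential terms})$ near $Z$; choosing $f$ depending only on $t$, say $f = \frac{1}{a} t\, 1_g$ suitably cut off, gives $X(f)|_{\supp g} = 1$. Then $\{f,g\} = X_g(f)$, and because $f$ is (locally) a function of $t$ alone while $\partial_t$ is Hamiltonian — indeed $t$ is, up to the $b$-form $\frac{\mathrm{d}z}{z}$, the conjugate coordinate and $X_g(t)$ involves $\partial_z g$ which one can arrange to vanish, or more robustly one argues that $\{f,g\}$ restricted to $Z$ is a total $t$-derivative and integrates against the $\xi$-invariant trace to zero — the pairing $\varphi(\{f,g\})$ vanishes. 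This mirrors exactly the mechanism in the proof of Lemma~\ref{Lem: non-existence criterion for regular codim 1 Poisson manifolds} and of Theorem~\ref{Thm: KMS functional space for cosymplectic manifolds}\,\ref{Item: uniqueness of Reeb-invariant Poisson trace}, now applied transversally.

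The main obstacle I anticipate is handling the tangential part $\vartheta^\sharp$ and the fact that $f$ cannot be taken globally to be a clean function of a single coordinate: the local coordinate $t$ need not glue, and $X_g$ acting on the cutoff $1_g$ produces boundary terms. I expect the cleanest route is to use the global form of the KMS identity, Equation~\eqref{Eq: global classical KMS condition}, or rather its differential version $\varphi\circ D_{g,X} = 0$ with $D_{g,X} = X_g + \beta X(g)$, combined with the observation that $\varphi$ supported on $Z$ can be pushed to a distribution on $Z$ itself, where $X$ acts as $a\xi + (\text{cosymplectic Poisson data})$; then the argument of Theorem~\ref{Thm: KMS functional space for cosymplectic manifolds} shows that a $\xi$-component $a\neq 0$ in the dynamics is incompatible with any nonzero such distribution, because integrating the flow of $\xi$ against the Gibbs-type weight $e^{\beta a t}$ and letting $t\to\pm\infty$ (or using compactness of the relevant orbits) kills $\varphi$, just as in Example~\ref{Ex: KMS functionals for plane Poisson manifold}. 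The positivity of $\varphi$ is not needed here — only that it is a distribution of order zero — which is why the statement is phrased for general such $\varphi$.
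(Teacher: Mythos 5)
Your proposal follows essentially the same route as the paper: since $\varphi$ is of order $0$ and supported on $Z$, it descends to an order-$0$ distribution on the cosymplectic manifold $Z$ satisfying the KMS condition for $\Theta^\sharp|_Z$, which is transverse to the symplectic foliation of $Z$ precisely because $a\neq 0$, so the transversality mechanism of Lemma~\ref{Lem: non-existence criterion for regular codim 1 Poisson manifolds} (which, as you correctly note, never uses positivity) forces the restriction to vanish. The paper simply cites that lemma directly rather than re-running the local construction or the flow/Gibbs-weight variant you sketch at the end; the latter would need extra care about support control under the Reeb flow on a general $Z$, so invoking the codimension-$1$ lemma is the cleaner and fully general version of your plan.
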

\begin{proof}
    Since $\varphi$ is of order $0$ and $\supp(\varphi)\subseteq Z$,
    it follows that there exists a (non necessarily positive)
    distribution $\psi\colon C^\infty_{\mathrm{c}}(Z)\to\mathbb{C}$ of
    order $0$ such that $\varphi(f)=\psi(f|_Z)$.  (Actually $\psi$ is
    defined by $\psi(g)=\varphi(\widehat{g})$ where
    $\widehat{g}\in C^\infty_{\mathrm{c}}(M)$ is any function such
    that $\widehat{g}|_Z=g$.)  Moreover, $\psi$ satisfies the KMS
    condition \eqref{Eq: classical KMS condition}
    \begin{align*}
	\psi(\{f,g\}_Z)=\beta\psi(g\Theta^\sharp|_Z(f))\,,
        \qquad
        \forall f,g\in C^\infty_{\mathrm{c}}(Z)\,,
    \end{align*}
    where $\{\,,\,\}_Z$ is the Poisson structure induced on $Z$.
    Notice that $\Theta^\sharp|_Z\in\Gamma(TZ)$ because
    $\Theta^\sharp\in\Gamma(^bTM)$. Moreover, $\Theta^\sharp|_Z$ is transverse to
    the symplectic foliation of $Z$, because $a\neq 0$.  Thus
    Lemma~\ref{Lem: non-existence criterion for regular codim 1
      Poisson manifolds} applies and we find $\psi=0$.
\end{proof}

Lemma~\ref{Lem: non-existence of KMS functional supported on Z with
  dynamics containing modular class} shows that there cannot be KMS
functionals supported on $Z$ for dynamics $\Theta^\sharp$ where
$a\neq 0$ in Equation~\eqref{Eq: decomposition of closed b-form of
  degree 1}.  Similarly the following lemma implies that there cannot
be KMS functionals supported in $M_Z$ whenever
$[\vartheta|_{M_Z}]\neq [0]$.
\begin{lemma}
    \label{Lem: constraint on the form of KMS on M minus Z}%
    Let $[\Theta]\in\,^bH^1(M)$ and let $a\in\mathbb{R}$ and
    $[\vartheta]\in H^1(M)$ be as in Equation~\eqref{Eq: decomposition of closed b-form of degree 1}.
  	For $\varphi\in\KMS([\Theta^\sharp],\beta)$, let $\varphi_{M_Z}$ be the
    restriction of $\varphi$ to $C^\infty_{\mathrm{c}}(M_Z)$.  Then:
    \begin{enumerate}
    \item\label{Item: varphiMZ vanishes if varthetaMZ is not trivial}
        If $[\vartheta|_{M_Z}]\neq [0]$, then $\varphi_{M_Z}=0$.
    \item\label{Item: varphiMZ formula if varthetaMZ is trivial}
    	If $\vartheta|_{M_Z}=\mathrm{d}H$ for $H\in C^\infty(M_Z)$,
        then there exist two positive constants $c_\pm\in[0,+\infty)$
        such that
        \begin{align}
            \label{Eq: constrained form of KMS on M minus Z}
            \varphi_{M_Z}(f)
            =
            \sum_\pm c_\pm\int_{M_Z^\pm} fe^{-\beta H_\pm} |\zeta|^{a\beta-1}
            \mu\,,
            \qquad
            \forall f\in C^\infty_{\mathrm{c}}(M_Z)\,.
        \end{align}
        where $M_Z^\pm$ are the two connected components of $M_Z$ --
        \textit{cf.} Remark \ref{Rmk: Z-connectedness assumption} --
        and $H_\pm:=H|_{M_Z^\pm}$ while $\mu\in\Omega^{2n}(M)$ is the
        volume form associated to the chosen defining function
        $\zeta$.
    \end{enumerate}
\end{lemma}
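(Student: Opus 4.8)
The plan is to reduce the statement to the symplectic classification recalled in Remark~\ref{Rmk: KMS for symplectic manifolds} (\textit{i.e.} \cite[Thm.~8]{Bordermann-Hartmann-Romer-Waldmann-98}) applied on the symplectic locus $M_Z$. First I would observe that $\varphi_{M_Z}$ is genuinely a $(\Theta^\sharp|_{M_Z},\beta)$-KMS functional on the open symplectic submanifold $M_Z$: every $f,g\in C^\infty_{\mathrm c}(M_Z)$ may be regarded as elements of $C^\infty_{\mathrm c}(M)$ (extending by zero), the bracket $\{f,g\}$ and the vector field $\Theta^\sharp$ restrict to $M_Z$, so \eqref{Eq: classical KMS condition} on $M$ descends verbatim to $M_Z$, and positivity is inherited. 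Under the standing assumption that $M$ and $Z$ are connected, $M_Z=M_Z^+\sqcup M_Z^-$ with each $M_Z^\pm$ a connected symplectic manifold (one symplectic leaf of $M$), hence $\varphi_{M_Z}=\varphi_{M_Z^+}\oplus\varphi_{M_Z^-}$ with $\varphi_{M_Z^\pm}\in\KMS\big((\Theta|_{M_Z^\pm})^\sharp,\beta\big)$.

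The key point is that on $M_Z$ the function $\log|\zeta|$ is smooth, so $\frac{\mathrm d\zeta}{\zeta}$ is exact there; by \eqref{Eq: decomposition of closed b-form of degree 1} the de~Rham class of $\Theta|_{M_Z^\pm}$ is therefore exactly $[\vartheta|_{M_Z^\pm}]$, and since the Poisson cohomology of a symplectic manifold is its de~Rham cohomology, $\Theta^\sharp|_{M_Z^\pm}$ has Poisson class $[\vartheta|_{M_Z^\pm}]$. For item~\ref{Item: varphiMZ vanishes if varthetaMZ is not trivial}: if $[\vartheta|_{M_Z}]\neq[0]$ then $\Theta^\sharp$ is non-Hamiltonian on the relevant component(s), and \eqref{eq:KMSSymplecticCase} forces the corresponding $\varphi_{M_Z^\pm}$ to vanish. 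Concretely, one picks a loop $\gamma\subset M_Z$ with $k:=\oint_\gamma\Theta=\oint_\gamma\vartheta\neq0$, uses a Weinstein tubular neighbourhood of $\gamma$ together with a Hamiltonian $h$ affine in the $T^*\mathbb S^1$-fibre coordinate to produce an annulus of periodic Hamiltonian orbits, all of a common period $T$ and common $\Theta$-period $k$, and plugs this into the global KMS identity \eqref{Eq: global classical KMS condition} with $t=T$ to get $\varphi(f)=e^{\beta k}\varphi(f)$ for $f$ supported in that annulus; since $e^{\beta k}\neq1$ this kills $\varphi$ near $\gamma$, and then, the support of $\varphi$ being a union of leaves and $M_Z^\pm$ being a single leaf, on the whole of that component.

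For item~\ref{Item: varphiMZ formula if varthetaMZ is trivial}, if $\vartheta|_{M_Z}=\mathrm dH$ then $\Theta|_{M_Z^\pm}=\mathrm d(H_\pm-a\log|\zeta|)$, so $\Theta^\sharp|_{M_Z^\pm}=X_{h_\pm}$ with $h_\pm:=H_\pm-a\log|\zeta|$, and \eqref{Eq: Gibbs functionals} gives $\varphi_{M_Z^\pm}=c_\pm\,I_{\mu_\beta^\pm}$ with $\mu_\beta^\pm=e^{-\beta h_\pm}\mu_0^\pm=e^{-\beta H_\pm}|\zeta|^{a\beta}\mu_0^\pm$, where $\mu_0^\pm$ is the Liouville volume of $M_Z^\pm$ and $c_\pm\in[0,+\infty)$ (the value $0$ being allowed, and forced on a component where $[\vartheta|_{M_Z^\pm}]\neq[0]$). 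It then remains to compare $\mu_0^\pm$ with $\mu$: since $\iota_{\Pi^{\wedge n}}\mu=\zeta$ by Remark~\ref{Rmk: global Z-definition function} whereas $\iota_{\Pi^{\wedge n}}(\omega^{\wedge n})$ is a nonzero constant, one finds $\mu_0^\pm=\mathrm{const}\cdot|\zeta|^{-1}\mu$ on $M_Z^\pm$ (up to the orientation of $M_Z^-$); I would check this explicitly in a $b$-Darboux chart where $\omega=\frac{\mathrm dz}{z}\wedge\mathrm dt+\sum_\ell\mathrm dx^\ell\wedge\mathrm dy^\ell$. Substituting and absorbing the constant into $c_\pm$ yields \eqref{Eq: constrained form of KMS on M minus Z}. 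I expect the genuine work to be the volume bookkeeping $\mu_0^\pm=\mathrm{const}\cdot|\zeta|^{-1}\mu$ (including the sign issue on $M_Z^-$), and, for item~\ref{Item: varphiMZ vanishes if varthetaMZ is not trivial}, keeping track of which of the two components actually carries a nontrivial period of $\vartheta$ — this is precisely where the connectedness hypotheses on $M$ and $Z$ are used to organize the argument.
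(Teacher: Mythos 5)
Your proposal follows the same route as the paper: restrict $\varphi$ to the symplectic manifold $M_Z$, invoke the classification of Remark~\ref{Rmk: KMS for symplectic manifolds} on each connected component, observe that $a\frac{\mathrm{d}\zeta}{\zeta}\big|_{M_Z}=a\,\mathrm{d}\log|\zeta|$ is exact so that $[\Theta|_{M_Z}]=[\vartheta|_{M_Z}]$, and then read off Equation~\eqref{Eq: constrained form of KMS on M minus Z} from the Gibbs form \eqref{Eq: Gibbs functionals}. The only additions are your re-derivation of the symplectic non-existence statement (which the paper simply cites) and the explicit Liouville-versus-$\mu$ bookkeeping $\mu_0^\pm=\mathrm{const}\cdot|\zeta|^{-1}\mu$, which the paper leaves implicit in the phrase ``direct application''; both are correct and consistent with the paper's argument.
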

\begin{proof}
    For all $\varphi\in\KMS(\Theta^\sharp,\beta)$ we have that
    $\varphi_{M_Z}$ is a $(\Theta^\sharp|_{M_Z},\beta)$-KMS on $M_Z$.
    Since $(M_Z,\Pi|_{M_Z})$ is a symplectic manifold, Remark
    \ref{Rmk: KMS for symplectic manifolds} entails that, if
    $\varphi_{M_Z}\neq 0$, then $[\Theta|_{M_Z}]=[0]$.  As
    $a\dfrac{\mathrm{d}\zeta}{\zeta}|_{M_Z}=a\mathrm{d}\log|\zeta|$ we
    find $[0]=[\Theta|_{M_Z}]=[\vartheta|_{M_Z}]$.  This proves item
    \ref{Item: varphiMZ vanishes if varthetaMZ is not trivial}.
    Equation~\eqref{Eq: constrained form of KMS on M minus Z} is a
    direct application of
    Remark \ref{Rmk: KMS for symplectic manifolds} -- \textit{cf.}
    Equation \eqref{Eq: Gibbs functionals}.
\end{proof}
\begin{remark}
    \label{Rmk: Z-connectedness assumption for Lemma}%
    It is worth commenting what would happen to Lemma~\ref{Lem:
      constraint on the form of KMS on M minus Z} by allowing $Z$ to
    have many connected components.  Let $\{Z_k\}_k$ be the (possibly
    countably infinite) collection of connected components of $Z$ and
    let $M_{Z,c}$ be the collection of all connected components of
    $M_Z$.  For each $k$, let $Z_k^{\varepsilon_k}$ be a
    $\varepsilon_k$-tubular neighbourhood of $Z_k$ such that
    $Z_k^{\varepsilon_k}\cap Z_j^{\varepsilon_j}=\emptyset$ for at
    most finitely many $k,j$.  Moreover, since $Z_k$ has codimension
    1, we may choose $\varepsilon_k$ small enough so that
    $Z_k^{\varepsilon_k}\setminus Z$ is made by two connected
    components, say $Z_k^{\varepsilon_k,+}$ and
    $Z_k^{\varepsilon_k,-}$.  We can then find a locally defined
    $Z_k$-defining function $\zeta_k$ ---\textit{cf.} Section
    \ref{Sec: b-Poisson manifolds}--- with the property that
    $\zeta_k|_{(Z_k^{\varepsilon_k}\setminus
      Z_k^{\varepsilon_k/2})\cap Z_k^{\varepsilon_k,\pm}}=\pm 1$.

    Though $\zeta_k$ is only locally defined around $Z_k$, the
    $b$-forms $\dfrac{\mathrm{d}\zeta_k}{\zeta_k}$ are globally
    well-defined $b$-forms,
    $\dfrac{\mathrm{d}\zeta_k}{\zeta_k}\in\,^b\Omega^1(M)$ ---as a
    matter of fact $\dfrac{\mathrm{d}\zeta_k}{\zeta_k}$ vanishes
    outside $Z_k^{\varepsilon_k}$.  Such $b$-forms provide a
    representation of any $\Theta\in[\Theta]\in\,^bH^1(M)$ given by
    \begin{align*}
    	\Theta=\vartheta-\sum_k a_k\frac{\mathrm{d}\zeta_k}{\zeta_k}\,,
    \end{align*}
    where $a_k\in\mathbb{R}$ and $[\vartheta]\in H^1(M)$ ---notice
    that the sum over $k$ is locally finite.  Finally, notice that
    $|\zeta_k|$ extends to a globally defined continuous function such
    that $|\zeta_k|\Big|_{M\setminus Z_k^{\varepsilon_k/2}}=1$.

    With these preliminaries the results of Lemma~\ref{Lem: constraint
      on the form of KMS on M minus Z} would then be modified as
    follows.  If $[\vartheta|_{M_Z}]\neq[0]$ then $\varphi_{M_Z}=0$
    for all $\varphi\in\KMS([\Theta^\sharp],\beta)$.
    If instead $\vartheta|_{M_Z}=\mathrm{d}H$ for
    $H\in C^\infty(M_Z)$, then there exists
    $c=\{c_j\}\in[0,+\infty)^{M_{Z,\mathrm{c}}}$ such that
    \begin{align}
        \varphi_{M_Z}(f)
        =
        \sum_jc_j\int_{M_{Z,j}} fe^{-\beta H_j} \prod_k|\zeta_k|^{a_k\beta-1}\mu\,,
        \qquad
        \forall f\in C^\infty_{\mathrm{c}}(M_Z)\,.
    \end{align}
    where $H_j:=H|_{M_{Z,j}}$.  Notice that both the sum over $j$ as
    well as the product over $k$ are finite on account of the
    compactness of $\supp(f)$.
\end{remark}
\begin{example}
    Let $M=\mathbb{R}^2$ be the $b$-Poisson manifold defined by
    $\Pi:=\sin(z)\partial_z\wedge\partial_t$.  In this case $Z$ is
    made by infinitely many connected components $Z_k:=\{z=\pi k\}$,
    moreover, $^bH^1(M)\simeq\bigoplus\limits_{k\geq 0} H^0(Z_k)$.  It is
    tempting to factorize the defining function $\sin(z)$ by
    exploiting the product formula
    \begin{align*}
        \sin(z)=z\prod_{k\geq 1}\left[1-\frac{z^2}{(\pi k)^2}\right]\,.
    \end{align*}
    This way we decompose any $\Theta\in[\Theta]\in\,^bH^1(M)$ as
    \begin{align*}
        \Theta=-a_0\mathrm{d}\log|z|
        -\sum_{k\geq 1}a_k\mathrm{d}\log\left|1-\frac{z}{\pi k}\right|
        -\sum_{k\geq 1}a_{-k}\mathrm{d}\log\left|1+\frac{z}{\pi k}\right|\,,
    \end{align*}
    which leads to the representation of $\varphi\in\KMS([\Theta^\sharp],\beta)$ as
    \begin{align*}
        \varphi(f)=\int f(z,t) |z|^{a_0\beta-1}
        \prod_{k\geq 1}\left|1-\frac{z}{\pi k}\right|^{a_k\beta-1}
        \left|1+\frac{z}{\pi k}\right|^{a_{-k}\beta-1}
        \mathrm{d}z\mathrm{d}t\,,
    \end{align*}
    for $f\in C^\infty_{\mathrm{c}}(M_Z)$.

    Unfortunately, though these formulas are rather explicit, they do
    not fit with the previous setting.  In particular the sum defining
    $\Theta$ is not locally finite and the same holds true for the
    product appearing in the definition of $\varphi$.  Therefore, the
    formulas for $\Theta$ and $\varphi$ make sense only for particular
    choices of the sequence $\{a_k\}_k$ -- for example they converge
    whenever $\sum\limits_{k\geq 1}\dfrac{a_k}{k}$ converges.

    As a matter of fact, none of the functions in the product formula
    for $\sin z$ fulfils the property required in Remark \ref{Rmk:
      Z-connectedness assumption for Lemma}.  To cope with this
    problem we have to replace each function
    $z,1+\dfrac{z}{\pi},1-\dfrac{z}{\pi},\ldots$ with functions
    $\zeta_0,\zeta_1,\ldots$ with the properties described in Remark
    \ref{Rmk: Z-connectedness assumption for Lemma}.
\end{example}
\begin{example}
    \label{Ex: toy example non-exactness of vartheta}%
    Notice that item~\ref{Item: varphiMZ formula if varthetaMZ is
      trivial} of Lemma~\ref{Lem: constraint on the form of KMS on M
      minus Z} does not imply that $[\vartheta]=[0]$ as shown by the
    example in Section~\ref{Sec: toy example b-cohomology class}.
    Indeed, therein we have $M_Z=M_{Z,0}\cup M_{Z,\pi}$, where
    $M_{Z,0}=(0,\pi)\times\mathbb{R}$ while
    $M_{Z,\pi}=(\pi,2\pi)\times\mathbb{R}$.  Choosing
    $\Theta=\vartheta=\mathrm{d}\theta$ we clearly have
    $[\mathrm{d}\theta|_{M_{Z,0}}] =
    [\mathrm{d}\theta|_{M_{Z,\pi}}]=[0]$ while
    $[\mathrm{d}\theta]\neq[0]$.
\end{example}

Lemma~\ref{Lem: constraint on the form of KMS on M minus Z} implies
that $(\Theta^\sharp,\beta)$-KMS functionals cannot be supported on
$M_Z$ if $[\vartheta|_{M_Z}]\neq [0]$ -- \textit{cf.}
Equation~\eqref{Eq: decomposition of closed b-form of degree 1}.
Moreover, the restriction $\varphi_{M_Z}$ of any such functional is
rather explicit.  The next theorem characterizes when a given KMS
functional on $M_Z$ extends to a KMS functional on $M$.  This provides
a classification of the convex set of KMS functionals for the case
$[\vartheta|_{M_Z}]=[0]$.

For later convenience we shall prove the following technical lemma:
\begin{lemma}
    \label{Lem: non-existence of positive extension}%
    Let $d_\pm,\beta\geq 0$ and let
    $T\in C^\infty_{\mathrm{c}}(\mathbb{R}\setminus\{0\})'$ be the
    positive distribution defined by
    \begin{align*}
        T(f)
        =
        \int_{\mathbb{R} \setminus\{0\}}
        f(z)D(z)|z|^{-(1+\beta)}\mathrm{d}z\,,
    \end{align*}
    where
    \begin{equation*}
        D(z)
        =
        \begin{cases}
            d_+ & z > 0 \\
            d_- & z < 0
        \end{cases}
    \end{equation*}
    with constants $d_+, d_- \in \mathbb{R}$.  Then there is no
    extension $\widehat{T}\in C^\infty_{\mathrm{c}}(\mathbb{R})'$ of
    $T$ which is positive unless $d_+=d_-=0$.
\end{lemma}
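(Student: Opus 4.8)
The plan is to argue by contradiction, exploiting the fact that a positive extension of $T$ would have to be monotone against a fixed dominating test function. The case $d_+=d_-=0$ is trivial, since then $T=0$ and the zero functional is a positive extension; so assume without loss of generality that $d_+>0$ (the case $d_->0$ being identical after the reflection $z\mapsto -z$). I will exhibit a single function $g\in C^\infty_{\mathrm{c}}(\mathbb{R})$ together with a sequence $f_n\in C^\infty_{\mathrm{c}}(\mathbb{R}\setminus\{0\})$ such that $0\le f_n\le g$ and $T(f_n)\to+\infty$.

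For the construction I would fix $g\in C^\infty_{\mathrm{c}}(\mathbb{R})$ with $0\le g\le 1$ and $g\equiv 1$ on $[-1,1]$, and for each $n\ge 2$ a cutoff $\chi_n\in C^\infty(\mathbb{R})$ with $0\le\chi_n\le 1$, $\chi_n\equiv 0$ on $(-\infty,\tfrac{1}{2n}]$ and $\chi_n\equiv 1$ on $[\tfrac1n,+\infty)$; then I set $f_n:=\chi_n\,g$. Clearly $f_n\in C^\infty_{\mathrm{c}}(\mathbb{R})$ with $\supp(f_n)\subseteq[\tfrac{1}{2n},+\infty)\cap\supp(g)$, so $0\notin\supp(f_n)$ and indeed $f_n\in C^\infty_{\mathrm{c}}(\mathbb{R}\setminus\{0\})$; moreover $0\le f_n\le g$ and $f_n\equiv 1$ on $[\tfrac1n,1]$. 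Since $f_n$ is supported in $(0,+\infty)$, where $D\equiv d_+$, restricting the integral defining $T$ to $[\tfrac1n,1]$ gives
\[
T(f_n)
\;\ge\;
d_+\int_{1/n}^{1} z^{-(1+\beta)}\,\mathrm{d}z
\;=\;
\begin{cases}
\dfrac{d_+}{\beta}\,\bigl(n^{\beta}-1\bigr), & \beta>0,\\[1ex]
d_+\ln n, & \beta=0,
\end{cases}
\]
so that $T(f_n)\to+\infty$ as $n\to\infty$ in either case. This is the only place where the hypothesis $\beta\ge 0$ enters: for $\beta<0$ the weight $|z|^{-(1+\beta)}$ would be locally integrable at the origin and no obstruction would arise.

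Finally, suppose $\widehat{T}\in C^\infty_{\mathrm{c}}(\mathbb{R})'$ were a positive extension of $T$. Then $g-f_n\ge 0$ lies in $C^\infty_{\mathrm{c}}(\mathbb{R})$, so positivity gives $\widehat{T}(g)\ge\widehat{T}(f_n)$, while $\widehat{T}(f_n)=T(f_n)$ because $f_n\in C^\infty_{\mathrm{c}}(\mathbb{R}\setminus\{0\})$ and $\widehat T$ extends $T$. Hence $\widehat{T}(g)\ge T(f_n)\to+\infty$, contradicting the finiteness of the real number $\widehat{T}(g)$. Therefore no positive extension exists unless $d_+=d_-=0$. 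The argument is elementary and I do not expect a genuine obstacle; the only points requiring a little care are verifying that the $f_n$ actually lie in $C^\infty_{\mathrm{c}}(\mathbb{R}\setminus\{0\})$ while being dominated by the single fixed function $g$, and observing that positivity of the extension converts this domination into the quantitative bound $\widehat{T}(g)\ge T(f_n)$.
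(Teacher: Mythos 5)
Your argument is correct, and it takes a genuinely different and substantially more elementary route than the paper. Both proofs ultimately rest on the failure of local integrability of $|z|^{-(1+\beta)}$ at the origin for $\beta\geq 0$, but you exploit it through the bare monotonicity of a positive functional: since $0\le f_n\le g$ with $f_n\in C^\infty_{\mathrm{c}}(\mathbb{R}\setminus\{0\})$ and $T(f_n)\to+\infty$, any positive extension $\widehat{T}$ would satisfy $\widehat{T}(g)\ge\widehat{T}(f_n)=T(f_n)$ for every $n$, which is impossible for the fixed finite number $\widehat{T}(g)$. The paper instead first analyses the structure of all candidate extensions: it shows that $T$ extends uniquely to the subspace $C^\infty_{\mathrm{c},\lfloor\beta\rfloor}(\mathbb{R})$ of functions vanishing to order $\lfloor\beta\rfloor$ at $0$, writes any extension as this canonical piece composed with a Taylor-remainder projection plus a distribution supported at the origin encoded by finitely many constants $\widehat{T}_k$, proves at least one $\widehat{T}_\kappa$ is nonzero, and then obtains the contradiction from the divergence of $\sup_n T((1-g_n)\psi_\kappa)$ via a rather delicate case analysis ($\beta>\kappa$, $\beta=\kappa>0$, $\beta=\kappa=0$) involving repeated integration by parts. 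Your route bypasses all of that; what the longer argument buys is an explicit structural formula for would-be extensions (the analogue of the classical regularization of homogeneous distributions), which is of independent interest but not needed for the statement. Note also that your version transfers verbatim to the variant invoked in Example~\ref{Ex: bk-Poisson manifold} (weight $|z|^{k(\beta-1)}$ with $k(\beta-1)\le-1$), since the only input is non-integrability of the density at $z=0$.
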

\begin{proof}
    By contradiction, let assume that at least one among $d_+,d_-$,
    say $d_+$, is stricly positive and let
    $\widehat{T}\in C^\infty_{\mathrm{c}}(\mathbb{R})$ be a positive
    extension of $T$.  We obtain a more explicit form of
    $\widehat{T}$: Let
    $C^\infty_{\mathrm{c},\lfloor\beta\rfloor}(\mathbb{R})$ be the
    subspace defined by
    \begin{align*}
        C^\infty_{\mathrm{c},\lfloor\beta\rfloor}(\mathbb{R})
        :=
        \{
        f\in C^\infty_{\mathrm{c}}(\mathbb{R})
        \;|\;
        f^{(k)}(0)=0\;\forall k\leq\lfloor\beta\rfloor
        \}\,,
    \end{align*}
    where $^{(k)}$ is a short notation for
    $\dfrac{\mathrm{d}^k}{\mathrm{d}z^k}$ and
    $\lfloor\beta\rfloor:=\max\{n\in\mathbb{N}\,|\,n\leq\beta\}$.

    Notice that $T$ extends uniquely to a positive functional
    $T_{\lfloor\beta\rfloor}$ on
    $C^\infty_{\mathrm{c},\lfloor\beta\rfloor}(\mathbb{R})$.  Indeed,
    a direct inspection shows that $T(f)$ is actually well-defined for
    $f\in C^\infty_{\mathrm{c},\lfloor\beta\rfloor}(\mathbb{R})$.
    Moreover, given two different positive distributions
    $T_1,T_2\in C^\infty_{\mathrm{c}}(\mathbb{R})'$ extending $T$ we
    have $T_1-T_2=c\delta$ for $c\in\mathbb{R}$ and therefore
    $T_1(f)=T_2(f)$ for all
    $C^\infty_{\mathrm{c},\lfloor\beta\rfloor}(\mathbb{R})$.

    It follows that $\widehat{T}(f)=T_{\lfloor\beta\rfloor}(f)$ for
    all $f\in C^\infty_{\mathrm{c}, \lfloor\beta\rfloor}(\mathbb{R})$.
    Moreover, let $\psi\in C^\infty_{\mathrm{c}}(\mathbb{R})$ be such
    that $\psi^{(k)}(0)=\delta^k_0$ for all $k\in\mathbb{N}\cup\{0\}$.
    For $f\in C^\infty_{\mathrm{c}}(\mathbb{R})$ we consider
    $p_\psi(f)\in
    C^\infty_{\mathrm{c},\lfloor\beta\rfloor}(\mathbb{R})$ with
    \begin{align*}
        p_\psi(f)(z)
        :=f(z)
        -\sum_{k\leq\lfloor\beta\rfloor}\frac{z^k}{k!}\psi(z)f^{(k)}(0)\,.
    \end{align*}
    Uniqueness of $T_{\lfloor\beta\rfloor}$ ensures that
    \begin{align}\label{Eq: explicit form of widehatT}
        \widehat{T}(f)
        =T_{\lfloor\beta\rfloor}(p_\psi(f))
        +\sum_{k\leq\lfloor\beta\rfloor}\widehat{T}_kf^{(k)}(0)\,,
    \end{align}
    where $\widehat{T}_k=-\widehat{T}(\frac{1}{k!}z^k\psi)$.  In what
    follows we choose $\psi$ so that
    $\psi_k:=\sum\limits_{\ell=0}^{k}\dfrac{z^\ell}{\ell!}\psi\geq 0$ for all
    $k\leq\lfloor\beta\rfloor$ -- this can be achieved by shrinking
    enough $\supp(\psi)$.

    We now claim that at least one of the constants
    $\{\widehat{T}_k\}_{k\leq\lfloor\beta\rfloor}$ is non-vanishing.
    Indeed, let us assume that $\widehat{T}_k=0$ -- that is,
    $\widehat{T}(z^k\psi)=0$ -- for all $k\leq\lfloor\beta\rfloor$.
    It follows that for all $f\in C^\infty_{\mathrm{c}}(\mathbb{R})$
    with $0\leq f\leq 1$ we have $f\psi_k\geq0$ and
    \begin{align*}
        0\leq\widehat{T}(f\psi_k)=
        \widehat{T}((f-1)\psi_k)\leq 0\,.
    \end{align*}
    This implies that $\widehat{T}(\psi_k f)=0$ for all
    $f\in C^\infty_{\mathrm{c}}(\mathbb{R})$ with $f\geq 0$.
    Therefore $\widehat{T}\psi_k=0$ thus 
    $\supp(\psi_k)\cap\supp(\widehat{T})=0$.
    It follows that $0\notin\supp(\widehat{T})$ because $\psi_k(0)=1$.  However,
    this would imply $0\notin\supp(T)$ too, a contradiction.  Thus
    there is at least one non-vanishing constant $\widehat{T}_k$'s.

    Let now
    $\kappa:=\max\{k\,|\, \widehat{T}_\ell=0\;\forall\ell<
    k\}\leq\lfloor\beta\rfloor$ and let
    $\psi_{n,\kappa}:=g_n\psi_\kappa\in
    C^\infty_{\mathrm{c}}(\mathbb{R})$, where we choose
    $g\in C^\infty_{\mathrm{c}}(\mathbb{R})$ such that $0\leq g\leq 1$
    and $g=1$ in a neighbourhood of $z=0$ and set $g_n(z):=g(2^nz)$.
    Notice that $\psi_{n,\kappa}\geq 0$ as well
    as
    \begin{align*}
        \psi^{(\ell)}_{n,\kappa}(0)=
        \begin{cases}
            1 & \textrm{for } \ell\leq\kappa\\
            0 & \textrm{for } \ell>\kappa.
        \end{cases}
    \end{align*}
    It follows that $\widehat{T}(\psi_{n,\kappa})\geq 0$, which
    entails
    \begin{align*}
        \widehat{T}_\kappa
        \geq\sup_nT((1-g_n)\psi_\kappa)
        \geq 0\,,
    \end{align*}
    where we used Equation~\eqref{Eq: explicit form of widehatT}
    together with $p_\psi\psi_{n,\kappa}=(g_n-1)\psi_\kappa$.  We
    shall now prove that $\sup_nT((1-g_n)\psi_\kappa)$ diverges.  In
    what follows we shall indicate with $O(1)$ a contribution which
    stays bounded in the limit $n\to+\infty$.

    We need to distinguish between two cases: $\beta>\kappa$ or
    $\beta=\kappa$.  In the former case integration by parts leads to
    \begin{multline*}
        T((1-g_n)\psi_\kappa)
        =
        \int(1-g_n)\psi_\kappa D(z)\frac{\mathrm{d}z}{|z|^{1+\beta}}
        \\
        =\int(1-g_n) \psi_\kappa^{(\kappa+1)} D_{\kappa+1}(z)
        \frac{\mathrm{d}z}{|z|^{\beta-\kappa}}
        -
        \sum_{\substack{j+\ell=\kappa+1\\j\geq 1\,,\,\ell\geq 0}}
        \frac{(\kappa+1)!}{j!\ell!}
        2^{nj}\int g^{(j)}(2^nz)
        \psi_\kappa^{(\ell)}(z)
        D_{\kappa+1}(z)\frac{\mathrm{d}z}{|z|^{\beta-\kappa}}\,,
    \end{multline*}
    where we used that
    $(1-g_n)\psi_\kappa\in
    C^\infty_{\mathrm{c}}(\mathbb{R}\setminus\{0\})$.  Here
    $D_{\kappa+1}(z)=D_{+,\kappa}$ for $z>0$ while
    $D_{\kappa+1}(z)=D_{-,\kappa}$ for $z<0$ for suitable constants
    $D_{\pm,\kappa}>0$ whose exact values are not necessary in what
    follows.
    Notice that
    $(\psi_\kappa^{(\kappa+1)})^{(\ell)}(0)=0$ for all $\ell$,
    therefore the first contribution is $O(1)$.  We find
    \begin{align*}
        T((1-g_n)\psi_\kappa)=O(1)
        -
        \sum_{\substack{j+\ell=\kappa+1\\j\geq 1\,,\,\ell\geq 0}}
        \frac{(\kappa+1)!}{j!\ell!}
        2^{n(j-1+\beta-\kappa)}
        \int g^{(j)}(y)
        \psi_\kappa^{(\ell)}(2^{-n}y)
        D_{\kappa+1}(y)\frac{\mathrm{d}y}{|y|^{\beta-\kappa}}\,.
    \end{align*}
    Since $g^{(j)}\in C^\infty_{\mathrm{c}}(\mathbb{R}\setminus\{0\})$
    for all $j\geq 1$, it holds
    \begin{align*}
        \lim_{n\to+\infty}
        \int g^{(j)}(y) \psi_\kappa^{(\ell)}(2^{-n}y)
        D_{\kappa+1}(y)\frac{\mathrm{d}y}{|y|^{\beta-\kappa}}
        =
        \int g^{(j)}(y) D_{\kappa+1}(y)
        \frac{\mathrm{d}y}{|y|^{\beta-\kappa}}
        =
        O(1)\,,
    \end{align*}
    where we used that $\psi_\kappa^{(\ell)}(0)=1$ for all
    $\ell\leq\kappa$.  The previous integral is not vanishing for a
    suitable choice of $g$.  It follows that
    \begin{align*}
        T((1-g_n)\psi_\kappa)
        =O(1)
        +\alpha 2^{n(\beta-\kappa)}\,,
    \end{align*}
    for $\alpha>0$ because $T((1-g_n)\psi_\kappa)\geq 0$.  This shows
    that $\sup_n T((1-g_n)\psi_\kappa)=+\infty$ when $\beta-\kappa>0$.

    Whenever $\beta=\kappa>0$ the same procedure leads to
    \begin{align*}
        T((1-g_n)\psi_\kappa)
        &=
        O(1)
        -
        \int_{\mathbb{R}}
        \sum_{\substack{j+\ell=\kappa+1\\j\geq 1\,,\,\ell\geq 0}}
        \frac{(\kappa+1)!}{j!\ell!}
        2^{nj}g^{(j)}(2^nz)
        \psi_\kappa^{(\ell)}(z)
        C_{\kappa+1}(z)\log|z|\mathrm{d}z\\
        &=
        O(1)
        +
        \alpha_1 n+\alpha_2 2^{n}\,,
    \end{align*}
    where the linear contribution arises from the term corresponding
    to $j=1$ in the previous sum.  Similarly when $\beta=\kappa=0$ we
    have $T((1-g_n)\psi_\kappa)=O(1)+\alpha n$, therefore, in all
    cases $\sup_n T((1-g_n)\psi_\kappa)=+\infty$.  It follows that
    there are no positive extensions $\widehat{T}$ of $T$.
\end{proof}
\begin{theorem}
    \label{Thm: KMS functionals for vartheta Hamiltonian in Mz}%
    Let $[\Theta]\in\,^bH^1(M)$ and let $a\in\mathbb{R}$ and
    $[\vartheta]\in H^1(M)$ be as in Equation~\eqref{Eq: decomposition
      of closed b-form of degree 1} with
    $\vartheta|_{M_Z}=\mathrm{d}H$ and $H\in C^\infty(M_Z)$.  Let
    $\mu\in\Omega^{2n}(M)$ be the volume form
    associated to the chosen defining function $\zeta$ and let
    $M_Z^\pm$ be the two connected
    components of $M_Z$.  Then:
    \begin{enumerate}
    \item\label{Item: KMS functionals for modular class in b-Poisson
          manifolds} There is an isomorphism of convex cones
        \begin{align}
            \label{Eq: KMS functionals for modular class in b-Poisson manifolds}
            \KMS([\Theta^\sharp],\beta)
            \simeq
            \begin{cases}
                \{0\}
                & \textrm{for } a<0
                \\
                [0,+\infty)^2
                & \textrm{for } a>0,
            \end{cases}
        \end{align}
		In particular, if $a>0$ and for all $c_\pm\geq 0$, the unique KMS
        functional $\varphi\in\KMS([\Theta^\sharp],\beta)$
        associated with $c_\pm$ is given by
        \begin{align}
            \label{Eq: KMS associated with positive function on connected components of MZ}
            \varphi(f)
            =
            \sum_\pm c_\pm \int_{M_Z^\pm}
            fe^{-\beta H_\pm}|\zeta|^{a\beta-1}\mu\,,
            \qquad
            \forall f\in C^\infty_{\mathrm{c}}(M)\,,
        \end{align}
        where $H_\pm:=H|_{M_Z^\pm}$.
    \item\label{Item: support constraint for b-KMS for a=0}
    If $\varphi \in \KMS(M, \Pi)$ then $\supp(\varphi)\subseteq Z$.
    \end{enumerate}
\end{theorem}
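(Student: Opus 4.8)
The plan is to read off the restriction $\varphi_{M_Z}$ from Lemma~\ref{Lem: constraint on the form of KMS on M minus Z} and then decide, for each sign of $a$, whether the two nonnegative parameters $c_\pm$ can survive an extension to a positive functional on $M$, and whether such an extension is unique. By Lemma~\ref{Lem: constraint on the form of KMS on M minus Z}, under the hypothesis $\vartheta|_{M_Z}=\mathrm{d}H$ every $\varphi\in\KMS([\Theta^\sharp],\beta)$ satisfies \eqref{Eq: constrained form of KMS on M minus Z} for unique $c_\pm\in[0,+\infty)$, and $\varphi\mapsto(c_+,c_-)$ is linear and positively homogeneous. Uniqueness of the extension will follow since if two KMS functionals $\varphi,\varphi'$ share the same $(c_+,c_-)$ then $\varphi-\varphi'$ is a (not necessarily positive) distribution of order $0$ solving \eqref{Eq: classical KMS condition} for $\Theta^\sharp$ and vanishing on $C^\infty_{\mathrm{c}}(M_Z)$, hence supported on $Z$; since $a\neq 0$ when $a>0$, Lemma~\ref{Lem: non-existence of KMS functional supported on Z with dynamics containing modular class} forces $\varphi=\varphi'$. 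So the entire content of item~\ref{Item: KMS functionals for modular class in b-Poisson manifolds} is: for $a<0$ only $c_+=c_-=0$ is admissible, and for $a>0$ every $(c_+,c_-)$ is realized by the explicit functional \eqref{Eq: KMS associated with positive function on connected components of MZ}.

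For the case $a<0$ — and likewise for item~\ref{Item: support constraint for b-KMS for a=0}, where the hypotheses force $a=0$ and $H$ to extend smoothly — the key is that relative to a local defining function the density $|\zeta|^{a\beta-1}$ has exponent $a\beta-1\le-1$, hence fails to be integrable transversally to $Z$. Concretely, fix $p\in Z$, pick $b$-Darboux coordinates around $p$ as in Remark~\ref{Rmk: results on b-Poisson manifolds}, write $\zeta=\lambda z$ with $\lambda$ nowhere vanishing and $\mu=J\,\mathrm{d}z\wedge\cdots$ with $J>0$, and use the Poincaré lemma to replace $H_\pm$ by a smooth function plus a side-dependent constant; then the measure $\mu_\varphi$ of Remark~\ref{Rmk: on definition of KMS functional} has on a small box $B$ intersected with $\{\pm z>0\}$ density $G_\pm(z,\cdot)\,|z|^{a\beta-1}$ with $G_\pm$ smooth and strictly positive if $c_\pm>0$. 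If, say, $c_+>0$, then $G_+\ge m>0$ on $B$ and $\mu_\varphi(\overline{B})\ge m\,\mathrm{vol}(B')\int_0^\delta z^{a\beta-1}\,\mathrm{d}z=+\infty$, contradicting that $\mu_\varphi$ is a Radon measure; hence $c_\pm=0$. (Equivalently, testing against a bump in the leaf directions reduces this to the non-extendability statement of Lemma~\ref{Lem: non-existence of positive extension}.) Thus $\supp(\varphi)\subseteq Z$, which already settles item~\ref{Item: support constraint for b-KMS for a=0}; and when $a<0$ we then invoke Lemma~\ref{Lem: non-existence of KMS functional supported on Z with dynamics containing modular class} to get $\varphi=0$, so $\KMS([\Theta^\sharp],\beta)=\{0\}$.

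For $a>0$ one has $a\beta-1>-1$, so $|\zeta|^{a\beta-1}$ is locally integrable and, for any $c_\pm\ge0$, the right-hand side of \eqref{Eq: KMS associated with positive function on connected components of MZ} defines a positive Radon measure $\mu_\varphi$ on $M$ with $\mu_\varphi(Z)=0$. On each symplectic leaf $M_Z^\pm$ one has $\Theta|_{M_Z^\pm}=\mathrm{d}(H_\pm-a\log|\zeta|)$, so $\Theta^\sharp|_{M_Z^\pm}$ is the Hamiltonian vector field of $H_\pm-a\log|\zeta|$; since the Liouville volume of $M_Z^\pm$ equals $|\zeta|^{-1}\mu$ up to a nonzero constant (apply $\iota_{\Pi^{\wedge n}}$ to both volume forms and use Remark~\ref{Rmk: global Z-definition function}), the restriction of $\mu_\varphi$ to $M_Z^\pm$ is proportional to the Gibbs functional $e^{-\beta(H_\pm-a\log|\zeta|)}$ times Liouville volume, hence is KMS on $M_Z$ by Remark~\ref{Rmk: KMS for symplectic manifolds} (cf. \eqref{Eq: Gibbs functionals}). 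It remains to propagate \eqref{Eq: classical KMS condition} across $Z$: for $f,g\in C^\infty_{\mathrm{c}}(M)$ take cutoffs $\chi_k=\rho(\zeta/\varepsilon_k)$ vanishing near $Z$ with $\zeta\,\mathrm{d}\chi_k$ uniformly bounded, apply the $M_Z$-KMS identity to $\chi_k f$, and let $k\to\infty$; since $\Theta^\sharp$ and all Hamiltonian vector fields are tangent to $Z$ — so $\Theta^\sharp(\chi_k)$ and $X_g(\chi_k)$ are uniformly bounded and supported in shrinking neighbourhoods of $Z$ — dominated convergence together with $\mu_\varphi(Z)=0$ kills the error terms. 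Hence $\varphi\mapsto(c_+,c_-)$ is a bijection of convex cones onto $[0,+\infty)^2$ — using Remark~\ref{Rmk: Z-connectedness assumption}, that $M_Z$ has exactly the two components $M_Z^\pm$ — which is item~\ref{Item: KMS functionals for modular class in b-Poisson manifolds}.

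The main obstacle is the local analysis near $Z$ in the case $a\le0$: one must be certain that the transverse singularity $|\zeta|^{a\beta-1}$ is not accidentally tamed, which relies on $H_\pm$ being bounded near $Z$ locally (the Poincaré lemma, using that $\vartheta$ is an honest smooth form, so $H_\pm$ equals a smooth function plus a locally constant one) and on the comparison factors between $\mu$, the Darboux volume, and $|\zeta|$ being bounded away from $0$. The secondary subtlety, in the case $a>0$, is the cutoff argument: a naïve cutoff near $Z$ would spoil the Poisson bracket, and the resolution hinges on the $b$-geometric fact that Hamiltonian and Poisson vector fields are tangent to $Z$, which makes the quantities $\zeta\,\mathrm{d}\chi_k$ controlled.
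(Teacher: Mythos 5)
Your proposal is correct, and it keeps the paper's overall skeleton — Lemma~\ref{Lem: constraint on the form of KMS on M minus Z} to pin down $\varphi_{M_Z}$ up to $(c_+,c_-)$, and Lemma~\ref{Lem: non-existence of KMS functional supported on Z with dynamics containing modular class} for uniqueness of the extension when $a\neq 0$ — but it departs from the paper at both technical cores, in each case defensibly. For $a\le 0$ (and for item~\ref{Item: support constraint for b-KMS for a=0}) the paper localizes to a Darboux chart and invokes the rather elaborate Lemma~\ref{Lem: non-existence of positive extension} (Taylor subtraction, the sequences $\psi_{n,\kappa}$, divergence estimates); you instead use that positivity already makes $\varphi$ a locally finite Radon measure $\mu_\varphi$ (Remark~\ref{Rmk: on definition of KMS functional}) whose restriction to the open set $M_Z$ is the prescribed density, so $c_\pm>0$ would give $\mu_\varphi(\overline{B})=+\infty$ for a compact box meeting $Z$, since $|z|^{a\beta-1}$ with $a\beta-1\le -1$ is not transversally integrable. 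This is shorter and also covers the $b^k$-variant in Example~\ref{Ex: bk-Poisson manifold}; the one point you must (and do) secure is that $e^{-\beta H_\pm}$ is bounded below near $Z$, which your local Poincar\'e-lemma observation ($H$ equals a smooth primitive of $\vartheta$ plus side-dependent constants) settles exactly, whereas the paper simply normalizes $H=0$ in that branch. For $a>0$ your write-up is in fact more complete than the paper's: the paper checks local integrability of $e^{-\beta H}|\zeta|^{a\beta-1}\mu$ and then asserts that $\varphi_{M_Z}$ ``extends to a KMS functional,'' but verifying \eqref{Eq: classical KMS condition} for test functions whose supports meet $Z$ does require an argument; your cutoff scheme $\chi_k=\rho(\zeta/\varepsilon_k)$ — with the error terms $\varphi(fX_g(\chi_k))$ and $\varphi(fg\,\Theta^\sharp(\chi_k))$ controlled because Hamiltonian vector fields and $\Theta^\sharp$ are tangent to $Z$ (so these derivatives stay uniformly bounded and are supported in shrinking neighbourhoods of $Z$) and because $\mu_\varphi(Z)=0$ — supplies precisely that missing step.
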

\begin{proof}
    We consider the cases $a>0$ and $a\leq 0$ separately:
    \begin{description}
    \item[$\boxed{a > 0\colon}$]  Let $\varphi\in\KMS([\Theta^\sharp],\beta)$.
        On account of Lemma~\ref{Lem: constraint on the form of KMS on
          M minus Z} there exists a unique pair $(c_+,c_-)\in[0,+\infty)$ such that
        Equation~\eqref{Eq: constrained form of KMS on M minus Z}
        holds true.  This leads to a map of convex cones
        \begin{align*}
            \Phi\colon\KMS([\Theta^\sharp],\beta)\to[0,+\infty)^2\,,
            \qquad
            \varphi\mapsto(c_+,c_-)\,.
        \end{align*}
        We now show that $\Phi$ is bijective.  Indeed $\Phi$ is
        injective: if $\Phi(\varphi)=(0,0)$, then
        $\supp(\varphi)\subseteq Z$ and therefore $\varphi=0$ because
        of Lemma~\ref{Lem: non-existence of KMS functional supported
          on Z with dynamics containing modular class}.

        Moreover, $\Phi$ is surjective: for all
        $c_\pm\in[0,+\infty)$, let
        $\varphi_{M_Z}\colon C^\infty_{\mathrm{c}}(M_Z)\to\mathbb{C}$
        be the linear functional defined by Equation~\eqref{Eq:
          constrained form of KMS on M minus Z}.  Notice that
        $|\zeta|^{a\beta-1}\mu$ is locally integrable on $M$ as one
        can see by considering a Darboux chart -- \textit{cf.}
        Remark \ref{Rmk: results on b-Poisson manifolds}.
        Moreover also $e^{-\beta H}\mu$ is locally integrable, though
        $H$ may diverge at $Z$.  To see this we exploit a Darboux
        chart $U\ni(z,x) = (z,x^1,\ldots,x^{2n-1})$ near $Z$ --
        \textit{cf.} Remark \ref{Rmk: results on b-Poisson manifolds}.
        On account of Equation~\eqref{Eq: decomposition
          of b-cotangent bundle} and of Remark~\ref{Rmk: on
          decomposition of b-forms} we have that
        $\lim\limits_{z\to 0}z\partial_zH=0$.  Thus, for all
        $\varepsilon>0$ there exists $\delta>0$ such that
        $|z\partial_zH(z,x)|<\varepsilon$ if $|z|<\delta$ while $x$
        ranges in a compact set.  Considering $0<|z_0|<\delta$ we have
	\begin{align*}
            |H(z,x)|
            \leq
            |H(z_0,x)|
            +
            \int_{z_0}^z|z\partial_zH(z,x)|\frac{\mathrm{d}z}{|z|}
            \leq
            |H(z_0,x)|
            +
            \varepsilon\log\bigg|\frac{z}{z_0}\bigg|\,,
	\end{align*}
	for $0<|z|<\delta$.  Therefore $e^{-\beta H(z,x)}\leq 1$ for
        points where $H(z,x)\geq 0$ while
        $e^{-\beta H(z,x)} = e^{\beta|H(z,x)|}\leq C(x)z^\varepsilon$
        when $H(z,x)\leq 0$.  It follows that $e^{-\beta H}$ is
        locally integrable nearby $Z$.

	Overall we find that $\varphi_{M_Z}$ extends to a KMS
        functional $\varphi\in\KMS([\Theta^\sharp],\beta)$.  The extension
        is unique on account of Lemma~\ref{Lem: non-existence of KMS
          functional supported on Z with dynamics containing modular
          class}.
    \item[$\boxed{a\leq 0\colon}$] For simplicity, we shall set $H=0$
        and $a=-1$ in the forthcoming discussion, so that
        $\Theta^\sharp=-Y_\mu$.  Different values for $H$ and $a$
        would not spoil the final result.  Let
        $\varphi\in\KMS(-Y_\mu,\beta)$.  Notice that, by allowing
        $\beta=0$, the following argument will also prove item
        \ref{Item: support constraint for b-KMS for a=0}, that is, it also covers the case $a=0$.
        Lemma~\ref{Lem: constraint on the form of KMS on M minus Z}
        entails that there exists $c_+,c_-\in[0,+\infty)$ such that,
        for all $f\in C^\infty_{\mathrm{c}}(M_Z)$,
	\begin{align*}
            \varphi(f)=\sum_\pm c_\pm\int_{M_Z^\pm}f|\zeta|^{-\beta-1}\mu\,.
	\end{align*}
	We now show that there are no positive extensions of
        $\varphi_{M_Z}$ to a functional on $C^\infty_{\mathrm{c}}(M)$:
        this entails that $\varphi=0$.  By contradiction, let
        $\widehat{\varphi}\colon
        C^\infty_{\mathrm{c}}(M)\to\mathbb{C}$ be a positive
        functional such that $\widehat{\varphi}(f)=\varphi(f)$ for all
        $f\in C^\infty_{\mathrm{c}}(M)$.

	The idea is to localize $\widehat{\varphi}$ on a Darboux
        chart, reducing the problem to the extension of a positive
        distribution on $C^\infty_{\mathrm{c}}(\mathbb{R})$.  For
        that, let consider a (small enough) local Darboux chart
        $U\stackrel{\phi}{\to}\mathbb{R}^{2n}$ near $Z$ and consider
        the restriction of $\widehat{\varphi}$ on functions supported
        in that chart.  This leads to a functional
        $\widetilde{\varphi}\colon
        C^\infty_{\mathrm{c}}(\mathbb{R}^{2n})\to\mathbb{C}$ such that
	\begin{align*}
            \widetilde{\varphi}(f)
            =
            \int_{\mathbb{R}^{2n}}
            f(z,x) C(z)|z|^{-\beta-1} \rho(z,x)
            \mathrm{d}z \mathrm{d}^{2n-1}x\,,
	\end{align*}
	where $C(z)=c'_\pm$ for $\pm z>0$ being $c'_\pm>0$, while
        $\rho>0$ is smooth and strictly positive.  By a suitable
        redefinition of $\zeta$ -- and thus of $\mu$ -- we can assume
        that $\rho(z,x)=\rho(x)$ is constant in $z$.  Let now
        $h\in C^\infty_{\mathrm{c}}(\mathbb{R}^{2n-1})$ be positive
        and let us consider the positive distribution
        $\widehat{T}\in C^\infty_{\mathrm{c}}(\mathbb{R})'$ defined by
        $\widetilde{T}(f):=\widetilde{\varphi}(fh)$.  In particular we
        find that $\widetilde{T}\supseteq T$ where
        $T\in C^\infty_{\mathrm{c}}(\mathbb{R}\setminus\{0\})'$ is
        defined by
	\begin{align*}
            T(f)
            :=
            \int_{\mathbb{R}}
            f(z) D(z)|z|^{-\beta-1}\mathrm{d}z\,,
	\end{align*}
	where $D(z)=d_+1_{\mathbb{R}_+}(z)+d_-1_{\mathbb{R}_-}(z)$ for
        $d_\pm=c'_\pm\int_{\mathbb{R}^{2n-1}}h(x)\rho(x)\mathrm{d}^{2n-1}x$.
        We then apply Lemma~\ref{Lem: non-existence of positive
          extension} which entails that $c_+=c_-=0$.
    This proves that $\supp(\varphi)\subseteq Z$: Lemma~\ref{Lem:
          non-existence of KMS functional supported on Z with dynamics
          containing modular class} entails that $\varphi=0$.
    \end{description}
\end{proof}

\begin{remark}
    \label{Rmk: Z-connectedness assumption for Theorem}%
    With reference to Remark \ref{Rmk: Z-connectedness assumption for
      Lemma} we comment on the generalization of Theorem \ref{Thm: KMS
      functional space for b-Poisson manifolds} to the case of
    non-connected $Z$.  For each connected component $Z_k$ of $Z$ let
    $M_{Z,Z_k}\subseteq M_{Z,c}$ be the collection of all connected
    components of $M_Z$ whose closure in $M$ have non-empty
    intersection with $Z_k$.  Moreover, for any sequence $a=\{a_k\}_k$
    let $M_{Z,\{a_k\}}$ be defined by
	\begin{align*}
            M_{Z,\{a_k\}}
            :=
            \left\{
                N\in M_{Z,c}
                \,\big|\,
                \forall k\colon
                N\in M_{Z,Z_k}\Rightarrow a_k>0
            \right\}\,.
	\end{align*}
	The results of Theorem \ref{Thm: KMS functional space for
          b-Poisson manifolds} are then generalized as follows: Item
        \ref{Item: support constraint for b-KMS for a=0} remains
        uneffected whereas Item \ref{Item: KMS functionals for modular
          class in b-Poisson manifolds} is replaced by
	\begin{align*}
		\KMS([\Theta^\sharp],\beta)
		\simeq
		[0,+\infty)^{M_{Z,\{a_k\}}}\,.
	\end{align*}
\end{remark}

Out of Equation~\eqref{Eq: KMS functionals for modular class in
  b-Poisson manifolds} we can identify those KMS functionals which
leads to extreme rays -- \textit{cf.} Section \ref{Sec: KMS convex
  cone}.  The case $a\leq 0$ is trivial while for $a>0$ we have the
following result.
\begin{corollary}
    \label{Cor: extremal ray KMS functionals for vartheta Hamiltonian in Mz}%
    In the hypothesis of Theorem~\ref{Thm: KMS functionals for
      vartheta Hamiltonian in Mz} and assuming $a>0$ let
    $\varphi\in\KMS([\Theta^\sharp],\beta)$.
    Let $\varphi_+,\varphi_-\in\KMS([\Theta^\sharp],\beta)$ be defined by
    Equation \eqref{Eq: KMS associated with positive function on
      connected components of MZ} with the particular choice $(1,0)$
    and $(0,1)$.  Then $\varphi\in\KMS([\Theta^\sharp],\beta)$ is extremal
    if and only if either $\varphi\in\mathbb{R}_+\varphi_+$ or
    $\varphi\in\mathbb{R}_+\varphi_-$.
\end{corollary}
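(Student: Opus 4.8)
The plan is to reduce the statement to the elementary classification of the extreme rays of the first quadrant $[0,+\infty)^2$, transporting everything along the isomorphism of convex cones furnished by Theorem~\ref{Thm: KMS functionals for vartheta Hamiltonian in Mz}.

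First I would make precise that the map
\[
  \Phi\colon\KMS([\Theta^\sharp],\beta)\longrightarrow[0,+\infty)^2\,,\qquad \varphi\longmapsto(c_+,c_-)\,,
\]
appearing in the proof of Theorem~\ref{Thm: KMS functionals for vartheta Hamiltonian in Mz} is an isomorphism of convex cones. Bijectivity is exactly what is proved there; positive homogeneity is clear from Equation~\eqref{Eq: KMS associated with positive function on connected components of MZ}; and additivity follows from Lemma~\ref{Lem: constraint on the form of KMS on M minus Z}, since the pair $(c_+,c_-)$ is uniquely determined by the restriction $\varphi_{M_Z}$ and $(\varphi+\varphi')_{M_Z}=\varphi_{M_Z}+\varphi'_{M_Z}$. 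By construction $\Phi(\varphi_+)=(1,0)$ and $\Phi(\varphi_-)=(0,1)$, and $\Phi^{-1}$ is again additive and positively homogeneous.

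Next I would observe that an isomorphism of convex cones carries extremal elements to extremal elements: if $\varphi$ is extremal and $\Phi(\varphi)=v_1+v_2$ with $v_1,v_2\in[0,+\infty)^2$, then $\varphi=\Phi^{-1}(v_1)+\Phi^{-1}(v_2)$ is a decomposition in $\KMS([\Theta^\sharp],\beta)$, so $\Phi^{-1}(v_i)=\lambda_i\varphi$ for some $\lambda_i>0$, whence $v_i=\lambda_i\Phi(\varphi)$; the reverse implication uses $\Phi^{-1}$ in the same way. Hence $\varphi$ is extremal in $\KMS([\Theta^\sharp],\beta)$ if and only if $\Phi(\varphi)$ generates an extreme ray of $[0,+\infty)^2$.

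Finally I would record the elementary fact that the extreme rays of $[0,+\infty)^2$ are precisely $\mathbb{R}_+(1,0)$ and $\mathbb{R}_+(0,1)$: if $x,y>0$ then $(x,y)=(x,0)+(0,y)$ is a decomposition in which neither summand is a positive multiple of $(x,y)$, while if one coordinate vanishes, say the second, any decomposition $(x,0)=(u_1,v_1)+(u_2,v_2)$ with $u_i,v_i\ge 0$ forces $v_1=v_2=0$, so both summands lie on $\mathbb{R}_+(1,0)$. Pulling this back through $\Phi^{-1}$ and using $\Phi^{-1}(\mathbb{R}_+(1,0))=\mathbb{R}_+\varphi_+$ and $\Phi^{-1}(\mathbb{R}_+(0,1))=\mathbb{R}_+\varphi_-$ yields the claim. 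There is no genuine obstacle here: once $\Phi$ is recognised as a cone isomorphism — the only point requiring (easy) verification, via the uniqueness in Lemma~\ref{Lem: constraint on the form of KMS on M minus Z} — the corollary is a formal consequence of the structure of the first quadrant.
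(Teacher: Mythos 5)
Your argument is correct, and it differs from the paper's in the half that matters. For the ``only if'' direction (a functional with $c_+c_->0$ is not extremal) you and the paper do the same thing: write $\varphi=c_+\varphi_++c_-\varphi_-$. For the ``if'' direction, the paper proves extremality of $\varphi_\pm$ directly: given $\varphi_+=\varphi_1+\varphi_2$, positivity forces $\supp(\varphi_i)\subseteq\overline{M_Z^+}$, and then the symplectic classification of Remark~\ref{Rmk: KMS for symplectic manifolds} on the connected symplectic manifold $M_Z^+$ (together, implicitly, with Lemma~\ref{Lem: non-existence of KMS functional supported on Z with dynamics containing modular class} to rule out a residual piece on $Z$) yields proportionality. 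You instead upgrade the bijection $\Phi\colon\varphi\mapsto(c_+,c_-)$ from the proof of Theorem~\ref{Thm: KMS functionals for vartheta Hamiltonian in Mz} to an isomorphism of convex cones --- additivity being exactly the uniqueness of $(c_+,c_-)$ given $\varphi_{M_Z}$ from Lemma~\ref{Lem: constraint on the form of KMS on M minus Z}, and injectivity already encoding the ``no mass on $Z$'' lemma --- and then transport the elementary description of the extreme rays of $[0,+\infty)^2$. This is a legitimate shortcut: it makes the corollary a purely formal consequence of the theorem and avoids re-running the support-plus-symplectic-uniqueness argument, at the cost of having to verify (as you do) that $\Phi$ and $\Phi^{-1}$ are additive, a point the theorem's statement asserts but its proof only checks at the level of bijectivity. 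The paper's route is more self-contained and geometric; yours is shorter and generalizes immediately to the non-connected-$Z$ setting of Remark~\ref{Rmk: Z-connectedness assumption for extremality}. The only cosmetic caveat, shared by both proofs, is the degenerate decomposition $\varphi=\varphi+0$, which the paper's definition of extremality (requiring $\lambda_i>0$) handles awkwardly; this is an artifact of the definition, not of your argument.
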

\begin{proof}
    We prove the two implications:
    Let $\varphi\in\KMS([\Theta^\sharp],\beta)$ be associated with
    $c_+,c_-\in [0,+\infty)$ -- \textit{cf.} Equation~\eqref{Eq: KMS
      associated with positive function on connected components of MZ}
    -- such that $c_+c_->0$.  We observe that
    $\varphi=c_+\varphi_++c_-\varphi_-$, where $\varphi_+$,
    $\varphi_-$ are defined by Equation \eqref{Eq: KMS associated with
      positive function on connected components of MZ} with the
    particular choice $(1,0)$ and $(0,1)$.  Then
    $\varphi_\pm\in\KMS([\Theta^\sharp],\beta)$ and
    $\varphi_\pm\notin\mathbb{R}_+\varphi$: it follows that $\varphi$
    is not extremal.

    Conversely we shall prove that $\varphi_+$ are extremal -- the
    proof for $\varphi_-$ is analogous.
    Let $\varphi_1,\varphi_2\in\KMS([\Theta^\sharp],\beta)$ be such that
    $\varphi_+=\varphi_1+\varphi_2$: We shall prove that
    $\varphi_1,\varphi_2\in\mathbb{R}_+\varphi$.  For all
    $f\in C^\infty_{\mathrm{c}}(M\setminus \overline{M_Z^+})$ we have
    \begin{align*}
        0
        =
        \varphi_+(f)
        =
        \varphi_1(f) + \varphi_2(f)\,.
    \end{align*}
    Positivity of $\varphi_1,\varphi_2$ entails that
    $\varphi_1(f)=\varphi_2(f)=0$ for all
    $f\in C^\infty_{\mathrm{c}}(M \setminus \overline{M_Z^+})$ so that
    $\supp(\varphi_1)\cup\supp(\varphi_2)\subseteq M_Z^+$.  This
    entails in particular that $\varphi_1,\varphi_2$ are
    $(\Theta^\sharp|_{M_Z^+},\beta)$-KMS on $M_Z^+$, where
    $\Theta:=-a\dfrac{\mathrm{d}\zeta}{\zeta}+\vartheta$.  As $M_Z^+$
    is a connected symplectic manifold, Remark~\ref{Rmk: KMS for
      symplectic manifolds} entails that there exists
    $\lambda_1,\lambda_2\in\mathbb{R}_+$ such that
    $\varphi_+=\lambda_1\varphi_1=\lambda_2\varphi_2$.
\end{proof}
\begin{remark}
	\label{Rmk: Z-connectedness assumption for extremality}%
	Corollary~\ref{Cor: extremal ray KMS functionals for vartheta
          Hamiltonian in Mz} can be proved also for the case of
        non-connected $Z$.  With reference to Remarks~\ref{Rmk:
          Z-connectedness assumption for Lemma} and \ref{Rmk:
          Z-connectedness assumption for Theorem} we find that the
        extremal elements in $\KMS([\Theta^\sharp],\beta)$ are those
        supported on a single $N\in M_{Z,\{a_k\}}$.
\end{remark}
\begin{example}
    \label{Ex: toy example of KMS for vartheta vanishing on Mz}%
    With reference to the example in Section~\ref{Sec: toy example
      b-cohomology class} we investigate the convex cone
    $\KMS([X],\beta)$, providing results in accordance with
    Theorem~\ref{Thm: KMS functionals for vartheta Hamiltonian in Mz}
    -- see also Remark~\ref{Rmk: Z-connectedness assumption for
      Theorem}.  This also discusses an example of KMS functionals for
    a $b$-Poisson manifold with non-connected $Z$.  As explained in
    that example any Poisson vector field $X$ is of the form, up to
    Hamiltonian vector fields,
    \begin{align*}
        X=a_\vartheta\mathrm{d}\theta^\sharp
        +a_0\psi_0(\theta)\partial_y
        +a_\pi\psi_\pi(\theta)\partial_y\,,
    \end{align*}
    for suitable constants $c_\vartheta,c_0,c_\pi\in\mathbb{R}$ --
    here $\psi_0,\psi_\pi$ have been defined in Equation~\eqref{Eq:
      psi0, psipi definition}.  It follows that, on $M_Z$, the vector
    field $X$ is Hamiltonian $X = X_H$, with
    \begin{align*}
        H(\theta)
        =
        \begin{cases}
            \alpha_0
            + a_\vartheta \theta
            - a_0\log|\sin(\theta/2)|
            - a_\pi\log|\cos(\theta/2)|
            & \textrm{for } 0<\theta<\pi
            \\
            \alpha_\pi
            + a_\vartheta \theta
            - a_0\log|\sin(\theta/2)|
            - a_\pi\log|\cos(\theta/2)|
            & \textrm{for } \pi<\theta<2\pi,
        \end{cases}
    \end{align*}
    where $\alpha_0, \alpha_\pi \in \mathbb{R}$.  This entails that
    the restriction $\varphi_{M_Z}=M_{Z,0}\cup M_{Z,\pi}$ to $M_Z$ of
    any $\varphi\in\KMS(X,\beta)$ can be written as
    \begin{align*}
        \varphi_{M_Z}(f)
        =
        \sum_{j\in\{0,\pi\}}c_j\int_{M_{Z,j}}
        f(\theta,y)e^{-\beta a_\vartheta \theta}
        |\sin(\theta/2)|^{a_0\beta-1}
        |\cos(\theta/2)|^{a_\pi\beta-1}
        \mathrm{d}\theta\mathrm{d}y\,,
    \end{align*}
    for all $f\in C^\infty_{\mathrm{c}}(M_Z)$, where $c_0, c_\pi > 0$
    -- the constants $\alpha_0, \alpha_\pi$ have been absorbed into
    the definition of $c_0, c_\pi$.  The latter functional admits a
    positive extension to $M$ if and only if $a_0, a_\pi > 0$.  This
    provides an isomorphism of convex cones
    $\KMS([X],\beta)\simeq [0,+\infty)^2$ as claimed in
    Theorem~\ref{Thm: KMS functionals for vartheta Hamiltonian in Mz}.
\end{example}

Lemma~\ref{Lem: constraint on the form of KMS on M minus Z} and
Theorem~\ref{Thm: KMS functionals for vartheta Hamiltonian in Mz}
provide a description of $\KMS([\Theta^\sharp],\beta)\simeq [0,+\infty)^2$ for all $a\neq 0$
and $[\vartheta|_{M_Z}]=[0]$.  It remains to discuss the case of
$[\vartheta|_{M_Z}]\neq [0]$ and $a=0$ -- \textit{cf.} Equation \eqref{Eq: decomposition of closed b-form of degree 1}.
However, Lemma~\ref{Lem: constraint on the form of KMS on M minus Z} entails that any
$\varphi\in\KMS([\vartheta^\sharp],\beta)$ is necessarily supported on $Z$.
Since $\vartheta^\sharp$ is tangent to $Z$ this reduces the problem to
the discussion of $(\vartheta^\sharp|_Z,\beta)$-KMS functionals on
$Z$, for which we may apply Theorem~\ref{Thm: KMS functional space for
  cosymplectic manifolds}.
This provides a classification of $\KMS([\Theta^\sharp],\beta)$ which is complete whenever $(Z,\Pi_Z)$ is compact with one (hence all) proper leaf.

We summarize our results in a single theorem.
\begin{theorem}
    \label{Thm: KMS functional space for b-Poisson manifolds}%
    Let $(M,\Pi)$ be a $b$-Poisson manifold with (connected) critical locus $Z$.
    Let $\vartheta\in[\vartheta]\in H^1(M)$ and $a\in\mathbb{R}$ and
    set
    $[\Theta]:=[-a\frac{\mathrm{d}\zeta}{\zeta}+\vartheta]\in\,^bH^1(M)$,
    where $\zeta$ is a $Z$-defining function.
    \begin{enumerate}
    \item \label{item:KMSbPoissonCases} Then the convex cone
        $\KMS([\Theta^\sharp],\beta)$ sits in one of the following cases:
	\begin{center}
            \begin{tabular}{c|c|c|c}
                & $a>0$
                & $a= 0$
                & $a<0$
                \\
                \hline
                $[\vartheta|_{M_Z}]=[0]$
                & $[0,+\infty)^2$
                & $\KMS(Z,\Pi_Z)$
                & $\{0\}$
                \\
                $[\vartheta|_{M_Z}]\neq [0]$
                & $\{0\}$
                & $\KMS([\vartheta^\sharp|_Z],\beta)$
                & $\{0\}$
            \end{tabular}
	\end{center}
    \item \label{item:UniqueRegularKMSbPoissonCase} There exists a
        unique (up to isomorphism) regular KMS functional, obtained
        for $a\beta=1$ and $[\vartheta] = [0]$: Such a functional is
        induced by the smooth density $\mu$ such that
        $\zeta := \iota_{\Pi^{\wedge n}}\mu$.
    \item \label{item:PoissonTracesbPoisson} Poisson traces are
        supported on $Z$ and there exists a unique Poisson trace which
        is invariant under the modular vector field $[Y_\Pi]$: The
        latter is induced by the smooth density on $Z$ given by
        $\alpha_Z\wedge\sigma_Z^{\wedge (n-1)}$ -- \textit{cf.}
        Remark~\ref{Rmk: results on b-Poisson manifolds}.
    \end{enumerate}
\end{theorem}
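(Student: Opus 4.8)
The statement is a synthesis of the results established so far, so the plan is largely to assemble Lemma~\ref{Lem: constraint on the form of KMS on M minus Z}, Lemma~\ref{Lem: non-existence of KMS functional supported on Z with dynamics containing modular class} and Theorem~\ref{Thm: KMS functionals for vartheta Hamiltonian in Mz}, and then to dispose of the one case not yet covered, namely $a=0$. Fix the decomposition $\Theta=-a\frac{\mathrm{d}\zeta}{\zeta}+\vartheta$ of Equation~\eqref{Eq: decomposition of closed b-form of degree 1} and take $\varphi\in\KMS([\Theta^\sharp],\beta)$. By Lemma~\ref{Lem: constraint on the form of KMS on M minus Z} the restriction $\varphi_{M_Z}$ vanishes when $[\vartheta|_{M_Z}]\neq[0]$, and otherwise equals the explicit density $\sum_\pm c_\pm e^{-\beta H_\pm}|\zeta|^{a\beta-1}\mu$ with $\vartheta|_{M_Z}=\mathrm{d}H$. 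Combining this with Theorem~\ref{Thm: KMS functionals for vartheta Hamiltonian in Mz} and Lemma~\ref{Lem: non-existence of KMS functional supported on Z with dynamics containing modular class} immediately yields the four cells with $a\neq0$: the cone is $\{0\}$ whenever $a<0$, or when $a>0$ and $[\vartheta|_{M_Z}]\neq[0]$, and it is $[0,+\infty)^2$ when $a>0$ and $[\vartheta|_{M_Z}]=[0]$.

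For the column $a=0$ I would first show $\supp(\varphi)\subseteq Z$. If $[\vartheta|_{M_Z}]\neq[0]$ this is item~\ref{Item: varphiMZ vanishes if varthetaMZ is not trivial} of Lemma~\ref{Lem: constraint on the form of KMS on M minus Z}; if $[\vartheta|_{M_Z}]=[0]$ then $\varphi_{M_Z}$ has weight $|\zeta|^{-1}e^{-\beta H}$ near $Z$, which is not locally integrable across $Z$, so the non-extendability argument of the $a\le0$ case of Theorem~\ref{Thm: KMS functionals for vartheta Hamiltonian in Mz} --- resting on Lemma~\ref{Lem: non-existence of positive extension} and, as noted there, covering $a=0$ as well --- forces $c_\pm=0$; this is item~\ref{Item: support constraint for b-KMS for a=0} of that theorem. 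Now $\vartheta^\sharp$ is tangent to $Z$ and $\iota_Z\colon Z\hookrightarrow M$ is a Poisson morphism (Remark~\ref{Rmk: results on b-Poisson manifolds}), so a positive order-zero functional supported on $Z$ factors as $\varphi=\psi\circ\iota_Z^*$ with $\psi$ a positive measure on $Z$, and using $\iota_Z^*\{f,g\}_M=\{\iota_Z^*f,\iota_Z^*g\}_Z$ together with $\iota_Z^*(\vartheta^\sharp(f))=\vartheta^\sharp|_Z(\iota_Z^*f)$ one checks that $\varphi\in\KMS([\Theta^\sharp],\beta)$ if and only if $\psi\in\KMS([\vartheta^\sharp|_Z],\beta)$; this gives the isomorphism $\KMS([\Theta^\sharp],\beta)\simeq\KMS([\vartheta^\sharp|_Z],\beta)$ of the lower cell. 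For the upper cell I would further note that $M_Z^+\hookrightarrow\overline{M_Z^+}$ is a homotopy equivalence (collar of $Z$), so $[\vartheta|_{M_Z^+}]=[0]$ implies $\iota_Z^*\vartheta=\mathrm{d}g$ for some $g\in C^\infty(Z)$; then $\vartheta^\sharp|_Z=X_g$ is Hamiltonian on $(Z,\Pi_Z)$ and Lemma~\ref{Lem: stability of KMS} gives $\KMS([\vartheta^\sharp|_Z],\beta)=\KMS(Z,\Pi_Z)$, completing the table.

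Item~\ref{item:UniqueRegularKMSbPoissonCase} follows from Equation~\eqref{Eq: KMS identities for smooth density} and Remark~\ref{Rmk: properties of the modular class}: a regular $(X,\beta)$-KMS functional exists precisely when $\beta X$ represents the modular class $[Y_\Pi]$, which by Remark~\ref{Rmk: results on b-Poisson manifolds} corresponds, under $H^1_\Pi(M)\simeq{}^bH^1(M)$, to $\left[\frac{\mathrm{d}\zeta}{\zeta}\right]$; in the parametrization of Equation~\eqref{Eq: decomposition of closed b-form of degree 1} this forces $[\vartheta]=[0]$ and $a\beta=1$, placing the functional in the cone $[0,+\infty)^2$ as its diagonal ray $c_+=c_-$ --- the only one of the form $I_\nu$ for a global volume form $\nu$ --- namely $I_\mu$ with $\zeta=\iota_{\Pi^{\wedge n}}\mu$, unique up to the rescalings of Lemma~\ref{Lem: stability of KMS}. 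Item~\ref{item:PoissonTracesbPoisson} is the special case $[\vartheta]=[0]$, $a=0$: by the previous paragraph Poisson traces are supported on $Z$ and descend to Poisson traces on $(Z,\Pi_Z)$; since every representative of $[Y_\Pi]$ restricts to the Reeb field $\xi$ of $(Z,\alpha_Z,\sigma_Z)$ (Remark~\ref{Rmk: results on b-Poisson manifolds}), a Poisson trace on $M$ is $[Y_\Pi]$-invariant iff its descent is $\xi$-invariant, and item~\ref{Item: uniqueness of Reeb-invariant Poisson trace} of Theorem~\ref{Thm: KMS functional space for cosymplectic manifolds} singles out the one induced by $\alpha_Z\wedge\sigma_Z^{\wedge(n-1)}$ (the cosymplectic top power, since $\dim Z=2n-1$).

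The bookkeeping over the six cells is routine once the earlier results are in place; the genuinely new --- and I expect hardest --- part is the column $a=0$: making rigorous the equivalence between $(\Theta^\sharp,\beta)$-KMS functionals on $M$ supported on $Z$ and $(\vartheta^\sharp|_Z,\beta)$-KMS functionals on the cosymplectic manifold $Z$ (which relies on $Z$ being a Poisson submanifold and on an order-zero distribution being determined by its restriction to its support), together with the homotopy step showing $\iota_Z^*\vartheta$ is exact. It is also worth being explicit in the write-up that the two $a=0$ entries are only as explicit as Theorem~\ref{Thm: KMS functional space for cosymplectic manifolds} allows, hence completely so exactly when $(Z,\Pi_Z)$ is compact with one (and thus every) proper leaf.
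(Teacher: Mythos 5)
Your proposal is correct and follows essentially the same route as the paper, which presents this theorem as a summary assembled from Lemma~\ref{Lem: constraint on the form of KMS on M minus Z}, Lemma~\ref{Lem: non-existence of KMS functional supported on Z with dynamics containing modular class}, Theorem~\ref{Thm: KMS functionals for vartheta Hamiltonian in Mz} and, for the $a=0$ column, the reduction to $(\vartheta^\sharp|_Z,\beta)$-KMS functionals on the cosymplectic manifold $Z$ handled by Theorem~\ref{Thm: KMS functional space for cosymplectic manifolds}. You in fact supply slightly more detail than the paper does (the collar/homotopy argument showing $\iota_Z^*\vartheta$ is exact when $[\vartheta|_{M_Z}]=[0]$, and the identification of the regular functional with the diagonal ray $c_+=c_-$), all of which is sound.
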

\begin{example}
    We consider the $b$-Poisson manifold
    \begin{align*}
        M
        =
        \mathbb{R}\times\mathbb{T}\times\mathbb{T}\times\mathbb{R}\,,
        \qquad
        \Pi
        =
        z\partial_z\wedge\partial_{\theta_1}
        +
        \partial_{\theta_2}\wedge\partial_{y}\,,
    \end{align*}
    where $\theta_1,\theta_2\in[0,2\pi)$ while $z,y\in\mathbb{R}$.
    Comparing with Remark~\ref{Rmk: results on b-Poisson manifolds} we
    have $\alpha=\mathrm{d}\theta_1$ and
    $\sigma=\mathrm{d}\theta_2\wedge\mathrm{d}y$.  Moreover, the
    isomorphism \eqref{Eq: characterization of b-Poisson cohomology}
    leads $H_\Pi^1(M)\simeq\mathbb{R}^3$: In fact, the generic Poisson
    vector field $X$ is, modulo Hamiltonian vector fields,
    \begin{align*}
        X
        =
        a_1\mathrm{d}\theta_1^\sharp
        +
        a_2\mathrm{d}\theta_2^\sharp
        -
        a\frac{dz}{z}^\sharp
        =
        a_1z\partial_z
        -
        a_2\partial_y
        +
        a\partial_{\theta_1}\,,
    \end{align*}
    for $a,a_1,a_2\in\mathbb{R}$.
    Theorem~\ref{Thm: KMS functional space for b-Poisson manifolds} and a direct
    inspection lead to the following classification:
    \begin{center}
        \begin{tabular}{c|c|c|c|}
            & $a>0$ & $a= 0$ & $a<0$ \\\hline
            $a_1=a_2=0$ & $[0,+\infty)^2$ & \multirow{2}{*}{$\mathcal{M}_+(\mathbb{T})$} & \multirow{4}{*}{$\{0\}$}\\\cline{1-2}
            $a_1\neq 0\;a_2=0$ & $\{0\}$ &&\\\cline{1-3}
            $a_1= 0\;a_2\neq0$ & \multicolumn{2}{|c}{\multirow{2}{*}{$\{0\}$}}&\\\cline{1-1}
            $a_1a_2>0$&\multicolumn{2}{|c}{\multirow{2}{*}{}}&\\\hline
        \end{tabular}
    \end{center}
    In fact for $a\neq 0$ we necessarily have $a_1=a_2=0$ and $a>0$,
    otherwise $\KMS([\Theta^\sharp],\beta)=\{0\}$ -- \textit{cf.}
    Lemma~\ref{Lem: constraint on the form of KMS on M minus Z}.  For
    the latter case $\KMS([a\partial_{\theta_1}],\beta)\simeq[0,+\infty)^2$ as any
    $(a\partial_{\theta_1},\beta)$-KMS functional is given by
    \begin{align*}
        \varphi(f)
        =
        \int_M
        f(z,\theta_1,\theta_2,y)c(z)
        |z|^{a\beta-1}
        \mathrm{d}z\mathrm{d}\theta_1\mathrm{d}\theta_2\mathrm{d}y\,,
        \qquad
        c(z)
        :=
        \begin{cases}
            c_+ & z > 0 \\
            c_- & z < 0,
        \end{cases}
    \end{align*}
    where $c_\pm\geq 0$.  If we focus on Poisson traces, \textit{i.e}
    $a=a_1=a_2=0$, we are reduced to the classification of Poisson
    traces for the cosymplectic manifold
    $Z=\mathbb{R}\times\mathbb{T}^2$ with $\eta=\mathrm{d}\theta_1$
    and $\omega=\mathrm{d}\theta_2\wedge\mathrm{d}y$.  The latter
    space is isomorphic to the convex cone $\mathcal{M}_+(\mathbb{T})$
    of positive measure on $\mathbb{T}$: Actually, for all
    $\varphi \in \KMS(M,\Pi) \simeq \KMS(Z,\Pi_Z)$ there exists
    $\nu_\varphi \in \mathcal{M}_+(\mathbb{T})$ such that
    \begin{align*}
        \varphi(f)
        =
        \int
        f(0,\theta_1,\theta_2,y)
        \mathrm{d}\nu_\varphi(\theta_1)\mathrm{d}\theta_2\mathrm{d}y\,,
        \qquad
        \forall f\in C^\infty_{\mathrm{c}}(M)\,.
    \end{align*}
    The proof of this result goes along the same line of
    Example~\ref{Ex: KMS functionals for plane Poisson manifold}.

    Finally if $a=0$ and $a_2=0$ we have
    $\KMS([a_1z\partial_z],\beta) = \KMS(M,\Pi) \simeq \mathcal{M}_+(\mathbb{R})$
    no matter the value of $a_1$: This is due to the fact that
    $a_1 \mathrm{d}\theta_1^\sharp|_{Z=0}=0$.  If $a=0$ and
    $a_2\neq 0$ then $\KMS([a_1z\partial_z-a_2\partial_y],\beta) = \{0\}$.
    Indeed, let assume for simplicity $a_2 = 1$: then for all $\varphi\in\KMS([a_1z\partial_z-\partial_y],\beta)$
    and $f\in C^\infty_{\mathrm{c}}(M)$ we have
    \begin{align*}
        \beta\varphi(f)
        =
        \beta\varphi(\partial_y(y1_f)f)
        =
        -\beta\varphi(\{f,y1_f\})
        =
        -\beta\varphi(\partial_{\theta_2}f)\,,
    \end{align*}
    where $1_f\in C^\infty_{\mathrm{c}}(M)$ is such that
    $1_f|_{\supp(f)}=1$.  It follows that
    $\varphi(f)=e^{-\beta t}\varphi(\Psi_t^*f)$ where $\Psi_t$ is the
    flow associated with $\partial_{\theta_2}$.  As
    $\varphi(\Psi^*_tf)$ is bounded in $t$ we have
    $\varphi(f)=\lim\limits_{t\to+\infty}e^{-\beta
      t}\varphi(\Psi^*_tf)=0$.
\end{example}

As expected, the structure of the convex cone $\KMS([X],\beta)$
for $b$-Poisson manifold is way richer than the corresponding convex
cone for symplectic manifolds.  In particular, the $b$-Poisson case
supports KMS functionals for dynamics $[X]\neq [Y_\Pi]$, though none
of these functionals is induced by smooth densities.  In fact for
$[X]\neq[Y_\Pi]$ the non-trivial cases include the KMS convex cone of
the cosymplectic manifold $Z\hookrightarrow M$, where the results of
Theorem~\ref{Thm: KMS functional space for cosymplectic manifolds}
apply.  Moreover, the behaviour of $\KMS([aY_\Pi],\beta)$ shows
a phase transition for $a=0$.  In fact $\KMS([aY_\Pi],\beta)$ is
empty for $a<0$ and isomorphic to $[0,+\infty)^2$ for $a>0$
while it reduces to Poisson traces on the coymplectic manifold $Z$ for
$a=0$.  It would be interesting to see whether or not the pattern
shown in the $b$-Poisson case holds true for $b^k$-Poisson manifolds
\cite{Scott-16}: in the latter case we expect a phase transition to
occur also in the parameter $\beta>0$ as shown by the following
example.
\begin{example}
    \label{Ex: bk-Poisson manifold}%
    Let consider the Poisson manifold defined by
    \begin{align*}
        M=\mathbb{R}^2\,,
        \qquad
        \Pi=z^k\partial_z\wedge\partial_y\,,
        \qquad
        k\in\mathbb{N}\,.
    \end{align*}
    According to \cite[Def.~2.8]{Scott-16} this Poisson manifold
    $(M,\Pi)$ is an example of $b^k$-Poisson manifold.  Focusing for
    simplicity on the modular class, we have
    $[Y_\Pi]=[kz^{k-1}\partial_y]\neq[0]$.  Setting $Z:=\{z=0\}$ and
    $M_Z:=M\setminus Z$ a direct inspection shows that the restriction
    to $M_Z$ of any $\varphi\in\KMS(kz^{k-1}\partial_y,\beta)$ is
    necessarily of the form
    \begin{align*}
        \varphi(f)
        =
        \int_{M_Z}
        f(z,y)c(z)|z|^{k(\beta-1)}\mathrm{d}z\mathrm{d}y\,,
        \qquad
        \forall f\in C^\infty_{\mathrm{c}}(M_Z)\,,
    \end{align*}
    where
    \begin{equation*}
        c(z)
        :=
        \begin{cases}
            c_+ & z > 0 \\
            c_- & z < 0,
        \end{cases}
    \end{equation*}
    with $c_\pm\geq 0$.  A slight modification of Lemma~\ref{Lem:
      non-existence of positive extension} shows that the positive
    functional defined on the right-hand side of the latter equation
    has no positive extension to $M$ if $\beta\leq 1-1/k$.  Therefore in
    the latter case we have $\supp(\varphi)\subseteq Z$ where both
    $\Pi$ and $Y_\Pi$ vanish.  This shows that
    \begin{align*}
        \KMS([Y_\Pi],\beta)
        \simeq
        \begin{cases}
            [0,+\infty)^2
            & \textrm{for } \beta>\frac{k-1}{k}
            \\
            \mathcal{M}_+(\mathbb{R})
            & \textrm{for } \beta\leq \frac{k-1}{k}\,,
        \end{cases}
    \end{align*}
    This shows the occurrence of a phase transition in the parameter
    $\beta>0$ at $\beta=\frac{k-1}{k}$.
\end{example}

\section{Examples}
\label{Sec: examples}

In this section we discuss in details two examples of regular
codimension $1$ Poisson manifolds $(M,\Pi)$ for which
$\KMS([X],\beta)$ can be computed explicitly for all
$[X]\in H_\Pi^1(M)$.  This investigation is mainly motivated by the
need of building some intuition on which properties of the geometry of
$(M,\Pi)$ are codified by $\KMS([X],\beta)$.

For example, on account of Remark \ref{Rmk: KMS for symplectic manifolds} and Proposition~\ref{Prop: extremal KMS supported on a proper leaf is extremal}, one may think that extremal elements
$\varphi$ of $\KMS([X],\beta)$ are necessarily supported on
leaves $L$ of $M$, which can then be interpreted as ``pure
thermodynamical phases'' of the system under consideration.  However,
this point of view disregards cases where the natural KMS state
defined on $C^\infty_{\mathrm{c}}(L)$ cannot be pushed forward to
well-defined functionals on $C^\infty_{\mathrm{c}}(M)$ -- this would
happen when $L$ is not a closed embedded submanifold in $M$.  In this situation
there may exist KMS functionals on $M$, though each leaf $L$ has none:
Section~\ref{Sec: 3-torus regular Poisson manifolds} provides an
example in this direction.

Thus it appears that $\KMS([X],\beta)$ is capable to capture
whether a leaf $L$ is a nice submanifold or not.  The example
considered in Section~\ref{Sec: spiral regular Poisson manifolds} is
even more specific: Therein, one can find $[X]\in H_\Pi^1(M)$ for
which there are KMS functionals on $M$ supported on leaves $L$ which
are not embedded submanifolds.  This is particularly interesting as it seems to provide some information about how badly $L$ fails to be an embedded
submanifold.

\subsection{The Torus $\mathbb{T}^3$ with a Cosymplectic Poisson
  Structure}
\label{Sec: 3-torus regular Poisson manifolds}

We now consider the example of a family of compact cosymplectic
Poisson manifolds $(M,\Pi_c)$ which depends on a parameter
$c\in\mathbb{R}$.  We shall analyse in details the convex set of KMS
states $\kms([X],\beta)$ for a given $[X]\in H_{\Pi_c}^1(M)$
and $\beta>0$.  As already discussed in Section~\ref{Sec: Main
  properties of classical KMS functionals} we shall focus on the set
$\kms_0([X],\beta)$ of extremal KMS states.  As we shall
see, the structure of $\kms_0([X],\beta)$ strongly depends on
whether $c$ is chosen to be rational or not.

Let $c\in\mathbb{R}$ and let $(M,\Pi_c)$ be the cosymplectic Poisson
manifold defined by
\begin{align*}
    M
    :=
    \mathbb{T}^3\,,
    \qquad
    \eta_c:=c\mathrm{d}\theta_2-\mathrm{d}\theta_3\,,
    \qquad
    \omega:=\mathrm{d}\theta_1\wedge\mathrm{d}\theta_2\,,
\end{align*}
whose associated Poisson vector by Equation~\eqref{Eq: Poisson tensor
  for cosymplectic manifolds} is given by
\begin{align}
    \label{Eq: 3-torus regular codim 1 Poisson manifold}
    \Pi_c
    :=
    \partial_{\theta_1}
    \wedge
    (\partial_{\theta_2}+c\partial_{\theta_3})\,,
\end{align}
where $\theta_1,\theta_2,\theta_3$ are the usual quasi-global
coordinates on $\mathbb{T}^3$.

As discussed in Section~\ref{Sec: cosymplectic manifolds}, the
symplectic foliation of $(M,\Pi_c)$ coincides with the foliation
associated with the $1$-form $\eta_c$, see Remark~\ref{Rmk: results on
  cosymplectic manifolds}.  In particular, if
$c = \frac{p}{q} \in \mathbb{Q}$ -- here $p, q\in\mathbb{Z}$ are such
that $\mathrm{GCD}(p,q) = 1$ -- then the symplectic leaf through
$(\bar{\theta}_1,\bar{\theta}_2,\bar{\theta}_3)$ is the proper
submanifold of $\mathbb{T}^3$ which in terms of quasi-global
coordinates is defined by
\begin{align}
    \label{Eq: 3-torus foliation for rational c}
    L_{(\bar{\theta}_1,\bar{\theta}_2,\bar{\theta}_3)}
    =
    L_{(0,\bar{\theta}_2,\bar{\theta}_3)}
    =
    \lbrace(u_1,\bar{\theta}_2+qu_2,\bar{\theta}_3+pu_2)
    \,|\,
    u_1,u_2\in[0,2\pi)
    \rbrace\,.
\end{align}
If instead $c\in\mathbb{R}\setminus\mathbb{Q}$ we have
\begin{align}
    \label{Eq: 3-torus foliation for irrational c}
    L_{(\bar{\theta}_1,\bar{\theta}_2,\bar{\theta}_3)}
    =
    L_{(0,\bar{\theta}_2,\bar{\theta}_3)}
    =
    \left\{
        \big(
        u_1,
        \Psi_t^{\partial_{\theta_2} + c\partial_{\theta_3}}
        (\bar{\theta_2}, \bar{\theta_3})
        \big)
        \,\Big|\,
	u_1\in[0,2\pi)\,,\,
	t \in \mathbb{R}
    \right\}\,,
\end{align}
where $\Psi_t^{\partial_{\theta_2}+c\partial_{\theta_3}}$ denotes the
flow along the vector field
$\partial_{\theta_2}+c\partial_{\theta_3}$.

We now compute explicitly $H_{\Pi_c}^1(M)$.  Notice that, using the
isomorphism \eqref{Eq: characterization of cosymplectic Poisson
  cohomology}, the first Poisson cohomology $H_{\Pi_c}^1(M)$ is
isomorphic to $H_{\eta_c}^1(M)\oplus H_{\eta_c}^0(M)$.  Since in
computing the latter spaces one faces the same complexity, we
preferred to proceed with a direct computation.  To deal with the
irrational case $c\in\mathbb{R}\setminus\mathbb{Q}$ the following
lemma will be helpful.
\begin{lemma}
    \label{Lem: magnitude of rationals approximating irrationals}%
    Let $c\in\mathbb{R}\setminus\mathbb{Q}$.  Then for all
    $n\in\mathbb{N}$ there exists a $k\in\mathbb{N}$ such that
    \begin{align*}
        \forall p,q\in\mathbb{Z}\colon
        |p-cq|<\frac{1}{kn}\,,\,
        1\leq q\leq kn
        \Longrightarrow q\geq n\,.
    \end{align*}
\end{lemma}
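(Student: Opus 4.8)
The plan is to reduce the statement to the elementary fact that a finite minimum of strictly positive numbers is strictly positive. Fix $n\in\mathbb{N}$. For each integer $q$ with $1\le q\le n-1$ the product $cq$ is irrational, hence its distance to the integers, $\min_{m\in\mathbb{Z}}|m-cq|$, is strictly positive. Since the index set $\{q\in\mathbb{Z} \mid 1\le q\le n-1\}$ is finite, the quantity
\[
\delta := \min_{1\le q\le n-1}\ \min_{m\in\mathbb{Z}} |m - cq|
\]
is a well-defined strictly positive real number; in the degenerate case $n=1$ the index set is empty and one simply sets $\delta := 1$ (and any $k$, say $k=1$, will do, since the hypothesis already contains $q\ge 1 = n$).

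First I would choose $k\in\mathbb{N}$ large enough that $\frac{1}{kn} < \delta$, for instance any $k > \frac{1}{n\delta}$. Then I would check the required implication by contradiction: let $p,q\in\mathbb{Z}$ satisfy $|p-cq| < \frac{1}{kn}$ and $1\le q\le kn$, and suppose that $q\le n-1$. By the definition of $\delta$ we then have $|p-cq| \ge \min_{m\in\mathbb{Z}}|m-cq| \ge \delta > \frac{1}{kn}$, which contradicts the assumption $|p-cq| < \frac{1}{kn}$. Hence $q\ge n$, which is exactly the desired conclusion. Note that the bound $q\le kn$ plays no role in the argument; it is retained only because it is the form in which the lemma will be applied in the computation of $H^1_{\Pi_c}(M)$ for irrational $c$.

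There is essentially no obstacle in this proof. The only point worth emphasising is that irrationality of $c$ is used precisely to guarantee that \emph{each} of the finitely many numbers $\min_{m\in\mathbb{Z}}|m-cq|$ (for $1\le q\le n-1$) is nonzero, after which finiteness of the index set upgrades this to the \emph{uniform} lower bound $\delta>0$ on which the choice of $k$ depends. Were $c$ rational, some $cq$ could be an integer and $\delta$ would vanish, so the statement would fail; this is consistent with the dichotomy between the rational and irrational cases that will be exploited in Section~\ref{Sec: 3-torus regular Poisson manifolds}.
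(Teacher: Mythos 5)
Your proof is correct. It rests on the same basic mechanism as the paper's — for the finitely many denominators $1\le q\le n-1$ the irrationality of $c$ forces each quantity $\min_{m\in\mathbb{Z}}|m-cq|$ to be strictly positive, and finiteness upgrades this to a uniform bound $\delta>0$ against which $k$ is chosen — but your route is leaner. The paper instead introduces the set $A_n$ of rationals $\tfrac{p}{q}$ with $|p-cq|\le\tfrac1n$ and $1\le q\le n$, proves via a pigeonhole (Dirichlet approximation) argument that $A_n\neq\emptyset$, and then uses the positive distance $d(A_n,c)$ from $c$ to this finite closed set to pick $k$ with $\tfrac{1}{kn}<d(A_n,c)$. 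The pigeonhole step is logically superfluous for the implication being proved (it only guarantees $d(A_n,c)\le\tfrac1n$, which is never used in deriving the conclusion); your version dispenses with it and works directly with $|p-cq|$ rather than with $\bigl|\tfrac{p}{q}-c\bigr|$, which also makes transparent your correct observation that the hypothesis $q\le kn$ is never needed. You also handle the degenerate case $n=1$ explicitly, which the paper leaves implicit. The only thing the paper's detour buys is a self-contained record of the Dirichlet-type existence statement, which is not used elsewhere; for the lemma as stated your argument is preferable.
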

\begin{proof}
    Let $A_n\subseteq \mathbb{Q}$ be the set defined by
    \begin{align*}
        A_n
        :=
        \left\lbrace
            \tfrac{p}{q}\in\mathbb{Q}
            \,\Big|\,
            |p-cq|\leq \tfrac{1}{n}\,,\,
            1\leq q\leq n
        \right\rbrace\,.
    \end{align*}
    A pigeonhole principle argument shows that $A_n\neq\emptyset$.
    Indeed let us consider the $n+1$ elements
    $\{jc\}:=jc-\lfloor jc\rfloor\in(0,1)$ for $j\in\{0,\ldots,n\}$,
    where $\lfloor x\rfloor:=\sup\{m\in\mathbb{N}\,|\,m\leq x\}$.
    Considering the partition
    $(0,1)=\bigcup_{j=0}^{n-1}(\frac{j}{n},\frac{j+1}{n})$ of $(0,1)$
    into $n$ disjoint intervals, the pigeonhole principle entails that
    there exists $i,j,h\in\{0,\ldots,n\}$ such that $i<j$ and
    $\{ic\},\{jc\}\in(\frac{h}{n},\frac{h+1}{n})$.  This implies that
    \begin{align}
        |\lfloor jc\rfloor-\lfloor ic\rfloor-(j-i)c|<\frac{1}{n}\,,
        \qquad
        1\leq (j-i)\leq n\,,
    \end{align}
    so that $\frac{p}{q}\in A_n$ there
    $p=\lfloor ic\rfloor-\lfloor jc\rfloor$ and $q=j-i$.  The set
    $A_n$ is finite and therefore closed. Moreover, $c\notin A_n$ so
    that
    \begin{align*}
        0<d(A_n,c)
        =
        \min_{\frac{p}{q}\in A_n}\left|\frac{p}{q}-c\right|
        \leq
        \frac{1}{n}\,.
    \end{align*}
    Since $d(A_n,c)>0$ there exists $k\in\mathbb{N}$ such that
    $\frac{1}{kn}< d(A_n,c)\leq\frac{1}{n}$.  Let now
    $p,q\in\mathbb{Z}$ such that $|p-cq|<\frac{1}{kn}$ and
    $1\leq q\leq kn$.  It follows that $\frac{p}{q}\notin A_n$ and
    therefore $q\geq n$.
\end{proof}

\begin{lemma}
    \label{Lem: H1 3-torus}%
    For the Poisson manifold defined in Equation~\eqref{Eq: 3-torus
      regular codim 1 Poisson manifold} we have that
    \begin{align}
        H_{\Pi_c}^1(M)\simeq H_{\Pi_c}^0(M)^{\oplus 3}\,,
    \end{align}
    where
    $H_{\Pi_c}^0(M)=\{f\in
    C^\infty(M)\,|\,\partial_{\theta_1}f=0\,,\,(\partial_{\theta_2}+c\partial_{\theta_3})f=0\}$
    ---notice in particular that $H_{\Pi_c}^0(M)\simeq\mathbb{R}$ for
    $c\in\mathbb{R}\setminus\mathbb{Q}$.  Actually any
    $[X]\in H_{\Pi_c}^1(M)$ is of the form
    \begin{align}\label{Eq: 3torus Poisson vector field}
        [X]
        =
        [t_1\partial_{\theta_1}
        +
        t_2(\partial_{\theta_2}+c\partial_{\theta_3})
        +
        t_3\partial_{\theta_3}]\,,
        \qquad
        t_1,t_2,t_3\in H_{\Pi_c}^0(M)\,.
    \end{align}
\end{lemma}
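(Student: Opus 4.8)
The plan is to compute $H_{\Pi_c}^1(M)$ directly by classifying Poisson vector fields modulo Hamiltonian ones via Fourier analysis on $\mathbb{T}^3$. First I would pass to the global commuting frame $e_1:=\partial_{\theta_1}$, $e_2:=\partial_{\theta_2}+c\,\partial_{\theta_3}$, $e_3:=\partial_{\theta_3}$, in which $\Pi_c=e_1\wedge e_2$ and the Casimirs are exactly the $f$ with $e_1f=e_2f=0$, i.e.\ the elements of $H_{\Pi_c}^0(M)$. Writing a vector field as $X=u_1e_1+u_2e_2+u_3e_3$ with $u_i\in C^\infty(M)$ and using $[e_i,e_j]=0$, a short computation of $\mathcal{L}_X(e_1\wedge e_2)$ shows that $X$ is Poisson if and only if $u_3\in H_{\Pi_c}^0(M)$ and $e_1u_1+e_2u_2=0$, while $X$ is Hamiltonian, $X=X_h$, if and only if $u_3=0$ and $(u_1,u_2)=(e_2h,-e_1h)$. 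Since Hamiltonian fields leave the $u_3$-component untouched, this already gives $H_{\Pi_c}^1(M)\simeq H_{\Pi_c}^0(M)\oplus Q$, where $Q$ is the quotient of the space of pairs $(u_1,u_2)$ with $e_1u_1+e_2u_2=0$ by the subspace of pairs $(e_2h,-e_1h)$, $h\in C^\infty(M)$; the first summand carries the free component $u_3=t_3$.

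It then remains to identify $Q\simeq H_{\Pi_c}^0(M)^{\oplus 2}$, with representatives of the form $(t_1,t_2)$, $t_i\in H_{\Pi_c}^0(M)$; equivalently, for every cocycle $(u_1,u_2)$ one must produce a smooth $h$ such that $u_1-e_2h$ and $u_2+e_1h$ both lie in $H_{\Pi_c}^0(M)$. Expanding in the Fourier basis $e^{\mathrm{i}(n_1\theta_1+n_2\theta_2+n_3\theta_3)}$ one has $e_1\mapsto\mathrm{i}n_1$, $e_2\mapsto\mathrm{i}(n_2+cn_3)$, and the Casimir modes are those with $n_1=0$ and $n_2+cn_3=0$. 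On modes with $n_1\neq0$ I would set $\widehat h(n):=-\widehat{u_2}(n)/(\mathrm{i}n_1)$; the cocycle identity $n_1\widehat{u_1}(n)+(n_2+cn_3)\widehat{u_2}(n)=0$ then makes both $u_1-e_2h$ and $u_2+e_1h$ vanish on those modes, and $|\widehat h(n)|\le|\widehat{u_2}(n)|$ so no regularity is lost. On the Casimir modes I set $\widehat h(n):=0$, so that $t_1,t_2$ collect precisely the Casimir components of $u_1,u_2$. On the remaining modes, $n_1=0$ and $n_2+cn_3\neq0$, the cocycle condition forces $\widehat{u_2}(n)=0$ and the only admissible choice is $\widehat h(n):=\widehat{u_1}(n)/(\mathrm{i}(n_2+cn_3))$, i.e.\ one is left to solve $(\partial_{\theta_2}+c\,\partial_{\theta_3})h=u_1$ in the transverse variables.

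The main obstacle is exactly the regularity of $h$ on this last family of modes, i.e.\ controlling the small divisors $n_2+cn_3$. When $c=p/q\in\mathbb{Q}$ with $\mathrm{GCD}(p,q)=1$ this is immediate: off the Casimir locus $qn_2+pn_3$ is a nonzero integer, so $|n_2+cn_3|\ge1/q$, hence $\widehat h$ decays as fast as $\widehat{u_1}$, $h\in C^\infty(M)$, and $Q\simeq H_{\Pi_c}^0(M)^{\oplus 2}$. When $c\in\mathbb{R}\setminus\mathbb{Q}$ the quantities $n_2+cn_3$ need no longer stay away from $0$, and here I would invoke Lemma~\ref{Lem: magnitude of rationals approximating irrationals}: it produces, for each $n$, a lower bound on $|n_2+cn_3|$ in terms of $|n_3|$, which together with the rapid decay of the Fourier coefficients of the smooth cocycle is meant to force the series defining $h$ to converge to a smooth function. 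This is the delicate, small-divisor step of the argument. Finally, for irrational $c$ one has $H_{\Pi_c}^0(M)\simeq\mathbb{R}$, since $n_2+cn_3=0$ with $n_2,n_3\in\mathbb{Z}$ forces $n_2=n_3=0$. Unwinding the frame, the representatives $t_1e_1+t_2e_2+t_3e_3$ become $t_1\partial_{\theta_1}+t_2(\partial_{\theta_2}+c\,\partial_{\theta_3})+t_3\partial_{\theta_3}$, which is the asserted form, completing the identification $H_{\Pi_c}^1(M)\simeq H_{\Pi_c}^0(M)^{\oplus 3}$.
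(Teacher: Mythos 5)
Your proposal is correct and follows essentially the same route as the paper: both first show that a Poisson field is, up to the free component $t_3\partial_{\theta_3}$, a cocycle $(u_1,u_2)$ for the commuting frame $\partial_{\theta_1}$, $\partial_{\theta_2}+c\partial_{\theta_3}$, and both resolve the resulting division problem --- including the delicate small-divisor estimate for irrational $c$ --- by the same appeal to Lemma~\ref{Lem: magnitude of rationals approximating irrationals}. The only difference is presentational: you run Fourier analysis in all three angles throughout, whereas the paper integrates explicitly along the flows of $\partial_{\theta_1}$ and (for rational $c$) of $\partial_{\theta_2}+c\partial_{\theta_3}$ and reserves Fourier series for the transverse equation in the irrational case, so the obstruction classes you read off as zero-mode Fourier coefficients are exactly the paper's period integrals $\mathcal{P}_c\tau_1$ and $\mathcal{P}_{\theta_1}\tau_2$.
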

\begin{proof}
    Let $X\in\Gamma(TM)$ be a Poisson vector field: We decompose $X$
    as
    \begin{align*}
        X
        =\tau_1\partial_{\theta_1}
        +\tau_2\partial_{\theta_2}
        +\tau_3\partial_{\theta_3}\,,
    \end{align*}
    and we impose the condition $\mathcal{L}_X(\Pi_c)=0$.  This leads
    to
    \begin{subequations}
        \begin{align}
            \label{Eq: tau12 relation}
            \partial_{\theta_1}\tau_1
            +(\partial_{\theta_2}+c\partial_{\theta_3})\tau_2
            &=0\\
            \label{Eq: tau13 relation}
            c\partial_{\theta_1}\tau_1
            +(\partial_{\theta_2}+c\partial_{\theta_3})\tau_3
            &=0\\
            \label{Eq: tau23 relation}
            c\partial_{\theta_1}\tau_2
            -\partial_{\theta_1}\tau_3
            &=0\,.
        \end{align}
    \end{subequations}
    It follows that
    \begin{align*}
        X
        =\tau_1\partial_{\theta_1}
        +\tau_2(\partial_{\theta_2}
        +c\partial_{\theta_3})
        +t_3\partial_{\theta_3}\,,
    \end{align*}
    where $\tau_1$, $\tau_2$ are related by Equation~\eqref{Eq: tau12
      relation} while $t_3\in H_{\Pi_c}^0(M)$.
    We shall now discuss the cases $c\in\mathbb{Q}$ and
    $c\notin\mathbb{Q}$ separately.
    \begin{description}
    \item[$\boxed{c\in\mathbb{Q}\colon}$] Let
        $c=\frac{p}{q}\in\mathbb{Q}$ where $p,q\in\mathbb{Z}_+$ are
        such that $\operatorname{GCD}(p,q)=1$.  We now look for
        $f\in C^\infty(M)$ such that
        $\tau_1\partial_{\theta_1}+\tau_2(\partial_{\theta_2}+c\partial_{\theta_3})=X_f$
        which implies
	\begin{align}\label{Eq: tau12 Hamiltonian PDE system}
            \tau_1
            =(f_{\theta_2}+cf_{\theta_3})\,,\qquad
            \tau_2
            =-f_{\theta_1}\,.
	\end{align}
	Any such function would satisfy
	\begin{multline}\label{Eq: tau12 Hamiltonian}
            f(\theta_1,\theta_2+qu,\theta_3+pu)
            :=f(0,\theta_2,\theta_3)
            -\int_0^{\theta_1}\tau_2(s_1,\theta_2,\theta_3)\mathrm{d}s_1
            \\+\int_0^{u}\tau_1(\theta_1,\theta_2+qs_2,\theta_3+ps_2)\mathrm{d}s_2\,.
	\end{multline}
	In particular, evaluation at $\theta_1=2\pi$ or $u=2\pi$
        implies the following consistency relations:
	\begin{subequations}\label{Eq: consistency relations for rational c}
            \begin{align}
                (\mathcal{P}_{\theta_1}\tau_2)(\theta_2,\theta_3)
                &:=\int_0^{2\pi}\tau_2(\theta_1,\theta_2,\theta_3)\mathrm{d}\theta_1
                =0\,,\\
                (\mathcal{P}_c\tau_1)(\theta_1,\theta_2,\theta_3)
                &:=\int_0^{2\pi}\tau_1(\theta_1,\theta_2+qu,\theta_3+pu)\mathrm{d}u
                =0\,.
            \end{align}
	\end{subequations}
	Notice that on account of equation \eqref{Eq: tau12 relation}
        we have
        $\mathcal{P}_{\theta_1}\tau_2,\mathcal{P}_c\tau_1\in
        H_{\Pi_c}^0(M)$.  Whenever Equation~\eqref{Eq: consistency
          relations for rational c} are satisfied, Equation~\eqref{Eq:
          tau12 Hamiltonian} can be used to define a smooth function
        $f\in C^\infty(M)$.  In fact it suffices to specify $f$ on,
        say, the submanifold
        $\{(0,\theta_2,\theta_3)\in[0,2\pi)^3\,|\,p\theta_2-q\theta_3=0\}$,
        and then use Equation~\eqref{Eq: tau12 Hamiltonian}. We write
	\begin{align*}
            [X]
            =[\mathcal{P}_c\tau_1\partial_{\theta_1}
            +\mathcal{P}_{\theta_1}\tau_2(\partial_{\theta_2}+c\partial_{\theta_3})
            +t_3\partial_{\theta_3}]\,,
	\end{align*}
	so that the linear map
	\begin{align*}
            H^1(M;\Pi_c)\ni[X]\mapsto
            (\mathcal{P}_c\tau_1,\mathcal{P}_{\theta_1},t_3)
            \in H_{\Pi_c}^0(M)^{\oplus 3}\,,
	\end{align*}
	completes the proof.

    \item[$\boxed{c\in\mathbb{R}\setminus\mathbb{Q}\colon}$] The
        strategy of the proof is similar to the case of rational $c$.
        In particular, we look for a solution $f\in C^\infty(M)$ of
        the system \eqref{Eq: tau12 Hamiltonian PDE system}.  This
        implies that
	\begin{subequations}\label{Eq: consistency relations for irrational c}
            \begin{align}
                (\mathcal{P}_{\theta_1}\tau_2)(\theta_2,\theta_3)
                &:=
                \int_0^{2\pi}
                \tau_2(\theta_1,\theta_2,\theta_3)
                \mathrm{d}\theta_1
                =0\,,
                \\
                (\mathcal{P}_{\theta_2}\mathcal{P}_{\theta_3}\tau_1)(\theta_1)
                &:=
                \int_0^{2\pi}\int_0^{2\pi}
                \tau_1(\theta_1,\theta_2,\theta_3)
                \mathrm{d}\theta_2\mathrm{d}\theta_3
                =0\,.
            \end{align}
	\end{subequations}
	We shall now show that the Equations~\eqref{Eq: consistency
          relations for irrational c} are sufficient for proving the
        existence of $f$.  In particular the condition
        $(\mathcal{P}_{\theta_1}\tau_2)(\theta_2,\theta_3)=0$ entails
        that we can define $f$ as
	\begin{align*}
            f(\theta_1,\theta_2,\theta_3)
            =
            f(0,\theta_2,\theta_3)
            -
            \int_0^{\theta_1}\tau_2(u,\theta_2,\theta_3)\mathrm{d}u\,,
	\end{align*}
	so that it suffices to determine
        $f_0(\theta_2,\theta_3)=f(0,\theta_2,\theta_3)$ in a way
        compatible with the first equation in \eqref{Eq: tau12
          Hamiltonian PDE system}.  We now show that the equation
	\begin{align*}
            (\partial_{\theta_2}+c\partial_{\theta_3})f_0
            =
            \tau_1\,,
	\end{align*}
	admits a solution which is unique up to a constant.  Regarding
        the uniqueness statement, let
        $g:=f_0-\hat{f}_0\in C^\infty(M)$ be the difference of two
        solutions.  Then
        $(\partial_{\theta_2}+c\partial_{\theta_3})g=0$ so that $g$ is
        constant on the leaves of $M$.  Since
        $c\in\mathbb{R}\setminus\mathbb{Q}$, any such leaf is dense in
        $M$ and therefore $g$ is constant on $M$.

	For the existence part we shall define $f_0$ using Fourier
        series.  In particular, as $\tau_1\in C^\infty(M)$ we can
        write
	\begin{align*}
            \tau_1(\theta_2,\theta_3)
            =
            \sum_{p,q\in\mathbb{Z}}\hat{\tau}_1(p,q)e^{-i(p\theta_2+q\theta_3)}\,,
	\end{align*}
	where $\hat{\tau}(p,q)$ is fast decreasing in $p,q$, that is,
        for all $k\in\mathbb{N}$ it holds
	\begin{align*}
            \lim_{(p,q)\to\infty}(1+|p|^2+|q|^2)^{\frac{k}{2}}\hat{\tau}_1(p,q)=0\,.
	\end{align*}
	Notice that
        $\hat{\tau}_1(0,0)=\mathcal{P}_{\theta_2}\mathcal{P}_{\theta_3}\tau_1=0$.
        We shall then define $f_0$ as
	\begin{align*}
            f_0(\theta_2,\theta_3)
            =
            \sum_{p,q\in\mathbb{Z}}
            \frac{\hat{\tau}(p,q)}{i(p+cq)}e^{-i(p\theta_2+q\theta_3)}\,.
	\end{align*}
	This definition defines a solution provided the sequence
        $\frac{\hat{\tau}_1(p,q)}{i(p+cq)}$ is fast decreasing in
        $p,q$.  As $\hat{\tau}_1(p,q)$ is fast decreasing we have only
        to discuss what happens when $p,q$ are such that $p+cq$ is
        small.  However, Lemma~\ref{Lem: magnitude of rationals
          approximating irrationals} implies that for all
        $N\in\mathbb{N}$ and $p,q\in\mathbb{Z}$ such that
        $|p-cq|<\frac{1}{N}$ we have $|q|=O(N)$ as well as $|p|=O(N)$.
        This implies that $\frac{\hat{\tau}_1(p,q)}{i(p+cq)}$ is fast
        decreasing.

	Thus, proceeding as in the case of rational $c$ we have
	\begin{align*}
            [X]
            =
            [\mathcal{P}_{\theta_2}\mathcal{P}_{\theta_3}\tau_1\partial_{\theta_1}
            +
            \mathcal{P}_{\theta_1}\tau_2(\partial_{\theta_2}+c\partial_{\theta_3})
            +
            t_3\partial_{\theta_3}]\,,
	\end{align*}
	so that the linear map
        $H^1(M;\Pi_c)\ni[X]\mapsto
        (\mathcal{P}_{\theta_2}\mathcal{P}_{\theta_3}\tau_1,\mathcal{P}_{\theta_1}\tau_2,t_3)
        \in H_{\Pi_c}^0(M)^{\oplus 3}$ completes the proof.  Notice
        that since $c\in\mathbb{R}\setminus\mathbb{Q}$ we have
        $H_{\Pi_c}^0(M)\simeq\mathbb{R}$ as any smooth function
        satisfying $f_{\theta_1}=0=f_{\theta_2}+cf_{\theta_3}$ is
        necessarily constant.
    \end{description}
\end{proof}

We now investigate the convex set $\kms([X],\beta)$ of KMS
states, focusing in particular on the subset
$\kms_0([X],\beta)$ of extremal KMS states.

The result will depend on whether $c\in\mathbb{Q}$ or not.  In
particular for $c\in\mathbb{Q}$, item \ref{Item: KMS for compact
  cosymplectic manifold} of Theorem~\ref{Thm: KMS functional space for
  cosymplectic manifolds} applies: The convex set $\kms(M,\Pi_c)$ is
therefore generated by Poisson traces supported on the leaves of $M$,
while there are no KMS states for non-trivial dynamics.  This
justifies the interpretation of the leaves as pure thermodynamical
phases.

Conversely, if $c$ is irrational, the scenario changes abruptly.  None
of the leaves is proper and we cannot rely on the results of
Theorem~\ref{Thm: KMS functional space for cosymplectic manifolds} --
specifically item \ref{Item: KMS for compact cosymplectic manifold}
does not apply.  It turns out that also in this situation there are no
KMS states for non-Hamiltonian dynamics.  Concerning Poisson traces,
their convex set is rather small.  Actually there exists a unique
trace, which necessarily coincides with the regular one induced by the
smooth density $\mu$ such that $Y_\mu=0$ -- recall that cosymplectic
manifolds are unimodular \textit{cf.} Theorem~\ref{Thm: KMS functional
  space for cosymplectic manifolds}.  In this situation interpreting
leaves as thermodynamical phases seems not plausible.

Notice that the existence of a unique Poisson trace can be expected
recalling item~\ref{Item: KMS support properties} of
Proposition~\ref{Prop: properties of KMS functionals}.  In fact, if
$x\in\supp(\varphi)$ then $L_x\subseteq\supp(\varphi)$ and therefore
$\supp(\varphi)=M$ as all leaves are dense in $M$.
\begin{proposition}
    \label{Prop: 3-torus KMS functional convex set}%
    Let $(M,\Pi_c)$ be the Poisson structure defined in \eqref{Eq:
      3-torus regular codim 1 Poisson manifold}.
    \begin{enumerate}
    \item\label{Item: KMS for rational c} If
        $c=\frac{p}{q}\in\mathbb{Q}$ then
        $\kms([X],\beta)=\{0\}$ unless $[X]=0$.  Moreover,
        extremal $\Pi_c$-traces corresponds to the canonical traces
        obtained on each leaf $L$ of $M$.
    \item\label{Item: KMS for irrational c} If
        $c\in\mathbb{R}\setminus\mathbb{Q}$ then
        $\kms([X],\beta)=\{0\}$ unless $[X]=0$.  Moreover
        $\kms(M,\Pi_c)=\{I_\mu\}$ is a singleton where
        \begin{align*}
            I_\mu(f)
            =
            \frac{1}{(2\pi)^3}\int_{[0,2\pi)^3}
            f(\theta_1,\theta_2,\theta_3)
            \mathrm{d}\theta_1\mathrm{d}\theta_2\mathrm{d}\theta_3\,.
        \end{align*}
    \end{enumerate}
\end{proposition}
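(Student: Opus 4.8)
The plan is to exploit the sharp dichotomy already visible in \eqref{Eq: 3-torus foliation for rational c}--\eqref{Eq: 3-torus foliation for irrational c}: for $c\in\mathbb{Q}$ the leaves are proper embedded $2$-tori and item~\ref{Item: KMS for compact cosymplectic manifold} of Theorem~\ref{Thm: KMS functional space for cosymplectic manifolds} applies verbatim, whereas for $c\notin\mathbb{Q}$ the leaves are dense, that theorem is unavailable, and I would run a Fourier argument on $\mathbb{T}^3$ instead. In both cases the Poisson cohomology is under control by Lemma~\ref{Lem: H1 3-torus}.

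For $c=p/q\in\mathbb{Q}$: since $M=\mathbb{T}^3$ is compact we have $\kms([X],\beta)=\overline{\operatorname{conv}}(\kms_0([X],\beta))$, and by \eqref{Eq: 3-torus foliation for rational c} every leaf $L$ is a proper embedded $2$-torus. Writing $[X]=[\sigma^\sharp+f\xi]$ as in Remark~\ref{Rmk: results on cosymplectic manifolds}, Lemma~\ref{Lem: non-existence criterion for regular codim 1 Poisson manifolds} forces any $\varphi\in\kms([X],\beta)$ off $\{f\neq0\}$, while Proposition~\ref{Prop: properties of KMS functionals}\,\ref{Item: KMS support properties} together with Proposition~\ref{Prop: extremal KMS supported on a proper leaf is extremal} (exactly as in the proof of item~\ref{Item: KMS for compact cosymplectic manifold} of Theorem~\ref{Thm: KMS functional space for cosymplectic manifolds}) shows that any extremal KMS state is supported on a single leaf $L$, on which $X$ must be tangent and $f$ must vanish. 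Using the parametrization in \eqref{Eq: 3-torus foliation for rational c} and Lemma~\ref{Lem: H1 3-torus}, the restriction $X|_L$ is a \emph{constant} vector field on $L\simeq\mathbb{T}^2$ (its coefficients are Casimirs, hence constant along $L$, and $\partial_{\theta_1}$, $\partial_{\theta_2}+c\partial_{\theta_3}$ restrict to translation-invariant fields), and $\iota_L^*\omega$ is a non-zero constant multiple of the standard area form; a constant vector field on $\mathbb{T}^2$ is Hamiltonian only if it vanishes, so Remark~\ref{Rmk: KMS for symplectic manifolds} applied to $(L,\iota_L^*\omega)$ gives $\varphi=c\,I_{\iota_L^*\omega}$ with $c\geq0$ and $X|_L=0$. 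For $[X]=0$ every leaf qualifies, so the extremal $\Pi_c$-traces are precisely the canonical Liouville traces of the leaves, and for a class non-trivial on every leaf no leaf survives, whence $\kms([X],\beta)=\{0\}$.

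For $c\in\mathbb{R}\setminus\mathbb{Q}$: by Lemma~\ref{Lem: H1 3-torus} one has $H^0_{\Pi_c}(M)\simeq\mathbb{R}$, so $[X]$ admits the representative $X=t_1\partial_{\theta_1}+t_2(\partial_{\theta_2}+c\partial_{\theta_3})+t_3\partial_{\theta_3}$ with $t_1,t_2,t_3\in\mathbb{R}$, and $[X]=0$ iff $t_1=t_2=t_3=0$. Let $\varphi\in\KMS([X],\beta)$ and set $c_k:=\mu_\varphi(e^{ik\cdot\theta})$ ($k\in\mathbb{Z}^3$) for the Fourier coefficients of the measure $\mu_\varphi$ of Remark~\ref{Rmk: on definition of KMS functional}. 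Testing \eqref{Eq: classical KMS condition} on $f=e^{im\cdot\theta}$, $g=e^{in\cdot\theta}$ with $m+n=k$, and using $\Pi_c=\partial_{\theta_1}\wedge(\partial_{\theta_2}+c\partial_{\theta_3})$, yields after substituting $n=k-m$
\begin{align*}
    \Big[\,m_1\big(-(k_2+ck_3)-\beta i t_1\big)+m_2\big(k_1-\beta i t_2\big)+m_3\big(c(k_1-\beta i t_2)-\beta i t_3\big)\Big]\,c_k=0
\end{align*}
for every $m\in\mathbb{Z}^3$. Choosing $m=e_1,e_2,e_3$ forces each bracketed coefficient to annihilate $c_k$; comparing real and imaginary parts (the $t_i$ are real, the $k_i$ integral, $\beta>0$) and using that $k_2+ck_3=0$ forces $k_2=k_3=0$ because $c$ is irrational, one concludes that $c_k\neq0$ is possible only when $k=0$ and $t_1=t_2=t_3=0$. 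Thus if $[X]\neq0$ then $c_k=0$ for all $k$, i.e.\ $\varphi=0$; and if $[X]=0$ then $c_k=0$ for all $k\neq0$, so $\mu_\varphi$ is a multiple of the Haar measure and $\kms(M,\Pi_c)=\{I_\mu\}$ with $I_\mu$ the normalized Lebesgue integral.

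The step I expect to be the crux is the Fourier computation in the irrational case: one has to verify that the coefficients of $m_1,m_2,m_3$ genuinely separate — this is precisely where $\mathbb{Q}$-linear independence of $1$ and $c$ is used — and that irrationality then kills $k_2,k_3$; the remaining ingredients are bookkeeping on top of Theorem~\ref{Thm: KMS functional space for cosymplectic manifolds}, Lemma~\ref{Lem: H1 3-torus} and Lemma~\ref{Lem: non-existence criterion for regular codim 1 Poisson manifolds}. One subtlety worth flagging in the rational case is that a non-zero class whose representative happens to vanish along an entire leaf still carries that leaf's Liouville trace; the assertion $\kms([X],\beta)=\{0\}$ is to be read for classes that are non-trivial on every leaf, equivalently the extremal KMS states are always exactly the Liouville traces of the leaves $L$ with $X|_L=0$.
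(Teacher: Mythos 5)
Your proof is correct. In the rational case you follow essentially the paper's route: Lemma~\ref{Lem: non-existence criterion for regular codim 1 Poisson manifolds} removes the transverse component, item~\ref{Item: KMS for compact cosymplectic manifold} of Theorem~\ref{Thm: KMS functional space for cosymplectic manifolds} localizes extremal states on the (proper) leaves, and the observation that $X|_L$ is a constant vector field on $L\simeq\mathbb{T}^2$, hence Hamiltonian only if zero, finishes the classification. The caveat you flag is real: since $H^0_{\Pi_c}(M)$ is infinite-dimensional for rational $c$, a Casimir coefficient can vanish on a single leaf without vanishing identically, in which case that leaf's Liouville trace survives even though $[X]\neq 0$; the paper's assertion that ``$X|_L$ is not Hamiltonian for any leaf unless $X=0$'' silently excludes this, and your reformulation (the extremal KMS states are exactly the Liouville traces of the leaves on which $X|_L=0$) is the accurate statement. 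In the irrational case your route is genuinely different. The paper first kills $t_3$ via Lemma~\ref{Lem: non-existence criterion for regular codim 1 Poisson manifolds}, then derives $\varphi(f)=e^{\beta t_2 s}\varphi(\Psi^*_s f)$ (and its analogue for $t_1$) from Equation~\eqref{Eq: global classical KMS condition} and lets $s\to\pm\infty$, and finally proves uniqueness of the trace by decomposing $f$ into $\partial_{\theta_1}g_1+(\partial_{\theta_2}+c\partial_{\theta_3})g_2+(\mathcal{P}_{\theta_1}\mathcal{P}_{\theta_2}\mathcal{P}_{\theta_3}f)\chi_1\chi_2\chi_3$, a step that requires smooth solvability of the cohomological equation and hence the Diophantine estimate of Lemma~\ref{Lem: magnitude of rationals approximating irrationals}. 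Your Fourier computation (which I have checked: with $n=k-m$ the bracket does reduce to $m_1(-(k_2+ck_3)-i\beta t_1)+m_2(k_1-i\beta t_2)+m_3(ck_1-i\beta(ct_2+t_3))$, and taking $m=e_1,e_2,e_3$, splitting real and imaginary parts, and using irrationality of $c$ forces $k=0$ and $t_1=t_2=t_3=0$ whenever $c_k\neq 0$) replaces all of this: it constrains the Fourier coefficients of the finite positive measure $\mu_\varphi$ directly, never divides by $p+cq$, and so needs no small-denominator control, while a measure on $\mathbb{T}^3$ is determined by its Fourier coefficients. What your approach buys is a shorter, unified treatment of both the vanishing of $\kms([X],\beta)$ for $[X]\neq 0$ and the uniqueness of the trace; what it costs is reliance on the group structure of $\mathbb{T}^3$, whereas the paper's flow-decay argument is coordinate-free and is reused verbatim in Proposition~\ref{Prop: spiral KMS functional convex set}.
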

\begin{proof}
    Let $X\in [X]$ be as in Equation~\eqref{Eq: 3torus Poisson vector
      field}.
    \begin{description}
    \item[\ref{Item: KMS for rational c}] Since $M$ is compact with a
        proper leaf, we may apply item \ref{Item: KMS for compact
          cosymplectic manifold} of Theorem~\ref{Thm: KMS functional
          space for cosymplectic manifolds}.  This proves the claim on
        Poisson traces, moreover, it entails that :(a) if $\kms_0([X],\beta)\neq\{0\}$ then $X$ has to be tangent to the leaves of $M$; (b) $\varphi\in\kms_0([X],\beta)$ is supported on a leaf
        $L$ such that $[X|_L]$ is Hamiltonian.  However, $X|_L$ is not
        Hamiltonian for any leaf unless $X=0$, therefore
        $\kms([X],\beta)=\{0\}$ for non trivial $[X]$.
    \item[\ref{Item: KMS for irrational c}] Let
        $X = t_1\partial_{\theta_1} + t_2(\partial_{\theta_2} +
        c\partial_{\theta_3}) + t_3\partial_{\theta_3}\in[X]$, where
        $t_1,t_2,t_3\in\mathbb{R}$, and let
        $\varphi\in\kms([X],\beta)$.  We shall show that
        $\varphi\neq 0$ implies that $t_1=t_2=t_3=0$.  Lemma~\ref{Lem:
          non-existence of KMS functional supported on Z with dynamics
          containing modular class} entails that
        $\kms([X],\beta)=\{0\}$ unless $t_3=0$.  We now prove
        that $t_2=0$.  For all $f\in C^\infty_{\mathrm{c}}(M)$ we
        shall consider Equation~\eqref{Eq: global classical KMS
          condition}.  Let $\chi_1$, $\chi_2$ be a partition of unity
        associated with the covering
        $(0,2\pi) =
        [(0,\frac{\pi}{2})\cup(\frac{3\pi}{4},2\pi)]\cup(\frac{\pi}{4},\pi)$.
        We may then compute
	\begin{align*}
            \varphi(\partial_{\theta_1}f)
            &=
            \varphi(\partial_{\theta_1}(f\chi_1))
            +
            \varphi(\partial_{\theta_1}(f\chi_2))
            \\
            &=
            \varphi(\{f\chi_1,\theta_2 1_f\})
            +
            \varphi(\{f\chi_2,\theta_2 1_f\})
            \\
            &=
            -\beta\varphi(X(\theta_2)f\chi_1)
            -\beta\varphi(X(\theta_2)f\chi_2)
            \\
            &=
            -\beta t_2\varphi(f)\,,
	\end{align*}
	where $1_f\in C^\infty_{\mathrm{c}}(M)$ is such that
        $1_f|_{\supp(f)}=1$.  Overall we find
        $\varphi(f)=e^{\beta t_2s}\varphi(\Psi^*_sf)$ for all
        $s\in\mathbb{R}$, where
        $\Psi_s=\Psi^{(\partial_{\theta_1})}_s$ denotes the flow
        associated with $\partial_{\theta_1}$.  As
        $|\varphi(\Psi^*_sf)|$ is bounded in $s\in\mathbb{R}$ we find
	\begin{align*}
            \varphi(f)
            =
            \begin{cases}
                \lim\limits_{s\to+\infty}
                e^{\beta t_2 s}\varphi(\Psi^*_sf) = 0
                & \textrm{for } t_2<0
                \\
                \lim\limits_{s\to-\infty}
                e^{\beta t_2 s}\varphi(\Psi^*_sf) = 0
                & \textrm{for } t_2>0\,,
            \end{cases}
        \end{align*}
        which shows that $t_2=0$ if $\varphi\neq 0$.
	With a similar argument we find
	\begin{align*}
            \varphi(f)
            =
            e^{\beta t_1s}\varphi(\Psi_s^*f)\,,
            \qquad
            \forall s\in\mathbb{R}\,,
	\end{align*}
	where $\Psi_t$ denotes the flow associated with
        $\partial_{\theta_2}+c\partial_{\theta_3}$.  Again,
        $|\varphi(\Psi_s^*f)|$ is bounded in $s$ and therefore
	\begin{align*}
            \varphi(f)
            =
            \begin{cases}
		\lim\limits_{s\to-\infty}
                e^{\beta t_1s}\varphi(\Psi^*_s f) = 0
                & \textrm{for } t_1>0
                \\
		\lim\limits_{s\to+\infty}
                e^{\beta t_1s}\varphi(\Psi^*_s f) = 0
                & \textrm{for } t_1<0\,.
            \end{cases}
        \end{align*}
	It follows that $\varphi\neq0$ entails $t_1=0$ as
        well.

	It remains to prove that there exists a unique trace.  Let
        $\varphi\in\KMS(M,\Pi_c)$ and let $0\leq\chi_j\in C^\infty(M)$
        such that $\mathcal{P}_{\theta_j}\chi_j=1$ for all
        $j\in\{1,2,3\}$.  If follows that for all
        $f\in C^\infty_{\mathrm{c}}(M)$
	\begin{align*}
            f
            &=
            f-(\mathcal{P}_{\theta_1}f)\chi_1
            +
            [(\mathcal{P}_{\theta_1}f)\chi_1
            -
            (\mathcal{P}_{\theta_1}\mathcal{P}_{\theta_2}\mathcal{P}_{\theta_3}f)
            \chi_1\chi_2\chi_3]
            +
            (\mathcal{P}_{\theta_1}\mathcal{P}_{\theta_2}\mathcal{P}_{\theta_3}f)
            \chi_1\chi_2\chi_3
            \\
            &=
            \partial_{\theta_1}g_1
            +
            (\partial_{\theta_2}+c\partial_{\theta_3})g_2
            +
            (\mathcal{P}_{\theta_1}\mathcal{P}_{\theta_2}\mathcal{P}_{\theta_3}f)
            \chi_1\chi_2\chi_3\,,
	\end{align*}
	where $g_1,g_2\in C^\infty_{\mathrm{c}}(M)$ solves
	\begin{align*}
            \partial_{\theta_1}g_1
            =
            f-(\mathcal{P}_{\theta_1}f)\chi_1
            =:
            f_1\,,
            \qquad
            (\partial_{\theta_2}+c\partial_{\theta_3})g_2
            =
            (\mathcal{P}_{\theta_1}f)\chi_1
            -
            (\mathcal{P}_{\theta_1}\mathcal{P}_{\theta_2}\mathcal{P}_{\theta_3}f)\chi_1\chi_2\chi_3
            =:
            f_2\,.
	\end{align*}
	Notice that $g_1,g_2$ exist on account of the conditions
        $\mathcal{P}_{\theta_1}f_1=0$ and
        $\mathcal{P}_{\theta_2}\mathcal{P}_{\theta_3}f_2=0$ --
        \textit{cf.} the proof of Lemma~\ref{Lem: H1 3-torus}.  Since
        $\partial_{\theta_1}$,
        $\partial_{\theta_2}+c\partial_{\theta_3}$ are locally
        Hamiltonian a partition of unity argument leads to
	\begin{align*}
            \varphi(f)
            =
            (\mathcal{P}_{\theta_1}\mathcal{P}_{\theta_2}\mathcal{P}_{\theta_3}f)
            \varphi(\chi_1\chi_2\chi_3)
            =
            c\int_{\mathbb{T}^3}
            f(\theta_1,\theta_2,\theta_3)
            \mathrm{d}\theta_1\mathrm{d}\theta_2\mathrm{d}\theta_3\,,
	\end{align*}
	where $c=\frac{1}{(2\pi)^3}$ on account of the condition
        $\varphi(1)=1$.
    \end{description}
\end{proof}

%
%

\subsection{A Regular Poisson Structure on
  $\mathbb{R}^2\times\mathbb{T}$}
\label{Sec: spiral regular Poisson manifolds}

We now provide an example of a non-compact non-unimodular regular
Poisson manifold $(M,\Pi)$.  In particular, the symplectic foliation
of $M$ is made by leaves which are not embedded submanifolds of $M$ -- with one
exception.  Nevertheless, the structure of the KMS convex cone is
rather rich -- \textit{cf.} Proposition~\ref{Prop: spiral KMS
  functional convex set}.  In fact, there exist KMS functionals for
dynamics $[X]$ which differs from the modular class $[Y_\Pi]$.
Moreover, we can still find KMS functionals supported on leaves.  This
is a hint of the fact that these leaves fail to be submanifolds of $M$
in a rather controlled way.

We consider
\begin{align}
    \label{Eq: spiral regular codim 1 Poisson manifold}
    M
    =
    \mathbb{R}^2\times\mathbb{T}\,,
    \qquad
    \Pi
    =
    (x\partial_x+\partial_\theta)\wedge\partial_y\,,
\end{align}
where $x, y \in \mathbb{R}$ while $\theta$ is the usual quasi-global
coordinate on $\mathbb{T}$.  The symplectic foliation of $(M,\Pi)$ is
easily computed to be the collections of codimension $1$ submanifolds
\begin{align}
    \label{Eq: spiral symplectic foliation}
    L_{(x_0,\theta_0,y_0)}
    :=
    \begin{cases}
	\left\lbrace(x_0e^u,\theta_0+u,y)\,|\,u,y\in\mathbb{R}\right\rbrace
        & \textrm{for }	x_0\neq 0
        \\
	\left\lbrace(0,\theta,y)\,|\,y\in\mathbb{R}\,,\,\theta\in[0,2\pi)\right\rbrace
        & \textrm{for } x_0= 0.
    \end{cases}
\end{align}
As $L_{(x_0,\theta_0,y_0)}=L_{(x_0e^{\theta_0},0,0)}$, in what follows
we shall denote the leaves as $L_x:=L_{(x,0,0)}$.  Notice in
particular that $L_{xe^{2\pi}}=L_x$, therefore, the leaf space is the
collection
\begin{align*}
    \left\{
        L_x
        \,\big|\,
        x \in (-e^{-2\pi},-1] \cup \{0\} \cup[1,e^{2\pi})
    \right\}\,.
\end{align*}
Notice that none of these leaves is a proper submanifold of $M$ except
for the case $x_0 = 0$ -- \textit{cf.} Figure~\ref{Fig: regular codim
  1 symplectic foliation}.
\begin{remark}
    Denoting $\eta:=x\mathrm{d}\theta-\mathrm{d}x$ we have that
    $\eta\in\Omega^1(M)$ is no-where vanishing while the symplectic
    foliation \eqref{Eq: spiral symplectic foliation} coincides with
    the one induced by $\eta$.  Moreover, if we consider
    $\omega:=\mathrm{d}y\wedge\mathrm{d}\theta\in\Omega^2(M)$ we have
    that $\eta\wedge\omega$ is never vanishing.  In this situation the
    Poisson tensor $\Pi$ of $M$ is shown to coincide with the one
    defined in \eqref{Eq: Poisson tensor for cosymplectic manifolds}.
    However, notice that $\eta$ is not closed and therefore $M$ is not
    cosymplectic. In fact, $M$ is called an \defterm{almost cosymplectic}
    manifold, see
    \cite[Def.~2.1]{CappellettiMontano-DeNicola-Yudin-13}.  This
    entails in particular that this example is not covered by the
    results discussed in the previous sections.
\end{remark}

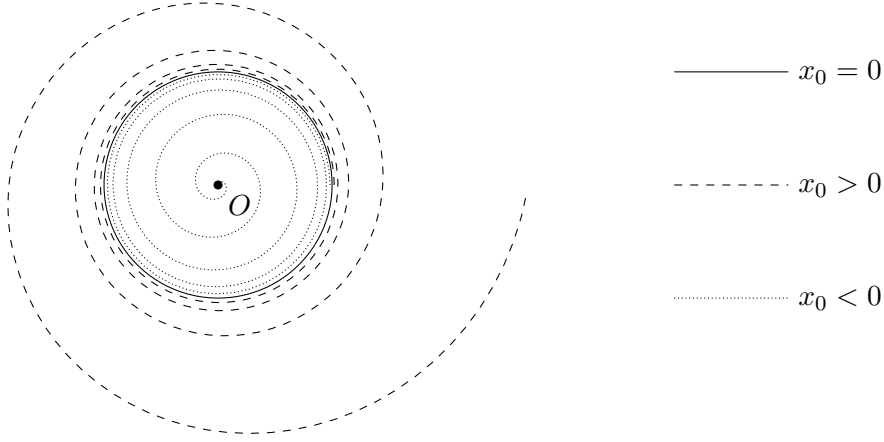
\begin{figure}[h!]
	\centering
	\begin{tikzpicture}[scale=1.5]
		\draw[-] (4,1) -- (5,1) node[right] {$x_0=0$};
		\draw[dashed] (4,0) -- (5,0) node[right] {$x_0>0$};
		\draw[densely dotted] (4,-1) -- (5,-1) node[right] {$x_0<0$};
		\filldraw (0,0) circle (1pt) node[below right] {$O$};
		\draw[variable=\t,domain=0:360,samples=100]
		plot ({cos(\t)},{sin(\t)});
		\draw[dashed,variable=\t,domain=-4:0,samples=300]
		plot ({exp(exp(\t))*cos(360*\t)},{exp(exp(\t))*sin(360*\t)});
		\draw[densely dotted,variable=\t,domain=-4:3,samples=300]
		plot ({exp(-exp(\t))*cos(360*\t)},{exp(-exp(\t))*sin(360*\t)});
	\end{tikzpicture}
	\caption{The regular foliation of $(M,\Pi)$ visualized in the
		plane -- barring the $y$-direction -- where the radial
		coordinate $r$ is identified by $\log(r)=x_0e^u$ ---\textit{cf.}
		Equation \eqref{Eq: spiral symplectic foliation}.  Symplectic
		leaves with $x_0>0$ are spirals with flows anticlockwise at
		infinity, starting (at $-\infty$) at the disk $\mathbb{D}_1$ of
		radius $r=1$.
		For $x_0<0$ we get a spiral which flows anticlockwise from the disk $\mathbb{D}_1$ to the origin.
		Finally $x_0=0$ is $\mathbb{D}_1$.}
	\label{Fig: regular codim 1 symplectic foliation}
\end{figure}

In order to study KMS functionals on $(M,\Pi)$ we first compute the
first Poisson cohomology group $H_\Pi^1(M)$ of $(M,\Pi)$.
\begin{lemma}
    \label{Lem: H1 spiral}%
    For the Poisson manifold given by Equation~\eqref{Eq: spiral
      regular codim 1 Poisson manifold} one has
    \begin{align}
        H_\Pi^1(M)
        \simeq
        \mathbb{R}^2\,.
    \end{align}
    More precisely, any element $[X]\in H_\Pi^1(M)$ is a linear
    combination of $[\partial_\theta]$ and $[\partial_y]$.  Finally,
    the modular class $Y_\Pi$ of $(M,\Pi)$ corresponds to
    $[\partial_y]$.
\end{lemma}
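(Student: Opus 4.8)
The plan is to work throughout in the global coordinates $(x,\theta,y)$ on $M=\mathbb R^2\times\mathbb T$ and bring an arbitrary Poisson vector field into a normal form by subtracting Hamiltonian fields; since $\eta=x\,\mathrm d\theta-\mathrm dx$ is not closed the isomorphism from the cosymplectic setting is unavailable, so a direct argument is needed. Writing $X=a\partial_x+b\partial_\theta+c\partial_y$ with $a,b,c\in C^\infty(M)$, the condition $\mathcal{L}_X\Pi=0$ becomes the three equations $a_y=xb_y$, $a-xa_x-a_\theta=xc_y$ and $xb_x+b_\theta+c_y=0$ (subscripts denote partials). A parallel computation gives, for $f\in C^\infty(M)$, the Hamiltonian field $X_f=xf_y\partial_x+f_y\partial_\theta-(xf_x+f_\theta)\partial_y$; in particular $\partial_\theta$ and $\partial_y$ are both Poisson, and for $\mu=\mathrm dx\wedge\mathrm d\theta\wedge\mathrm dy$ one finds $Y_\mu(f)=\operatorname{div}_\mu(X_f)=f_y$, so that $[Y_\Pi]=[\partial_y]$.

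Next I would normalize $X$. Evaluating the second Poisson equation at $x=0$ gives $a(0,\theta,y)=a_\theta(0,\theta,y)$; as $\theta\mapsto a(0,\theta,y)$ is $2\pi$-periodic and solves $u'=u$, it must vanish identically, so $a=x\tilde a$ for some $\tilde a\in C^\infty(M)$. Setting $f_0(x,\theta,y):=\int_0^y\tilde a(x,\theta,s)\,\mathrm ds$ (smooth, since the integral is over a compact interval), the field $X-X_{f_0}$ has vanishing $\partial_x$-component; using $a_y=xb_y$ its $\partial_\theta$-component $B:=b-\tilde a$ is $y$-independent, and re-imposing the Poisson equations on $X-X_{f_0}=B(x,\theta)\partial_\theta+C(x,\theta,y)\partial_y$ forces $C=C(x,\theta)$ and $xB_x+B_\theta=0$. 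The last transport equation means $B$ is constant along the spiral leaves; since every such leaf accumulates onto the circle $\{x=0\}$, where $B_\theta(0,\cdot)=0$ makes $B$ a constant $\alpha$, continuity of $B$ forces $B\equiv\alpha$ on all of $M$. Thus every class is represented by $\alpha\partial_\theta+C(x,\theta)\partial_y$.

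The crux is to identify $[C(x,\theta)\partial_y]$. A field $D\partial_y$ is Hamiltonian exactly when $D=-R_0(h)$ for some $h=h(x,\theta)$, where $R_0:=x\partial_x+\partial_\theta$, so I would prove the key lemma $\operatorname{im}\bigl(R_0\colon C^\infty(\mathbb R\times\mathbb T)\to C^\infty(\mathbb R\times\mathbb T)\bigr)=\{g:\int_0^{2\pi}g(0,\theta)\,\mathrm d\theta=0\}$. Necessity is immediate from $\int_0^{2\pi}R_0(h)(0,\theta)\,\mathrm d\theta=\int_0^{2\pi}h_\theta(0,\theta)\,\mathrm d\theta=0$. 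For sufficiency: writing $g(0,\cdot)=p'$ with $p$ periodic and subtracting $R_0(\chi(x)p(\theta))$ with $\chi(0)=1$ reduces to the case $g|_{\{x=0\}}=0$, i.e.\ $g=xg_1$; then $h=xk$ solves $R_0h=g$ provided $(\mathrm{Id}+R_0)k=g_1$, and $\mathrm{Id}+R_0$ is surjective via the flow integral $k(x,\theta):=\int_{-\infty}^0 e^{t}g_1(xe^{t},\theta+t)\,\mathrm dt$, which converges and is smooth across $\{x=0\}$ because $xe^t$ stays bounded while $e^t\to0$ as $t\to-\infty$. Hence $C-\beta\in\operatorname{im}R_0$ for $\beta:=\tfrac1{2\pi}\int_0^{2\pi}C(0,\theta)\,\mathrm d\theta$, so $[X]=\alpha[\partial_\theta]+\beta[\partial_y]$ and $[\partial_\theta],[\partial_y]$ span $H^1_\Pi(M)$. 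Linear independence follows from the same lemma: if $\alpha\partial_\theta+\beta\partial_y=X_f$ then $f_y=0$ gives $\alpha=0$, and $R_0(f)=-\beta$ then gives $\beta=0$. This yields $H^1_\Pi(M)\cong\mathbb R^2$ with modular class $[\partial_y]$.

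I expect the main obstacle to be the cokernel computation for $R_0$ on the cylinder $\mathbb R\times\mathbb T$, and in particular verifying that the flow-integral solution extends smoothly across the circle $\{x=0\}$, where $R_0$ degenerates in the $\partial_x$-direction; the rest is bookkeeping with the three Poisson equations together with the accumulation property of the spiral leaves.
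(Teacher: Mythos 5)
Your proof is correct and follows essentially the same route as the paper: the same three equations from $\mathcal{L}_X\Pi=0$, the same reduction $a=x\tilde a$ followed by subtraction of a Hamiltonian field with $f_y=\tilde a$, the same accumulation-of-leaves argument forcing the $\partial_\theta$-coefficient to be constant, and the same characterization of $\operatorname{im}(x\partial_x+\partial_\theta)$ by the vanishing of $\int_0^{2\pi}g(0,\theta)\,\mathrm{d}\theta$ via a flow integral over $(-\infty,0]$. Your two-step solution of $R_0h=g$ (peel off the value on $\{x=0\}$, then invert $\mathrm{Id}+R_0$) is just a refactoring of the paper's single explicit formula $f(x,\theta)=f(0,0)-\int_0^\theta b(0,s)\,\mathrm{d}s-\int_{-\infty}^0[b(xe^t,\theta+t)-b(0,\theta+t)]\,\mathrm{d}t$.
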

\begin{proof}
    Let $X=\xi\partial_x+\eta\partial_y+\tau\partial_\theta$ be a
    Poisson vector field on $(M,\Pi)$ where
    $\xi, \eta, \tau \in C^\infty(M)$.  The condition
    $\mathcal{L}_X(\Pi) = 0$ implies the following equations:
    \begin{subequations}
        \begin{align}
            \label{Eq: DxDy Poisson condition}
            \xi-x\xi_x-\xi_\theta-x\eta_y
            &=0
            \\\label{Eq: DthetaDy Poisson condition}
            x\tau_x+\tau_\theta+\eta_y
            &=0
            \\\label{Eq: DthetaDx Poisson condition}
            \xi_y-x\tau_y
            &=0\,.
        \end{align}
    \end{subequations}
    Evaluation of Equation~\eqref{Eq: DxDy Poisson condition} at $x=0$
    ensures that $\xi(0,y,\theta)=0$, that is,
    $\xi(x,y,\theta)=x\alpha(x,y,\theta)$ for $\alpha\in C^\infty(M)$.
    Equations~\eqref{Eq: DxDy Poisson condition} and \eqref{Eq:
      DthetaDx Poisson condition} simplify to
    \begin{align}
        x\alpha_x+\alpha_\theta+\eta_y
        =
        0\,,
        \qquad
        \alpha_y=\tau_y\,.
    \end{align}
    Combining the found equations with \eqref{Eq: DthetaDy Poisson
      condition} it follows that
    \begin{align*}
        X
        =
        \alpha(x\partial_x+\partial_\theta)
        +
        \eta\partial_y
        +
        A\partial_\theta\,,
    \end{align*}
    where
    \begin{align}
        \eta_y
        =
        -(x\alpha_x
        +\alpha_\theta)\,,
        \qquad
        A_y=0\,,
        \qquad
        xA_x+A_\theta=0\,.
    \end{align}
    The equations for $A$ imply that $A$ is a function of $x$ and
    $\theta$ only. Moreover, $A(0,\theta)$ is constant.  Furthermore
    one has
    \begin{align*}
        A(x,\theta)
        =
        A(xe^s,\theta+s)
        \stackrel{s=-2\pi}{=}A(xe^{-2\pi},\theta)
        =
        \lim_{n\to +\infty} A(xe^{-2\pi n},\theta)
        =
        A(0,\theta)\,,
    \end{align*}
    which proves that, in fact, $A\in\mathbb{R}$ is constant.

    We now consider the cohomology class $[X]\in H_\Pi^1(M)$ of $X$.
    We observe that, for all $f\in C^\infty(M)$,
    \begin{align}\label{Eq: regular codim 1 Hamiltonian vector field}
        X_f
        =
        f_y(x\partial_x+\partial_\theta)
        -
        (xf_x+f_\theta)\partial_y\,.
    \end{align}
    Let $f\in C^\infty(M)$ be such that $f_y=\alpha$.
    Equation~\eqref{Eq: DthetaDy Poisson condition} implies that
    $\eta=b-(xf_x+f_\theta)$, being $b\in C^\infty(M)$ with $b_y=0$.
    Comparing with \eqref{Eq: regular codim 1 Hamiltonian vector
      field} we have
    \begin{align*}
        [X]
        =
        [b\partial_y+A\partial_\theta]\,.
    \end{align*}
    We now consider the task of finding $f\in C^\infty(M)$ such that
    $X_f=b\partial_y$: Equation~\eqref{Eq: regular codim 1 Hamiltonian
      vector field} entails
    \begin{align*}
        xf_x+f_\theta=-b\,,
        \qquad
        f_y=0\,.
    \end{align*}
    These last equations give
    \begin{align}\label{Eq: solution of xDx+Dtheta PDE with source}
        f(xe^s,\theta+s)
        =
        f(x,\theta)
        -
        \int_0^sb(xe^t,\theta+t)\mathrm{d}t\,,
    \end{align}
    which in particular implies the condition
    \begin{align}\label{Eq: b-condition}
        \int_0^{2\pi}b(0,\theta)\mathrm{d}\theta=0\,.
    \end{align}
    The latter condition is necessary and sufficient to determine $f$
    up to a constant, as we shall see now.  Indeed, by assuming
    Equation \eqref{Eq: b-condition} and letting $s=-2\pi n$,
    $n\in\mathbb{N}$, in Equation \eqref{Eq: solution of xDx+Dtheta
      PDE with source} we find
    \begin{align*}
        f(xe^{-2\pi n},\theta)
        =
        f(x,\theta)
        -
        \int_0^{-2\pi n}b(xe^t,\theta+t)\mathrm{d}t
        =
        f(x,\theta)
        -
        \int_0^{-2\pi n}[b(xe^t,\theta+t)-b(0,\theta+t)]\mathrm{d}t\,,
    \end{align*}
    where we exploited the assumption on $b$.  We also observe that
    $f(0,\theta)$ is determined up to a constant by $b(0,\theta)$ as
    one can see by evaluating at $x=0$ the differential equation
    $xf_x+f_\theta=-b$ ---this is another way to recover the necessary
    condition in Equation \eqref{Eq: b-condition}.  Therefore,
    considering the limit $n\to+\infty$ of the previous equation we
    obtain
    \begin{align}\label{Eq: explicit solution of xDx+Dtheta PDE with source}
        f(x,\theta)=f(0,0)
        -\int_0^\theta b(0,s)\mathrm{d}s
        -\int_{-\infty}^0[b(xe^t,\theta+t)-b(0,\theta+t)]\mathrm{d}t\,.
    \end{align}
    Notice that integral on $(-\infty,0)$ converges because the
    function $c(x,t):=b(x,\theta)-b(0,\theta)$ vanishes at $x=0$,
    therefore, we can find $\tilde{c}\in C^\infty(M)$ such that
    $c(x,\theta)=x\tilde{c}(x,\theta)$.  This entails that
    $|c(xe^t,\theta+t)|=|xe^t\tilde{c}(xe^t,\theta+t)|\leq C|x|e^t$, where $C:=\sup_{x\in(0,1),\theta\in[0,2\pi)}|\tilde{c}(x,\theta)|$, which proves the convergence of the integral.

    Equation \eqref{Eq: explicit solution of xDx+Dtheta PDE with
      source} provides an explicit formula for $f$ under the assumption
    of Equation \eqref{Eq: b-condition}.  As a matter of fact an
    explicit computation shows that Equation \eqref{Eq: explicit
      solution of xDx+Dtheta PDE with source}, together with the
    choice of the constant $f(0,0)\in\mathbb{R}$, defines a smooth
    function $f\in C^\infty(M)$ which solves $xf_x+f_\theta=-b$.
    Indeed, smoothness of $f$ as per Equation \eqref{Eq: explicit
      solution of xDx+Dtheta PDE with source} is clear ---notice that
    $\theta$-periodicity is ensured by the periodicity condition of
    $b$ as well as Equation \eqref{Eq: b-condition}--- whereas by
    direct inspection we have
    \begin{align*}
        (x\partial_x+\partial_\theta)f(x,\theta)
        =-b(0,\theta)
        -\int_{-\infty}^0\frac{\mathrm{d}}{\mathrm{d}t}\left[b(xe^t,\theta+t)-b(0,\theta+t)\right]\mathrm{d}t
        =-b(x,\theta)\,.
    \end{align*}
    Overall we have that, setting
    $B:=\frac{1}{2\pi}\int_0^{2\pi}b(0,\theta)\mathrm{d}\theta$,
    \begin{align*}
        [X]=[(b-B)\partial_y+B\partial_y+A\partial_\theta]
        =[B\partial_y+A\partial_\theta]\,.
    \end{align*}
    It remains to show that $[B\partial_y]\neq[0]$ if $B\neq 0$: This
    has already been shown in \cite{Weinstein-97}.  Indeed,
    $B\partial_y=X_f$ for $f\in C^\infty(M)$ would imply
    \begin{align*}
        xf_x+f_\theta=B\,,\qquad f_y=0\,.
    \end{align*}
    Considering $F(x):=\int_0^{2\pi}f(x,\theta)\mathrm{d}\theta$, the
    first equation implies $xF'(x)=B$ which leads to a contradiction
    as $F\in C^\infty(\mathbb{R})$.

    The modular class $Y_\Pi$ can be computed considering
    $\mu=\mathrm{d}x\mathrm{d}y\mathrm{d}\theta$, thus proving the
    rest of the assertion.
\end{proof}

We now investigate the convex cone $\KMS([X],\beta)$ for all
$[X]\in H_\Pi^1(M)$.  To this avail the following lemma with be
relevant.
\begin{lemma}
    \label{Lem: range of Dbeta}%
    Let $D_\beta$ be the differential operator defined by
    \begin{align}
        \label{Eq: Dbeta differential operator}
        D_\beta f
        :=
        xf_x
        +
        f_\theta
        +
        \beta f\,.
    \end{align}
    Moreover, for all $\theta$ and
    $x\in(-e^{2\pi},-1]\cup[1,e^{2\pi})$ let $_x\psi_\theta^\beta$ be
    the linear functional
    \begin{align}
        \label{Eq: xpsitheta functional}
        _x\psi_\theta^\beta\colon
        C^\infty_{\mathrm{c}}(\mathbb{R}\times\mathbb{T})\to\mathbb{C}\,,
        \qquad
        _x\psi_\theta^\beta(f)
        :=
        \int_{\mathbb{R}}f(x e^u,\theta+u)e^{\beta u}\mathrm{d}u\,.
    \end{align}
    Let $f\in C^\infty_{\mathrm{c}}(\mathbb{R}\times \mathbb{T})$.
    Then the following two conditions are equivalent:
    \begin{enumerate}
    \item \label{item:ExistenceFunctiong} There exists a function
        $g\in C^\infty_{\mathrm{c}}(\mathbb{R}\times\mathbb{T})$ such
        that $D_\beta g=f$.
    \item \label{item:PsixThetaNull} One has
        $_x\psi_\theta^\beta(f)=0$ for all $\theta$ and
        $x\in(-e^{2\pi},-1]\cup[1,e^{2\pi})$.
    \end{enumerate}
\end{lemma}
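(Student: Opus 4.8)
The plan is to use that the vector field $x\partial_x+\partial_\theta$ generates the flow $\Phi_u(x,\theta)=(xe^u,\theta+u)$ on $\mathbb{R}\times\mathbb{T}$, so that along each orbit $D_\beta$ becomes the elementary operator $e^{-\beta u}\frac{\mathrm{d}}{\mathrm{d}u}\big(e^{\beta u}\,\cdot\,\big)$: for every $h\in C^\infty(\mathbb{R}\times\mathbb{T})$ one has $\frac{\mathrm{d}}{\mathrm{d}u}\big[e^{\beta u}h(\Phi_u(x,\theta))\big]=e^{\beta u}(D_\beta h)(\Phi_u(x,\theta))$. Two features of the flow matter. First, an orbit with $x\neq 0$ is non-compact and leaves every compact set as $u\to+\infty$ (since $|x|e^u\to\infty$), while $e^{\beta u}\to0$ as $u\to-\infty$ because $\beta>0$. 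Second, by the reparametrisation identity ${}_x\psi_\theta^\beta(f)=e^{\beta v}\,{}_{xe^{v}}\psi_{\theta+v}^\beta(f)$, choosing $v$ so that $xe^{v}$ lies in $(-e^{2\pi},-1]\cup[1,e^{2\pi})$ shows that condition \ref{item:PsixThetaNull} is equivalent to ${}_x\psi_\theta^\beta(f)=0$ for all $x\neq 0$ and all $\theta$.

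For \ref{item:ExistenceFunctiong}$\,\Rightarrow\,$\ref{item:PsixThetaNull}: given $g\in C^\infty_{\mathrm{c}}(\mathbb{R}\times\mathbb{T})$ with $D_\beta g=f$, fix $x$ in the stated range and $\theta$, and integrate the identity above (with $h=g$) over $u\in\mathbb{R}$. The boundary term vanishes at $+\infty$ because $g$ has compact support and $\Phi_u(x,\theta)$ escapes to infinity, and at $-\infty$ because $g$ is bounded while $e^{\beta u}\to0$; hence ${}_x\psi_\theta^\beta(f)=\int_{\mathbb{R}}e^{\beta u}f(\Phi_u(x,\theta))\,\mathrm{d}u=0$.

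For the converse \ref{item:PsixThetaNull}$\,\Rightarrow\,$\ref{item:ExistenceFunctiong}, I would set
\begin{equation*}
    g(x,\theta):=\int_{-\infty}^{0}e^{\beta s}f(xe^{s},\theta+s)\,\mathrm{d}s ,
\end{equation*}
which converges for every $(x,\theta)$, with $|g(x,\theta)|\le\beta^{-1}\sup|f|$. For $x\neq 0$ the vanishing ${}_x\psi_\theta^\beta(f)=0$ (available for all $x\neq 0$ by the first paragraph) lets us rewrite $g(x,\theta)=-\int_{0}^{\infty}e^{\beta s}f(xe^{s},\theta+s)\,\mathrm{d}s$; since $f$ is supported in $\{|x|\le R\}$, this second form vanishes for $|x|>R$, so $g$ has compact support once smoothness is established. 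That $D_\beta g=f$ on $\{x\neq 0\}$ follows by differentiating $e^{\beta u}g(\Phi_u(x,\theta))$ in $u$ using the second formula and evaluating at $u=0$; an integration by parts in the first formula checks directly that $\partial_\theta g+\beta g=f$ on $\{x=0\}$, and then $D_\beta g=f$ everywhere, either by continuity or by density of $\{x\neq 0\}$.

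The one genuinely delicate point — and the reason \ref{item:PsixThetaNull} is needed at all — is smoothness of $g$ across the hypersurface $\{x=0\}$, which is precisely where the naive solution $-\int_0^{\infty}e^{\beta s}f(xe^{s},\theta+s)\,\mathrm{d}s$ would blow up. Here the first integral formula for $g$ is the convenient one: its integrand $e^{\beta s}f(xe^{s},\theta+s)$ is smooth in $(x,\theta,s)$, and for $(x,\theta)$ in a fixed compact set and $s\le 0$ every derivative satisfies $\big|\partial_x^{j}\partial_\theta^{k}\big(e^{\beta s}f(xe^{s},\theta+s)\big)\big|=\big|e^{(\beta+j)s}(\partial_x^{j}\partial_\theta^{k}f)(xe^{s},\theta+s)\big|\le e^{(\beta+j)s}\,\|\partial_x^{j}\partial_\theta^{k}f\|_\infty$, which is integrable on $(-\infty,0]$ since $\beta+j\ge\beta>0$. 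Differentiation under the integral sign then yields $g\in C^\infty(\mathbb{R}\times\mathbb{T})$. Everything else is routine bookkeeping with the flow $\Phi_u$.
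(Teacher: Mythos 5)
Your proof is correct and follows essentially the same route as the paper: the same candidate $g(x,\theta)=\int_{-\infty}^{0}e^{\beta s}f(xe^{s},\theta+s)\,\mathrm{d}s$, the same boundary-term computation along the flow for the forward implication, and the same use of the vanishing of $_x\psi_\theta^\beta(f)$ (extended to all $x\neq 0$ via the reparametrisation identity) to convert $g$ into $-\int_0^\infty$ and conclude compact support. Your explicit derivative bounds justifying differentiation under the integral across $\{x=0\}$ make precise a step the paper leaves implicit, but the argument is the same.
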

\begin{proof}
    We first notice that, for all $\theta$ and all
    $x\in(-e^{2\pi},-1]\cup[1,e^{2\pi})$ as well as
    $f\in C^\infty_{\mathrm{c}}(\mathbb{R}\times\mathbb{T})$,
    $_x\psi_\theta^\beta(f)$ is well-defined on account of the
    exponential decay of $e^{\beta u}$ as $u\to-\infty$ and thanks to
    the compact support of $f$ for $u\to+\infty$.  We prove the two
    implications separately.
    \begin{description}
    \item[$\boxed{\ref{item:ExistenceFunctiong}\Rightarrow\ref{item:PsixThetaNull}}$] Let
        $g\in C^\infty_{\mathrm{c}}(\mathbb{R}\times\mathbb{T})$ be
        such that $D_\beta g=f$.  For all $\theta\in[0,2\pi)$ and
        $x\in(-e^{2\pi},-1]\cup[1,e^{2\pi})$ a direct computation
        leads to
	\begin{align*}
            _x\psi_\theta^\beta(f)
            =
            \int_{\mathbb{R}}
            (D_\beta g)(xe^u,\theta+u)e^{\beta u}
            \mathrm{d}u
            =
            \int_{\mathbb{R}}
            \frac{\mathrm{d}}{\mathrm{d}u}
            [g(xe^u,\theta+u)e^{\beta u}]
            \mathrm{d}u
            =
            0\,.
	\end{align*}
    \item[$\boxed{\ref{item:PsixThetaNull}\Rightarrow\ref{item:ExistenceFunctiong}}$] Let
        $f\in C^\infty_{\mathrm{c}}(\mathbb{R}\times\mathbb{T})$ be
        such that $_x\psi_\theta^\beta(f)=0$ for all
        $\theta\in[0,2\pi)$ and $x\in(-e^{2\pi},-1]\cup[1,e^{2\pi})$.
        Any smooth solution
        $g\in C^\infty_{\mathrm{c}}(\mathbb{R}\times\mathbb{T})$ of
        $D_\beta g=f$ would satisfy
	\begin{align*}
            g(x,\theta)
            =
            e^{\beta u}g(xe^u,\theta+u)
            -
            \int_0^u
            f(xe^v,\theta+v)e^{\beta v}
            \mathrm{d}v\,,
            \qquad
            \forall u\in\mathbb{R}\,.
	\end{align*}
	Letting $u=-2\pi n$, $n\in\mathbb{N}$, and considering the
        limit $n\to+\infty$ the previous equation leads to
	\begin{align}\label{Eq: D_beta g=f solution}
            g(x,\theta)
            =
            \int_{-\infty}^0f(xe^v,\theta+v)e^{\beta v}\mathrm{d}v\,.
	\end{align}
	On the other hand, Equation~\eqref{Eq: D_beta g=f solution}
        defines a smooth function
        $g\in C^\infty(\mathbb{R}\times\mathbb{T})$ which satisfies
        $D_\beta g=f$ ---notice that the $\theta$-periodicity
        condition on $g$ is ensured by the one on $f$ together with
        the fact that the above integral is uniformly bounded by
        $\|f\|_\infty\int_{-\infty}^0 e^{\beta u}\mathrm{d}u$.  We
        shall now prove that in fact
        $g\in C^\infty_{\mathrm{c}}(\mathbb{R}\times\mathbb{T})$.  If
        $\supp (f) \subseteq (-\infty, x_0] \times \mathbb{T}$,
        $x_0>0$, then for all $x>x_0$
        \begin{equation*}
            g(x,\theta)
            =
            \int_{-\infty}^0
            f(xe^u,\theta+u)e^{\beta u}
            \mathrm{d}u
            =
            \int_{-\infty}^\infty
            f(xe^u,\theta+u)e^{\beta u}
            \mathrm{d}u
            =
            { }_x\psi_\theta^\beta(f)
            =
            0,
        \end{equation*}
        since the integration for positive $u$ will not contribute.
        Conversely, for $x_0 < 0$ such that
        $\supp(f) \subseteq [x_0, \infty) \times \mathbb{T}$ we can
        again extend the integration from $(-\infty, 0]$ to all of
        $\mathbb{R}$ by the same argument, proving that
        $g(x,\theta)=0$ for all $x<x_0$.  Since $f$ has compact
        support, we find a large enough $x > 0$ with
        $\supp(f) \subseteq [-x, x] \times \mathbb{T}$ and then the
        function $g$ has support in this compact subset as well.
    \end{description}
\end{proof}
\begin{remark}
    Notice that, for all $s\in\mathbb{R}$, it holds
    \begin{align*}
        _{xe^s}\psi_{\theta+s}^\beta(f)
        =
        e^{-\beta s}\,_x\psi_\theta^\beta(f)\,.
    \end{align*}
    This entails that condition \ref{item:PsixThetaNull} of Lemma~\ref{Lem: range
      of Dbeta} is equivalent to
    \begin{align*}
        _{\pm 1}\psi_\theta^\beta(f)=0\,,
        \qquad
        \forall \theta\in[0,2\pi)\,.
    \end{align*}
\end{remark}

We now have the following complete characterization for $\KMS([A\partial_\theta+B\partial_y],\beta)$.
This example is particularly interesting because the convex cone of
KMS functionals is non-trivial for several dynamics.  First of all
there exists a unique (up to a prefactor) Poisson trace, induced by
the unique proper leaf of $M$.  The existence of such a leaf makes
also possible to have a unique (up to a prefactor) KMS functional
associated with $[\partial_\theta]$.  The latter provides an example
of a convex cone of KMS functionals which is non-trivial though it
does not contain regular functionals as per Definition~\ref{Def:
  regular classical KMS functional}.

KMS functionals associated with multiples of the modular class
$[Y_\mu]=[\partial_y]$ lead to more interesting results.  In fact
$\KMS([B\partial_y],\beta)$ is non-empty for all $B>0$ and
$\beta>0$.  While for $\beta=1$ this follows from Remark \ref{Rmk:
  properties of the modular class}, all other values of $\beta$ are
not for free and the rather large class of functionals found is
somehow surprising.  In fact, $\KMS([\partial_y],\beta)$ is ``generated'' by KMS
functionals supported on non-compact leaves of $M$, where $Y_\Pi$ is
Hamiltonian.  Conversely, for $B<0$ the convex cone
$\KMS([B\partial_y],\beta)$ reduces to $\{0\}$ for all
$\beta>0$.  This is interesting, because it shows that the restriction
of $-Y_\Pi$ on non-compact leaves does not lead to a suitable
Hamiltonian, providing indirect information on how badly those leaves
embed in $M$.  Overall $\KMS([B\partial_y],\beta)$ shows a phase transition in the
parameter $B\in\mathbb{R}$.  In fact $\KMS([B\partial_y],\beta)$ is isomorphic to
a measure space for $B>0$, a singleton for $B=0$ and $\{0\}$ for
$B<0$.
\begin{proposition}
    \label{Prop: spiral KMS functional convex set}%
    Let $(M,\Pi)$ be the Poisson manifold $(M,\Pi)$ defined by
    Equation~\eqref{Eq: spiral regular codim 1 Poisson manifold} and
    let $[X]=[A\partial_\theta+B\partial_y]\in H_\Pi^1(M)$ where
    $A,B\in\mathbb{R}$.  Then
    $\KMS([A\partial_\theta+B\partial_y],\beta)$
    fits in one of the following cases:
    \begin{center}
        \begin{tabular}{c|c|c|c}
            & $B<0$ & $B=0$ & $B>0$ \\
            \hline
            $A\neq 0$&\multirow{2}{*}{$\{0\}$}&\multirow{2}{*}{$[0,+\infty)$}&$\{0\}$
            \\\cline{1-1}\cline{4-4}
            $A= 0$&&&$\mathcal{M}_+(\mathbb{T})^2$
        \end{tabular}
    \end{center}
    where $\mathcal{M}_+(\mathbb{T})$ denotes the set of positive
    Borel measures on $\mathbb{T}$.  In particular:
    \begin{enumerate}
    \item \label{item:KMSANullBeta} The unique (up to multiplicative
        constant) element $\varphi\in\KMS([A\partial_\theta],\beta)$ is given by
        \begin{align*}
            \varphi(f)
            :=
            \int_0^{2\pi}\int_{\mathbb{R}}f(0,\theta,y)e^{-A\beta y}\mathrm{d}\theta\mathrm{d}y\,.
        \end{align*}
    \item \label{item:KMSPiTraceUnique} The unique (up to
        multiplicative constant) $\Pi$-trace $\varphi\in\KMS(M,\Pi)$
        is given by
        \begin{align*}
            \varphi(f)
            :=
            \int_0^{2\pi}\int_{\mathbb{R}}f(0,\theta,y)\mathrm{d}\theta\mathrm{d}y\,.
        \end{align*}
    \item \label{item:KMSNullOneBeta} For all
        $\varphi\in\KMS([\partial_y],\beta)$ there exists positive measures
        $\nu_\varphi^\pm\in \mathcal{M}_+(\mathbb{T})$ such that
        \begin{align}
            \label{Eq: spiral B=1 KMS}
            \varphi(f)
            =
            \sum_\pm\int_0^{2\pi}\,_1\varphi_\theta^\beta(f)\mathrm{d}\nu_\varphi^\pm(\theta)\,,
        \end{align}
        where for all $x\neq 0$ and $\theta\in[0,2\pi)$,
        $_x\varphi_\theta^\beta$ is the KMS functional defined by
        \begin{align}
            \label{Eq: spiral B=1 extremal KMS}
            _x\varphi_\theta^\beta(f)
            :=
            \int_{\mathbb{R}^2}f(xe^u,\theta+u,y)e^{\beta u}\mathrm{d}u\mathrm{d}y\,.
        \end{align}
        It follows that $\varphi\in\KMS([\partial_y],\beta)$ is extremal if and
        only if $\varphi=\,_1\varphi_\theta^\beta$ for
        $\theta\in[0,2\pi)$.
    \end{enumerate}
\end{proposition}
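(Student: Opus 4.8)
The plan is to work through the $3\times 2$ table column by column. By Lemma~\ref{Lem: stability of KMS} the cone depends only on $[X]$, so I fix the representative $X=A\partial_\theta+B\partial_y$ afforded by Lemma~\ref{Lem: H1 spiral}; for $B>0$ the identity $\KMS(B\partial_y,\beta)=\KMS(\partial_y,\beta B)$ (from $\KMS(Y,\beta)=\KMS(\beta Y,1)$) further reduces the last column to $X=\partial_y$. I shall use repeatedly that a positive $\varphi$ is a distribution of order $0$ (Remark~\ref{Rmk: on definition of KMS functional}), so once $\varphi$ is seen to be concentrated on a symplectic leaf $L$ it factors through restriction to $L$ as a positive $(X|_L,\beta)$-KMS functional, which on a connected symplectic manifold is pinned down by Remark~\ref{Rmk: KMS for symplectic manifolds}; note also that on $L_0=\{x=0\}\cong\mathbb{T}\times\mathbb{R}$ one has $\Pi|_{L_0}=\partial_\theta\wedge\partial_y$, hence $X_y|_{L_0}=\partial_\theta$. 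For the row $A\neq 0$: at a point with $x_0\neq 0$ the leaf $L_{x_0}$ is spanned by $x_0\partial_x+\partial_\theta$ and $\partial_y$, so it contains no vector field without a $\partial_x$-component unless the $(x_0\partial_x+\partial_\theta)$-coefficient vanishes; hence $A\partial_\theta+B\partial_y$ is transverse to every leaf $L_{x_0}$ with $x_0\neq 0$, and Lemma~\ref{Lem: non-existence criterion for regular codim 1 Poisson manifolds} forces $\supp\varphi\subseteq L_0$. On $L_0$ we get $X|_{L_0}=X_{Ay}+B\partial_y$, of Poisson class $[B\partial_y]$, which is trivial iff $B=0$; thus Remark~\ref{Rmk: KMS for symplectic manifolds} gives $\{0\}$ for $B\neq 0$ and, for $B=0$, the Gibbs density $e^{-\beta A y}\mathrm{d}\theta\wedge\mathrm{d}y$ on $L_0$, that is, $\KMS([A\partial_\theta],\beta)\simeq[0,\infty)$ with the functional of item~\ref{item:KMSANullBeta}.

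For the column $B<0$ (all $A$): applying the identity $\varphi(X_g(f))=-\beta\varphi(X(g)f)$ from the proof of Proposition~\ref{Prop: properties of KMS functionals}\,\ref{Item: global classical KMS condition} with $g=y\in C^\infty(M)$, so that $X_y=x\partial_x+\partial_\theta$ and $X(y)=B$, and integrating along the complete flow $\Psi_t(x,\theta,y)=(xe^t,\theta+t,y)$, gives $\varphi(f)=e^{\beta B t}\varphi(f\circ\Psi_t)$ for all $t$. For $t\ge 0$ the support of $f\circ\Psi_t$ stays inside a fixed compact set (it only contracts in the $x$-direction), so $|\varphi(f\circ\Psi_t)|$ is bounded by the order-$0$ constant of that set while $e^{\beta Bt}\to 0$; hence $\varphi=0$.

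For the column $B>0$ with $X=\partial_y$: write $M_0=M\setminus L_0$ with components $M_0^\pm=\{\pm x>0\}$. The substitution $r=\log|x|$, $c=(\theta-r)\bmod 2\pi$ identifies $(M_0^\pm,\Pi)$ with the product $(\mathbb{T}_c,0)\times(\mathbb{R}^2_{r,y},\partial_r\wedge\partial_y)$, with leaves $\{c\}\times\mathbb{R}^2$ on which $\partial_y=X_{-r}$ is Hamiltonian. On such a product a positive $(\partial_y,\beta)$-KMS functional is classified as in Example~\ref{Ex: KMS functionals for plane Poisson manifold}: fixing $\chi\in C^\infty_{\mathrm c}(\mathbb{R}^2)$ with $\int e^{\beta r}\chi\,\mathrm{d}r\,\mathrm{d}y=1$ and setting $\nu^\pm(\psi):=\varphi(\psi\otimes\chi)$ yields positive Borel measures $\nu^\pm\in\mathcal{M}_+(\mathbb{T})$, independent of $\chi$ because the KMS relation on the $\mathbb{R}^2$-factor forces the leafwise Gibbs form $e^{\beta r}\mathrm{d}r\,\mathrm{d}y$ (Remark~\ref{Rmk: KMS for symplectic manifolds}), and one recovers $\varphi|_{M_0}=\sum_\pm\int_0^{2\pi}{}_{\pm1}\varphi_\theta^\beta\,\mathrm{d}\nu^\pm$ with ${}_{\pm1}\varphi_\theta^\beta$ as in~\eqref{Eq: spiral B=1 extremal KMS}. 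The functional $\widetilde\varphi:=\sum_\pm\int_0^{2\pi}{}_{\pm1}\varphi_\theta^\beta\,\mathrm{d}\nu^\pm$ genuinely extends to $C^\infty_{\mathrm c}(M)$ (since $e^{\beta u}\mathrm{d}u$ is integrable on $(-\infty,\log R]$) and lies in $\KMS([\partial_y],\beta)$; as $\mu_{\widetilde\varphi}$ gives $L_0$ zero mass and agrees with $\mu_\varphi$ on $M_0$, the positive functional $\mu_\varphi-\mu_{\widetilde\varphi}$ is $(\partial_y,\beta)$-KMS and supported on $L_0$, hence vanishes because $\partial_y|_{L_0}$ is not Hamiltonian on the cylinder $L_0$. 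This proves~\eqref{Eq: spiral B=1 KMS} and the isomorphism $\KMS([\partial_y],\beta)\simeq\mathcal{M}_+(\mathbb{T})^2$; extremality follows since each ${}_{\pm1}\varphi_\theta^\beta$ is concentrated on the leaf $L_{(\pm1,\theta,0)}$, a closed embedded submanifold of $M_0$, so any summand of it in the cone is again concentrated there and is a multiple of it by Remark~\ref{Rmk: KMS for symplectic manifolds}, while a non-trivial mixture of distinct $\delta_\theta$'s in $\mathcal{M}_+(\mathbb{T})^2$ is not extremal --- this is item~\ref{item:KMSNullOneBeta}.

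Finally the Poisson traces $(A=B=0)$: the same restriction to $M_0$, now with $X=0$, gives $\varphi|_{M_0}=\sum_\pm\int_0^{2\pi}\big(\int_{\mathbb{R}^2}f(\cdot,r,y)\,\mathrm{d}r\,\mathrm{d}y\big)\,\mathrm{d}\nu^\pm$; but the leafwise volume is now unweighted and every spiral winds into $L_0$, so this functional assigns infinite mass to any compact neighbourhood of a point of $L_0$, and local finiteness of $\mu_\varphi$ forces $\nu^\pm=0$. Thus $\supp\varphi\subseteq L_0$ and $\varphi$ is a multiple of the symplectic volume $\mathrm{d}\theta\wedge\mathrm{d}y$ on $L_0$, which is item~\ref{item:KMSPiTraceUnique}, and $\KMS(M,\Pi)\simeq[0,\infty)$. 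The main obstacle is the column $B>0$: the two delicate steps are the measure-valued disintegration over the leaf space of $M_0^\pm$ --- tractable only after $r=\log|x|$ exposes the trivial product structure, so that the argument of Example~\ref{Ex: KMS functionals for plane Poisson manifold} applies --- and the verification that no extra mass hides on the central leaf $L_0$, handled by the subtraction trick together with the non-Hamiltonicity of $\partial_y|_{L_0}$; it is precisely the integrability of the Gibbs weight $e^{\beta u}$ towards $L_0$ (which needs $\beta>0$) that makes $\KMS([B\partial_y],\beta)$ jump from $\{0\}$ to $[0,\infty)$ to $\mathcal{M}_+(\mathbb{T})^2$ as $B$ crosses $0$.
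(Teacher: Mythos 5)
Your proposal is correct, and for the decisive column $B>0$ it takes a genuinely different route from the paper. The paper works globally on $M$: it proves the technical Lemma~\ref{Lem: range of Dbeta}, characterizing the range of $D_\beta=x\partial_x+\partial_\theta+\beta$ inside $C^\infty_{\mathrm{c}}(\mathbb{R}\times\mathbb{T})$ \emph{including} the critical locus $x=0$, and then decomposes an arbitrary $f\in C^\infty_{\mathrm{c}}(M)$ as $\partial_yh+D_\beta g+\sum_\pm{}_\cdot\varphi^\beta_\cdot(f)\chi_\pm\eta$, so that $\varphi\circ\partial_y=0$ and $\varphi\circ D_\beta=0$ dispatch everything at once --- in particular the absence of extra mass on $L_0$ is automatic there. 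You instead straighten $M\setminus L_0$ by $r=\log|x|$, $c=\theta-r$, reduce to the flat operator $\partial_r+\beta$ on the product $\mathbb{T}\times\mathbb{R}^2$ where the disintegration of Example~\ref{Ex: KMS functionals for plane Poisson manifold} transplants directly, and then control $L_0$ separately by the extension-and-subtraction step together with the non-Hamiltonicity of $\partial_y|_{L_0}$. Both arguments hinge on the same integrability of the Gibbs weight $e^{\beta u}$ towards $L_0$; yours is more geometric, the paper's avoids having to verify the extension. Your handling of the remaining entries also differs mildly but soundly: the flow identity $\varphi(f)=e^{\beta Bt}\varphi(f\circ\Psi_t)$ with contracting supports kills the whole column $B<0$ uniformly in $A$ (the paper splits this between the transversality lemma and the symplectic classification on $Z$), and your infinite-local-mass argument for the traces replaces the paper's dominated-convergence argument. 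Two points to tighten: (i) in the extremality discussion, the support of ${}_{\pm1}\varphi^\beta_\theta$ in $M$ is the \emph{closure} of the spiral leaf and therefore contains all of $L_0$, so ``any summand is again concentrated there'' is not immediate --- one should first restrict the summands to $M_0$, identify them with multiples of the leafwise Gibbs functional, and then reapply your subtraction step (or simply read off extremality from the already established cone isomorphism with $\mathcal{M}_+(\mathbb{T})^2$, whose extremal rays are the point masses in one slot); (ii) your reading of~\eqref{Eq: spiral B=1 KMS} with ${}_{\pm1}\varphi^\beta_\theta$ in the two summands is the one consistent with the claimed cone $\mathcal{M}_+(\mathbb{T})^2$.
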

\begin{proof}
    Let $X=A\partial_\theta+B\partial_y\in[X]$.
    \begin{description}
    \item[$\boxed{AB\neq 0\colon}$] Let $Z:=M\cap\{x=0\}$ and consider
        the regular codimension $1$ Poisson manifold
        $(M_Z:=M\setminus Z,\Pi|_{M_Z})$.  Since $A\neq 0$ it follows
        that $X|_{M_Z}$ is transverse to the symplectic foliation of
        $M_Z$.  Moreover if $\varphi\in\KMS(X,\beta)$ then
        $\varphi|_{C^\infty_{\mathrm{c}}(M_Z)}$ is a $(X|_{M_Z},\beta)$-KMS functional on $(M_Z,\Pi|_{M_Z})$ and, by Lemma~\ref{Lem: non-existence criterion for regular
          codim 1 Poisson manifolds}, we have
        $\varphi|_{C^\infty_{\mathrm{c}}(M_Z)}=0$.

	Therefore $\supp(\varphi)\subseteq Z$ for all
        $\varphi\in\KMS([A\partial_\theta+B\partial_y],\beta)$.
        This entails that $\varphi(f)=\psi(f|_Z)$ for all
        $f\in C^\infty_{\mathrm{c}}(M)$ where
        $\psi\colon C^\infty_{\mathrm{c}}(Z)\to\mathbb{C}$ is a
        positive linear functional.  Moreover,
        $\psi$ is a $(X|_Z,\beta)$-KMS functional on $(Z,\Pi|_Z)$, because $X$ is tangent to
        $Z$.  But $(Z,\Pi_Z)$ is symplectic and therefore
        Remark~\ref{Rmk: KMS for symplectic manifolds} ensures that
        $\psi\neq 0$ would imply
	\begin{align*}
            [0]=[X|_Z]=[B\partial_y|_Z]=[B\mathrm{d}\theta^\sharp|_Z]\,,
	\end{align*}
	a contradiction as $B\neq 0$.  It follows that $\varphi=0$.

    \item[$\boxed{A\neq B=0\colon}$] Proceeding as in the previous
        case we have that $\varphi\in\KMS([A\partial_\theta],\beta)$ implies that
        $\supp(\varphi)\subseteq Z$.  Therefore $\varphi(f)=\psi(f_Z)$
        for all $f\in C^\infty_{\mathrm{c}}(M)$, where
        $\psi$ is a $(A\partial_\theta,\beta)$-KMS functional on $(Z,\Pi|_Z)$.  Remark
        \ref{Rmk: KMS for symplectic manifolds} implies that there
        exists $c>0$ such that
	\begin{align*}
            \varphi(f)
            =
            c \int_0^{2\pi} \int_{\mathbb{R}}
            f(0,\theta,y)e^{-\beta Ay}
            \mathrm{d}\theta\mathrm{d}y\,.
	\end{align*}
    \item[$\boxed{A=0\,,\,B\leq 0\colon}$] Let
        $\varphi\in\KMS([B\partial_y],\beta)$ where $B\leq 0$.
	For all $f\in C^\infty_{\mathrm{c}}(M)$ we apply
        Equation~\eqref{Eq: global classical KMS condition} for
        $g=y1_f$ where $1_f\in C^\infty_{\mathrm{c}}(M)$ is such that
        $1_f|_{\supp(f)}=1$.  It follows that, denoting by
        $\Psi_t=\Psi^{x\partial_x+\partial_\theta}_t$ the flow
        associated with the vector field
        $X_y=x\partial_x+\partial_\theta$, it holds
	\begin{align*}
            \varphi(f)
            =
            e^{\beta Bt}\varphi(\Psi_t^*f)\,.
	\end{align*}
	The above argument shows that $\varphi$ has to be supported on
        $Z$: indeed let $f\in C^\infty_{\mathrm{c}}(M_Z)$.  Then, for
        all $t_0\in\mathbb{R}$ there exists $K_{t_0}$ such that
        $\supp(\Psi^*_tf)\subseteq K_{t_0}$ for all $t\geq t_0$.
        Since $\varphi$ is positive we have the bound
        $|\varphi(\Psi^*_tf)|\leq C_{t_0}\|f\|_\infty$ for all
        $t\geq t_0$ with $C_{t_0}>0$.  It then follows that
	\begin{align*}
            \varphi(f)
            =
            \lim_{t\to+\infty}e^{B\beta t}\varphi(\Psi^*_tf)
            =
            0\,,
	\end{align*}
	where we used that $B\leq 0$.  Notice that for $B=0$ we have
        $\lim\limits_{t\to+\infty}\varphi(\Psi_t^*f)=0$ by dominated
        convergence: Indeed $\varphi$ is positive, while
        $\lim\limits_{t\to+\infty}\Psi_t^*f=0$ pointwise and
        $|\Psi_t^*f|\leq \|f\|_\infty \varrho_{K_{t_0}}$ for
        $t\geq t_0$ ---here $\varrho_{K_{t_0}}$ denotes the
        characteristic function over $K_{t_0}$.

	As $Z$ is symplectic Remark~\ref{Rmk: KMS for symplectic
          manifolds} entails that $\varphi\neq0$ would imply
        $0=[B\partial_y]=[B\mathrm{d}\theta^\sharp]$.  It follows that
        $B<0$ implies $\varphi=0$, moreover, if $B=0$ then there is a
        constant $c>0$ such that
	\begin{align*}
            \varphi(f)
            =
            c\int_0^{2\pi}\int_{\mathbb{R}}
            f(0,\theta,y)
            \mathrm{d}y\mathrm{d}\theta\,.
	\end{align*}
    \item[$\boxed{A=0\,,\,B>0\colon}$] As
        $\KMS([B\partial_y],\beta)=\KMS([\partial_y],B\beta)$ we shall restrict to
        $B=1$.  We first observe that the functionals
        $_x\varphi_\theta^\beta$ defined in Equation~\eqref{Eq: spiral
          B=1 extremal KMS} are $(\partial_y,\beta)$-KMS functionals:
        in fact $_x\varphi_\theta^\beta$ is the pull-back on $M$ of
        the unique (up to multiplicative constant) KMS functional on
        $L_x$.  Notice that, recalling Equation~\eqref{Eq: xpsitheta
          functional}, we have
	\begin{align*}
            _x\varphi_\theta^\beta(f)
            =
            \int_{\mathbb{R}}\,_x\psi_\theta^\beta(f(\cdot,\cdot,y))\mathrm{d}y
            =
            [I_y\circ\,_x\psi_\theta^\beta](f)\,,
	\end{align*}
	where $(I_y
        f)(x,\theta):=\int_{\mathbb{R}}f(x,\theta,y)\mathrm{d}y$.  Let
        $\varphi\in\KMS([\partial_y],\beta)$ and let
        $\eta,\chi_\pm\in C^\infty_{\mathrm{c}}(\mathbb{R})$ be such
        that
	\begin{align}\label{Eq: chi, eta conditions}
            \eta,\chi_\pm\geq0\,,
            \qquad
            \int_{\mathbb{R}}\eta(y)\mathrm{d}y
            =1\,,
            \qquad
            \int_0^{\pm\infty}\chi_\pm(z)\frac{\mathrm{d}z}{z}
            =1\,,
            \qquad
            \supp(\chi_\pm)\subseteq(0,\pm\infty)\,.
	\end{align}
	For all $f\in
        C^\infty_{\mathrm{c}}(\mathbb{R}\times\mathbb{T})$ we set
	\begin{align*}
            f_\chi \in
            C^\infty_{\mathrm{c}}(\mathbb{R}\times\mathbb{T})\,,
            \qquad
            f_\chi(x,\theta)
            :=
            f(x,\theta)
            -
            \sum_\pm\,_x\psi_\theta^\beta(f)\chi_\pm(x)\,.
	\end{align*}
	Notice that the well-definedness of $f_\chi$ is due to the
        support properties of $\chi$, in particular
        $0\notin\supp(\chi)$.  A direct computation shows that, for
        all $x\neq0$ and $\theta\in[0,2\pi)$,
	\begin{align*}
            _x\psi_\theta^\beta(f_\chi)
            &=
            \,_x\psi_\theta^\beta(f)
            -
            \sum_\pm\int_{\mathbb{R}}
            \,_{xe^u}\psi_{\theta+u}(f)\chi_\pm(xe^u)e^{\beta u}
            \mathrm{d}u
            \\
            &=
            \,_x\psi_\theta^\beta(f)
            \left[
		1 -
                \sum_\pm\int_0^{\operatorname{sgn}(x)\infty}\chi_\pm(z)
                \frac{\mathrm{d}z}{z}
            \right]
            \\
            &=
            0\,,
	\end{align*}
	where we used that
        $_{xe^u}\psi_{\theta+u}(f)e^{\beta
          u}=\,_x\psi_\theta^\beta(f)$.  Lemma~\ref{Lem: range of
          Dbeta} entails that $f_\chi=D_\beta g$ for
        $g\in C^\infty_{\mathrm{c}}(\mathbb{R}\times\mathbb{T})$.

	Moreover a direct computation shows that, for all
        $f\in C^\infty_{\mathrm{c}}(M)$,
	\begin{align*}
            \varphi(xf_x+f_\theta)
            =
            \varphi(\{f,y1_f\})
            =
            -\beta\varphi(f)
            \implies
            \varphi\circ D_\beta
            =
            0\,.
	\end{align*}
	With these preliminaries we now come to the characterization
        of $\KMS([\partial_y],\beta)$.  At first we shall define two positive
        Borel measures $\nu_\varphi^\pm\in \mathcal{M}_+(\mathbb{T})$ associated
        to $\varphi$.  For that let
        $F\colon\mathbb{R}\setminus\{0\}\times\mathbb{T}\to\mathbb{T}$
        be defined by $F(x,e^{i\theta}):=e^{i(\theta-\log |x|)}$.
        For all open subsets $U\subseteq\mathbb{T}$ we then set
	\begin{align*}
            \nu_\varphi^\pm(U)
            :=
            \varphi(\varrho_{F^{-1}(U)}\eta x^{-\beta}\chi_\pm)\,,
	\end{align*}
	where $\varrho_{F^{-1}(U)}$ denotes the characteristic function over
        $F^{-1}(U)$ while we regarded
        	$\varphi$ as a positive measure on $M$ -- \textit{cf.}
        	Remark \ref{Rmk: on definition of KMS functional}.  Notice that
        $x^{-\beta}\chi_\pm\in C^\infty_{\mathrm{c}}(\mathbb{R})$ on
        account of the hypothesis on $\chi_\pm$.

	The last equation defines two positive Borel measures
        $\nu_\varphi^\pm\in \mathcal{M}_+(\mathbb{T})$ which do not depend on
        the choice of neither $\eta$ nor $\chi_\pm$.  Indeed, for any
        other $\hat{\eta}$ and $\hat{\chi}_\pm$ which satisfies
        Equation~\eqref{Eq: chi, eta conditions} we have
        $\eta-\hat{\eta}=h_y$ and $\chi_\pm-\hat{\chi}_\pm=x g^\pm_x$
        for $h,g\in C^\infty_{\mathrm{c}}(\mathbb{R})$.  We therefore
        have
	\begin{align*}
            \varphi(\varrho_{F^{-1}(U)}(\eta-\hat{\eta}) x^{-\beta}\chi_\pm)
            &=
            (\varphi\circ\partial_y)(\varrho_{F^{-1}(U)}h
            x^{-\beta}\chi_\pm)
            =0\,,
            \\
            \varphi(\varrho_{F^{-1}(U)}\eta x^{-\beta}(\chi_\pm-\hat{\chi}_\pm))
            &=
            (\varphi\circ D_\beta)(\varrho_{F^{-1}(U)}\eta x^{-\beta}g^\pm)
            =0\,,
	\end{align*}
	where we used $x^{-\beta+1}g_x^\pm=D_\beta(x^{-\beta}g^\pm)$
        as well as the fact that $D_\beta f\circ F^{-1}=0$ for all
        $f\in C^\infty(\mathbb{R}\times\mathbb{T})$.

	We now prove Equation~\eqref{Eq: spiral B=1 KMS}.  For any
        $f\in C^\infty_{\mathrm{c}}(M)$ we have
	\begin{align*}
            f
            =
            \underbrace{f-I_y(f)\eta}_{=h_y}
            +
            \underbrace{[I_y(f)\eta]_\chi}_{=D_\beta g}
            +
            \sum_\pm\,_{\cdot}\varphi_{\cdot}^\beta(f)\chi_\pm\eta\,,
	\end{align*}
	where $h,g\in C^\infty_{\mathrm{c}}(M)$.  Since
        $_x\varphi_\theta^\beta(f)=x^{-\beta}\,_1\varphi_{\theta-\log
          x}^\beta(f)$ we
        find
	\begin{align*}
            \varphi(f)
            =
            \sum_\pm
            \varphi(\,_\cdot\varphi_\cdot^\beta(f)\chi_\pm\eta)
            =
            \sum_\pm\int_0^{2\pi}
            \,_1\varphi_\theta^\beta(f)\mathrm{d}\nu_\varphi^\pm(\theta)\,.
	\end{align*}
    \end{description}
\end{proof}
\begin{example}[Symmetry breaking]
    As an example of Equation~\eqref{Eq: spiral B=1 KMS} we consider
    the case $\mathrm{d}\nu_\varphi^+(\theta)=\mathrm{d}\theta$,
    $\nu_\varphi^-=0$.  This leads to
    \begin{align*}
        \varphi(f)
        =
        \int_0^{+\infty}\int_{0}^{2\pi}\int_{\mathbb{R}}
        f(x,\theta,y)x^{\beta-1}
        \mathrm{d}x\mathrm{d}\theta\mathrm{d}y\,.
    \end{align*}
    Notice that the KMS functional obtained with such a choice is
    invariant under the flow of $\partial_\theta$.  A direct
    inspection shows that $(\partial_y,\beta)$-KMS functionals which
    are $\partial_\theta$-invariant correspond to the choice
    $\mathrm{d}\nu_\varphi^\pm(\theta)=c_\pm\mathrm{d}\theta$ with
    $c_\pm\in[0,+\infty)$.  On account of Theorem~\ref{Prop: spiral
      KMS functional convex set} none of these functionals is extremal
    in $\KMS([\partial_y],\beta)$.  Stated differently, any
    $\varphi\in\KMS([\partial_y],\beta)$ is either extremal or
    $\partial_\theta$-invariant but not both.  In physics jargon such
    a phenomenon is known with the name of symmetry breaking.
\end{example}

\paragraph{Acknowledgements}
It is a pleasure to thank R. Brunetti, G. Canepa, C. Dappiaggi,
V. Moretti, S. Murro and K.-H. Neeb for helpful discussions on the
subject.  N.D. is supported by a Postdoctoral Fellowship of the
Alexander von Humboldt Foundation (Germany).


\begin{thebibliography}{ll}

	\bibitem{Aizemann-Gallavotti-Goldstein-Lebowitz-76}
		Aizenman, M., Gallavotti, G., Goldstein, S., Lebowitz, J.L.,
		{\textit{Stability and equilibrium states of infinite classical systems}},
		Commun.Math. Phys. 48, 1-14
		\href{https://doi.org/10.1007/BF01609407}{(1976)}.

	\bibitem{Aizenman-Goldstein-Gruber-Lebowitz-Martin-77}
		Aizenman M., Goldstein S., Gruber C., Lebowitz J.L., Martin P.
		{\textit{On the equivalence between KMS-states and equilibrium states for classical systems}},
		Commun. math. Phys. 53, 209—220
		\href{https://doi.org/10.1007/BF01609847}{(1977)}.

	\bibitem{Araki-73}
		Araki, H.,
		\textit{Relative Hamiltonian for faithful normal states of a von Neumann algebra},
		Publ. Res. Inst. Math. Sci. 9(1), 165-209 (1973).

	\bibitem{Basart-Flato-Lichnerowicz-Sternheimer-84}
		Basart, H., Flato, M., Lichnerowicz, A., Sternheimer D.,
		{\textit{Deformation theory applied to quantization and statistical mechanics}},
		Lett Math Phys 8, 483-494
		\href{https://doi.org/10.1007/BF00400978}{(1984)}.

	\bibitem{Basart-Lichnewicz-85}
		Basart, H., Lichnerowicz, A.,
		{\textit{Conformal symplectic geometry, deformations, rigidity and geometrical (KMS) conditions}},
		Lett Math Phys 10, 167-177
		\href{https://doi.org/10.1007/BF00398154}{(1985)}.

	\bibitem{Bordermann-Hartmann-Romer-Waldmann-98}
		Bordemann M., R\"omer H., Waldmann S.,
		{\textit{A Remark on Formal KMS States in Deformation Quantization}},
		Lett.Math.Phys. 45 49-61.
		\href{https://doi.org/10.1023/A:1007481019610}{(1998)}.
	\bibitem{Braddell-Kiesenhofer-Miranda-18}
		Braddell R., Kiesenhofer A., Miranda E.,
		{\textit{A $b$-symplectic slice theorem}},
		\href{https://arxiv.org/abs/1811.11894}{arXiv:1811.11894 [math.SG]}.

	\bibitem{Braddell-Kiesenhofer-Miranda-20}
		Braddell R., Kiesenhofer A., Miranda E.,
		{\textit{$b$-Structures on Lie groups and Poisson reduction}}
		Journal of Geometry and Physics,
		\href{https://doi.org/10.1016/j.geomphys.2022.104471}{(2022)}.

	\bibitem{Bratteli-Robinson-87}
		O.~Bratteli,  D.W.~Robinson,
		\textit{Operator algebras and quantum statistical mechanics 1},
		Springer, Berlin (1987).

	\bibitem{Bratteli-Robinson-97}
		O.~Bratteli, D.W.~Robinson,
		\textit{Operator algebras and quantum statistical mechanics 2},
		Springer, Berlin (1997).

	\bibitem{CappellettiMontano-DeNicola-Yudin-13}
		Cappelletti-Montano B., De Nicola A., Yudin I.
		{\textit{A survey on cosymplectic geometry}},
		Rev. Math. Phys. 25, 1343002
		\href{https://doi.org/10.1142/S0129055X13430022}{(2013)}.

	\bibitem{Cavalcanti-17}
		Cavalcanti G. R.,
		{\textit{Examples and counter-examples of log-symplectic manifolds}},
		Journal of Topology,
		Volume 10, Issue 1, Pages 1-21
		\href{https://doi.org/10.1112/topo.12000}{(2017)}.

	\bibitem{Drago-Faldino-Pinamonti-18}
		Drago N., Faldino F., Pinamonti N.,
		{\textit{On the stability of KMS states in perturbative algebraic quantum field theories}}
		Commun. Math. Phys. 357: 267
		\href{https://doi.org/10.1007/s00220-017-2975-x}{(2018)}.

	\bibitem{Drago-Gerard-17}
		Drago N., Gérard C.,
		{\textit{On the adiabatic limit of Hadamard states}},
		Lett Math Phys 107: 1409.
		\href{https://doi.org/10.1007/s11005-017-0947-x}{(2017)}

	\bibitem{Drago-Hack-Pinamonti-17}
		Drago N., Hack T.-P., Pinamonti N.,
		{\textit{The generalized principle of Perturbative Agreement and the thermal mass}},
		Ann. Henri Poincaré 18: 807
		\href{https://doi.org/10.1007/s00023-016-0521-6}{(2017)}.

	\bibitem{Fredenhagen-Lindner-14}
		Fredenhagen, K., Lindner, F.,
		{\textit{Construction of KMS states in perturbative QFT and renormalized Hamiltonian dynamics}},
		Commun. Math. Phys. 332, 895
		\href{https://doi.org/10.1007/s00220-014-2141-7}{(2014)}.

	\bibitem{Gallavotti-Pulvirenti-75}
		Gallavotti G., Pulvirenti M.,
		{\textit{On the classical KMS boundary condition}},
		Nuov Cim B 28, 274-286
		\href{https://doi.org/10.1007/BF02722820}{(1975)}.

	\bibitem{Gallavotti-Pulvirenti-76}
		Gallavotti G., Pulvirenti M.,
		{\textit{Classical KMS Condition and Tomita-Takesaki Theory}},
		Comm. Math. Phys. Volume 46, Number 1, 1-9
		\href{https://doi.org/10.1007/BF01610495}{(1976)}.

	\bibitem{Guillemin-Miranda-Pires-11}
		Guillemin V., Miranda E., Pires A.R.,
		{\textit{Codimension one symplectic foliations and regular Poisson structures}},
		Bull. Braz. Math. Soc. (N.S.) 42 (2011), no. 4, 607-623
		\href{https://doi.org/10.1007/s00574-011-0031-6}{(2011)}.

	\bibitem{Guillemin-Miranda-Pires-14}
		Guillemin V., Miranda E., Pires A.R.,
		{\textit{Symplectic and Poisson geometry in $b$-manifolds}},
		Advances in Mathematics 264 (2014) 864-896
		\href{https://doi.org/10.1016/j.aim.2014.07.032}{(2017)}.

	\bibitem{Guillemin-Miranda-Pires-Scott-17}
		Guillemin V., Miranda E., Pires A.R., Scott G.,
		{\textit{Convexity for Hamiltonian torus actions on $b$-symplectic manifolds}},
		Math. Res. Lett. 24 (2017), no. 2, 363-377
		\href{https://dx.doi.org/10.4310/MRL.2017.v24.n2.a5}{(2017)}.

	\bibitem{Guillemin-Miranda-Weitsman-19}
		Guillemin V., Miranda E., J. Weitsman,
		{\textit{Desingularizing $b^m$-symplectic structures}},
		International Mathematics Research Notices, Volume 2019, Issue 10, May 2019, Pages 2981-2998
		\href{https://doi.org/10.1093/imrn/rnx126}{(2019)}.

	\bibitem{Haag-Kastler-Trych-Pohlmeyer-74}
		Haag, R., Kastler, D., Trych-Pohlmeyer, E.B.,
		{\textit{Stability and equilibrium states}},
		Commun. Math. Phys. 38, 173-193
		\href{https://doi.org/10.1007/BF01651541}{(1974)}.

	\bibitem{Haag-Hugenholtz-Winnik-67}
		Haag, R., Hugenholtz, N.M., Winnink, M.,
		{\textit{On the equilibrium states in quantum statistical mechanics}},
		Commun. Math. Phys. 5, 215-236
		\href{https://doi.org/10.1007/BF01646342}{(1967)}.

	\bibitem{Katz-67}
		Katz A.,
		{\textit{Infinite systems of classical particles}},
		Journal of Mathematical Physics 8, 2451
		\href{https://doi.org/10.1063/1.1705179}{(1967)}.

	\bibitem{Klee-57}
		Klee V.L.,
		{\textit{Extremal structure of convex sets}},
		Arch. Math 8, 234-240
		\href{https://doi.org/10.1007/BF01899998}{(1957)}.

	\bibitem{Kubo-57}
		Kubo R.,
		{\textit{Statistical-Mechanical Theory of Irreversible Processes. I. General Theory and Simple Applications to Magnetic and Conduction Problems}},
		J. Phys. Soc. Jpn. 12, pp. 570-586
		\href{https://doi.org/10.1143/JPSJ.12.570}{(1957)}.

	\bibitem{LaurentGengoux-Pichereau-Vanhaeckie-13}
		Laurent-Gengoux C., Pichereau A., Vanhaecke P.,
		{\textit{Poisson Structures}},
		Springer-Verlag Berlin Heidelberg
		\href{https://doi.org/10.1007/978-3-642-31090-4}{(2013)}.

	\bibitem{Marcut-Torres-14a}
		Marcut I., Osorno Torres B.,
		{\textit{On cohomological obstructions for the existence of log-symplectic structures}},
		Journal of Symplectic Geometry
		Volume 12, Number 4, 863–866,
		\href{https://dx.doi.org/10.4310/JSG.2014.v12.n4.a6}{(2014)}.

	\bibitem{Marcut-Torres-14b}
		Marcut I., Osorno Torres B.,
		{\textit{Deformations of log-symplectic structures}},
		in Journal of the London Mathematical Society 90(1)
		\href{https://doi.org/10.1112/jlms/jdu023}{(2014)}.

            \bibitem{MS59}
                Martin, P. C., Schwinger, J.,
                \textit{Theory of many-particle systems, I.},
                Phys.~Rev. 115, 1342--1373 (1959).


	\bibitem{Melrose-93}
		Melrose R.,
		\textit{The Atiyah-Patodi-Singer Index Theorem},
		volume 4 of Research Notes in Mathematics. A K Peters Ltd., Wellesley, MA,
		1993.

	\bibitem{Miranda-Oms-Salas-20}
		Miranda E., Oms C., Peralta-Salas D.,
		{\textit{On the singular Weinstein conjecture and the existence of escape orbits for $b$-Beltrami fields}},
		Communications in Contemporary Mathematics  2150076
		\href{https://doi.org/10.1142/S0219199721500760}{(2021)}.

	\bibitem{Nest-Tsygan-96}
	Nest R., Tsygan B.,
	{\textit{Formal deformations of symplectic manifolds with boundary}},
	Journal f\"ur die reine und angewandte Mathematik, Volume 1996: Issue 481
	\href{https://doi.org/10.1515/crll.1996.481.27}{(1996)}.

	\bibitem{Scott-16}
		Scott G.,
		{\textit{The Geometry of $b^k$ Manifolds}},
		Journal of Symplectic Geometry
		Volume 14, Number 1, 71–95,
		\href{https://dx.doi.org/10.4310/JSG.2016.v14.n1.a3}{(2016)}.

	\bibitem{Weinstein-97}
		Weinstein A.,
		\textit{The modular automorphism group of a Poisson manifold},
		Journal of Geometry and Physics, Volume 23, Issue 3-4, p. 379-394.
		\href{https://doi.org/10.1016/S0393-0440(97)80011-3}{(1997)}.



\end{thebibliography}
\end{document}